\newcommand*\samethanks[1][\value{footnote}]{\footnotemark[#1]}
\g@addto@macro\bfseries{\boldmath}
\g@addto@macro\mdseries{\unboldmath}
\g@addto@macro\normalfont{\unboldmath}
\g@addto@macro\rmfamily{\unboldmath}
\g@addto@macro\upshape{\unboldmath}
\renewcommand*{\multicitedelim}{\addcomma\space}
\newcommand{\myhref}[1]{%
  \iffieldundef{doi}
    {\iffieldundef{url}
       {#1}
       {\href{\strfield{url}}{#1}}}
    {\href{http://dx.doi.org/\strfield{doi}}{#1}}%
}
    \newlength{\temp@x}%
    \newlength{\temp@y}%
    \newlength{\temp@w}%
    \newlength{\temp@h}%
    \def\my@coords#1#2#3#4{%
      \setlength{\temp@x}{#1}%
      \setlength{\temp@y}{#2}%
      \setlength{\temp@w}{#3}%
      \setlength{\temp@h}{#4}%
      \adjustlengths{}%
      \my@pdfliteral{\strip@pt\temp@x\space\strip@pt\temp@y\space\strip@pt\temp@w\space\strip@pt\temp@h\space re}}%
      \def\my@pdfliteral#1{\pdfliteral page{#1}}
      \def\adjustlengths{}%
      \def\my@pdfliteral #1{}
      \def\adjustlengths{\setlength{\temp@h}{-\temp@h}\addtolength{\temp@y}{1in}\addtolength{\temp@x}{-1in}}%
    \def\Hy@colorlink#1{%
      \begingroup
        \ifHy@ocgcolorlinks
          \def\Hy@ocgcolor{#1}%
          \my@pdfliteral{q}%
          \my@pdfliteral{7 Tr}
        \else
          \HyColor@UseColor#1%
        \fi
    }%
    \def\Hy@endcolorlink{%
      \ifHy@ocgcolorlinks%
        \my@pdfliteral{/OC/OCPrint BDC}%
        \my@coords{0pt}{0pt}{\pdfpagewidth}{\pdfpageheight}%
        \my@pdfliteral{F}
        %
        \my@pdfliteral{EMC/OC/OCView BDC}%
        \begingroup%
          \expandafter\HyColor@UseColor\Hy@ocgcolor%
          \my@coords{0pt}{0pt}{\pdfpagewidth}{\pdfpageheight}%
          \my@pdfliteral{F}
        \endgroup%
        \my@pdfliteral{EMC}%
        \my@pdfliteral{0 Tr}
        \my@pdfliteral{Q}%
      \fi
      \endgroup
    }%
\newcommand{\set}[1]{\{#1\}}
\newcommand{\lu}{\textup{(}}
\newcommand{\ru}{\textup{)}\xspace}
\newcommand{\enum}[1]{\lu #1\ru}
\newcommand{\tscc}{\text{tSCC}\xspace}
\newcommand{\bscc}{\text{bSCC}\xspace}
\newcommand{\scc}{\text{SCC}\xspace}
\newcommand{\vscc}{\text{2vSCC}\xspace}
\newcommand{\escc}{\text{2eSCC}\xspace}
\newcommand{\kscc}{\text{kSCC}\xspace}
\newcommand{\kvscc}{\text{kvSCC}\xspace}
\newcommand{\kescc}{\text{keSCC}\xspace}
\newcommand{\nt}[3]{\ensuremath{\textrm{$k$-almostTSCC}(#1, #2, #3)}\xspace}
\newcommand{\revG}[1]{\mathit{Rev({#1})}\xspace}
\newcommand{\rev}[1]{{#1}^R\xspace}
\newcommand{\G}{\mathcal{G}\xspace}
\newcommand{\bad}[2]{B_{{#1},{#2}}\xspace}
\newcommand{\good}[2]{A_{{#1},{#2}}\xspace}
\newcommand{\subG}[2]{#1_{#2}\xspace}
\newcommand{\subV}[1]{V_{#1}\xspace}
\newcommand{\subE}[1]{E_{#1}\xspace}
\newcommand{\subVf}[1]{V'_{#1}\xspace}
\newcommand{\subEf}[1]{E'_{#1}\xspace}
\newcommand{\subGf}[2]{F_{{#1},{#2}}\xspace}
\newcommand{\subGe}[2]{\subGf{#1}{#2}}
\newcommand{\subGv}[2]{\subGf{#1}{#2}}
\newcommand{\rootf}[2]{r_{{#1},{#2}}\xspace}
\newcommand{\roote}[2]{\rootf{#1}{#2}}
\newcommand{\rootv}[2]{\rootf{#1}{#2}}
\newcommand{\GS}{G_{S,Z}\xspace}
\newcommand{\GV}{G_{V \setminus S}\xspace}
\newcommand{\jj}{j\xspace}
\newcommand{\mmdeg}{\gamma}
\newcommand{\lost}{q}
\newcommand{\depth}{d}
\newcommand{\maxdeg}[1]{t_{#1}\xspace}
\newcommand{\subVG}[2]{V_{{#1},{#2}}\xspace}
\DeclareMathOperator{\Out}{Out}
\DeclareMathOperator{\In}{In}
\DeclareMathOperator{\OutDeg}{Outdeg}
\DeclareMathOperator{\InDeg}{Indeg}
\DeclareMathOperator{\topsccex}{\ProcNameSty{TopSCCWithout}} 
\DeclareMathOperator{\topscc}{\ProcNameSty{TopSCC}}
\DeclareMathOperator{\allbridges}{\ProcNameSty{Bridges}}
\DeclareMathOperator{\localalg}{\ProcNameSty{2IsolatedSetLocal}}
\DeclareMathOperator{\levelsearch}{\ProcNameSty{2IsolatedSetLevel}}
\DeclareMathOperator{\klevelsearch}{\ProcNameSty{kIsolatedSetLevel}}
\DeclareMathOperator{\allsearch}{\ProcNameSty{2IsolatedSet}}
\DeclareMathOperator{\kallsearch}{\ProcNameSty{kIsolatedSet}}
\DeclareMathOperator{\bfs}{\ProcNameSty{BFS}}
\DeclareMathOperator{\rec}{\ProcNameSty{\vscc}}
\DeclareMathOperator{\reck}{\ProcNameSty{\kscc}}
\newif\iffullversion
\newcommand{\infull}[1]{\iffullversion #1\fi}
\newcommand{\inshort}[1]{\iffullversion \else #1\fi}
\colorlet{DarkRed}{red!50!black}
\colorlet{DarkGreen}{green!50!black}
\colorlet{DarkBlue}{blue!50!black}
\declaretheorem[numberwithin=section]{theorem}
\declaretheorem[numberlike=theorem]{lemma}
\declaretheorem[numberlike=theorem]{corollary}
\declaretheorem[numberlike=theorem]{definition}
\declaretheorem[numberlike=theorem]{observation}
\title{Finding 2-Edge and 2-Vertex Strongly Connected Components in Quadratic Time\thanks{A preliminary version of this paper is presented at the \emph{42nd International Colloquium on Automata, Languages, and Programming (ICALP 2015)}.}}
\author{
Monika Henzinger\thanks{University of Vienna, Faculty of Computer Science, Austria. This work was supported by the Austrian Science Fund (FWF): P23499-N23. Additionally, the research leading to these results has received funding from the European Research Council under the European Union's Seventh Framework Programme (FP/2007-2013) / ERC Grant Agreement no. 340506.}
\and Sebastian Krinninger\samethanks[2]
\and Veronika Loitzenbauer\samethanks[2]
}
\date{}
\begin{document}
\maketitle

\begin{abstract}
We present faster algorithms for computing the 2-edge and 2-vertex strongly connected components of a directed graph. While in \emph{undirected} graphs the 
2-edge and 2-vertex connected components can be 
found in linear time, in \emph{directed} graphs with $m$ edges and $n$ vertices
only rather simple 
$O(m n)$-time algorithms were known. We use a hierarchical sparsification technique to obtain algorithms
that run in time $O(n^2)$.
For 2-edge strongly connected components our algorithm gives the first running time improvement in 20 years.
Additionally we present an 
$O(m^2 /\log n)$-time algorithm for 2-edge strongly connected components, and 
thus improve over the $O(m n)$ running time also when $m = O(n)$.
Our approach extends to $k$-edge and $k$-vertex strongly connected components
for any constant $k$ with a running time of $O(n^2 \log n)$ for 
$k$-edge-connectivity and $O(n^3)$ for $k$-vertex-connectivity.

\end{abstract}

\section{Introduction}\label{sec:intro}

\paragraph*{Problem Description.}
In a directed graph~$G$ two vertices $u$ and $v$ are \emph{2-edge strongly connected}
if from $u$ to $v$ and from $v$ to $u$, respectively, there are two paths that 
have no common edge. A \emph{2-edge strongly connected component} (\escc)
of~$G$ is a maximal subgraph of~$G$ such that in the subgraph every pair of distinct
vertices is 2-edge strongly connected. Two vertices $u$ and $v$ are \emph{2-vertex
strongly connected} in $G$ if they remain strongly connected after the removal
of any single vertex except $u$ and $v$ from $G$.
A \emph{2-vertex strongly connected component} (\vscc) of~$G$ is a maximal 
subgraph of~$G$ such that in the subgraph every pair of distinct
vertices is 2-vertex strongly connected.
Edge and vertex connectivity are central properties of graphs and have
many applications~\cite{BangJensenG09,NagamochiI08}, for example in the 
construction of reliable communication networks~\cite{BondyM76} and in the 
analysis of the structure of networks~\cite{Newman10}.

\paragraph*{Our Results.}
In this work we present algorithms that compute the \escc{s} and the \vscc{s}
of a directed graph in $O(n^2)$ time. For \escc{s} we additionally 
provide an algorithm that runs in $O(m^2 / \log n)$ time, which is faster than 
$O(n^2)$ if $m = O(n)$.
Thus we significantly improve upon the previous $ O (m n) $-time algorithms for 
both \escc{s}~\cite{NagamochiW93,GeorgiadisILP15} and \vscc{s}~\cite{Jaberi14}.
For \escc{s} the previous upper bound stood for 20 years.
Our approach immediately generalizes to computing the \emph{$k$-edge strongly connected components} (\kescc{s}) and the \emph{$k$-vertex strongly connected components} (\kvscc{s}).
We give algorithms that, for any integral constant~$ k > 2$, compute (1) the \kescc{s} in time $ O (n^2 \log{n}) $ (improving upon the previous upper bound of $ O(m n) $~\cite{NagamochiW93}) and (2) the \kvscc{s} in time $ O (n^3) $ (improving upon 
the previous upper bound of 
$ O(m n^2) $~\cite{Makino86}).

\paragraph*{Related Work.}
The 2-edge and 2-vertex connected components of an \emph{undirected} graph can 
be determined in linear time~\cite{Tarjan72,HopcroftT73}. In \emph{directed} 
graphs several related problems can be solved in linear time:
Testing whether a graph is 2-edge or 2-vertex strongly 
connected~\cite{Tarjan76,GabowT85,Georgiadis10}, finding all \emph{strong bridges} and 
\emph{strong articulation points}~\cite{ItalianoLS12}, and determining the \emph{2-edge} and
\emph{2-vertex strongly connected blocks}~\cite{GeorgiadisILP15,GeorgiadisILP15b}.
An edge is a strong bridge and a vertex is a strong articulation point, respectively, 
if its removal from the graph increases the number of strongly connected components 
(\scc{s}) of the graph.
Note the difference between `blocks' and `components' in directed graphs: In 
a 2-edge strongly connected block every pair of distinct vertices is 2-edge 
strongly connected; however, as opposed to a \escc, the paths to connect the 
vertices in a block might use
vertices that are \emph{not} in the same block. Each \escc is completely contained in
one 2-edge strongly connected block, i.e., the \escc{s} refine the 2-edge 
strongly connected blocks. In 
\infull{Appendix~\ref{sec:2escbexample}}\inshort{the full version of this paper} we provide a
construction that shows that knowing the 2-edge strongly connected blocks of a graph
does not help in finding its \escc{s}. The relation between blocks and components
for vertex connectivity is analogous.

Georgiadis~et~al.~\cite{GeorgiadisILP15}
and Jaberi~\cite{Jaberi14} described simple algorithms to compute the \escc{s} 
and \vscc{s} in $O(m n)$-time, respectively, and posed as an open problem whether this 
can be improved to linear time as well. 
An $ O (m n) $ running time for computing the \escc{s} was already achieved by 
Nagamochi and Watanabe in 1993~\cite{NagamochiW93}, which in fact solved the more 
general problem of computing the \kescc{s}.
To the best of our knowledge, the fastest known algorithm for computing the \kvscc{s} 
is by Makino~\cite{Makino86} and has a running time of $O(m n^2)$ (when combined
with an $O(m n)$-time algorithm for finding minimum vertex-separators~\cite{Even75,Galil80,HenzingerRG00,Gabow06};
combined with \cite{Georgiadis10} and~\cite{BuchsbaumGKRTW08} it also gives an 
$O(m n)$-time algorithm for \vscc{s}). 
In \emph{undirected} graphs there are linear-time algorithms for computing both 
the 3-edge~\cite{GalilI91} and the 3-vertex~\cite{HopcroftT73} connected components. 
The $k$-edge connected components of an undirected graph
can be computed in time $O(n^2)$~\cite{NagamochiW93}. 
The runtime of Makino's algorithm can for $k$-vertex connected
components in undirected graphs be reduced to $O(n^3)$ by a preprocessing
step~\cite{NagamochiI92}. Thus, to the best of our knowledge, our algorithms
for \kescc{s} and \kvscc{s} match the runtimes for undirected graphs for $k > 3$
(up to a logarithmic factor).

\paragraph*{Techniques.}
We use a \emph{hierarchical graph sparsification} that was introduced by 
Henzinger~et~al.~\cite{HenzingerKW99} for undirected graphs and extended to directed 
graphs and game graphs in~\cite{ChatterjeeH14,ChatterjeeHL15}.
Roughly speaking, this sparsification technique allows us to replace the `$ m $' 
in the $ O(m n) $ running time by an `$ n $', yielding $ O (n^2) $.
Our main technical contribution is to find structural properties of 
connectivity in directed graphs that allow us to apply this technique. 
Note that while various ways of sparsification are used in algorithms for 
\emph{undirected} graphs, such approaches are rarely found for \emph{directed} graphs.
\infull{
For example, Georgiadis~et~al.~\cite{GeorgiadisILP15} showed that a sparse 
certificate for 2-edge strong connectivity can be constructed in $O(m)$ time; the 
sparsification framework by Eppstein~et~al.~\cite{EppsteinGIN97} would yield a
dynamic algorithm with $O(n)$ update time if a sparse \emph{strong} certificate
could be constructed in $O(m)$ time; however, Khanna~et~al.~\cite{KhannaMW98}
showed that even a strong certificate for strong connectivity has size 
$\Omega(m)$, i.e., is not sparse.
We therefore strongly believe that understanding how to apply sparsification for 
directed graphs is interesting in its own right.
}

We briefly present the main ideas behind our algorithm for 2-vertex connectivity. 
The approach for edge connectivity is similar.
The fastest known asymptotic running time of $ O (m n) $ for computing \vscc{s}
can be achieved with the following approach:
Assume that the graph is strongly connected. First find a strong articulation 
point of the graph, i.e., a vertex whose removal increases the number of 
\scc{s}.
Then remove the strong articulation point and compute the \scc{s}. 
For each \scc, recurse
on the subgraph it induces together with the strong articulation point. 
The recursion stops when no strong articulation point is found anymore.
The \scc{s} remaining in the end are the \vscc{s}.
We now explain in which way our algorithm deviates from this scheme.

Let for 2-vertex connectivity a \emph{$2$-isolated set}~$S$
be a set of vertices that (a) 
cannot be reached by the vertices of $V \setminus S$ or (b) that can be reached 
from $V \setminus S$ only through one vertex~$v$. We show that every
\vscc of $G$ contains either only vertices of $S \cup \set{v}$ or only vertices of 
$V \setminus S$. Thus the algorithm can recurse on the subgraphs induced 
by $S \cup \set{v}$ and $V \setminus S$, respectively.
The difference to the straightforward approach is thus the following:
Instead of repeatedly identifying strong articulation points, we focus on 
separating 2-isolated sets of vertices. To see why this is useful, note that the 
incoming edges of the vertices of a 2-isolated set~$S$
consist of the incoming vertices from other vertices of~$S$ and edges
from at most one vertex of $V \setminus S$ to $S$. Thus the number of incoming 
edges of each vertex in~$S$ is bounded by the number of vertices in~$S$. 
We use this insight as follows: When searching for a 2-isolated set, we start 
the search in a subgraph of~$G$ that includes all vertices but only the first 
incoming edge of each vertex. If no 2-isolated set is found, we repeatedly 
double the number of incoming edges per vertex in the subgraph 
until the search is successful. In this way the search will take time $O(n)$ per
vertex in the 2-isolated set. This will allow us to bound the total runtime by $O(n^2)$.
Note that to achieve this running time we cannot afford to compute all \scc{s} 
in each recursive call because the recursion depth might be $\Theta(n)$; we therefore do 
not assume that the input graph is strongly connected.

To correctly identify 2-isolated sets by a search in a proper subgraph of~$G$, 
the algorithm finds \emph{vertex-dominators} in slightly modified 
\emph{flow graphs}. A flow graph is a directed graph with a designated
root where all vertices are reachable from the root. A vertex is a vertex-dominator
in a flow graph if some other vertex can be reached from the root only through this 
vertex. Our algorithms use the linear-time algorithms for finding
dominators~\cite{Tarjan76,GabowT85,BuchsbaumGKRTW08} and 
\scc{s}~\cite{Tarjan72} as subroutines.

In the $O(m^2 / \log n)$-algorithm for \escc{s} we search for 2-(edge-)isolated sets in
subgraphs that are obtained by local breadth-first searches from vertices that lost 
edges in the previous iteration of the algorithm. Such local breadth-first 
searches were first used for B\"uchi games by 
Chatterjee~et~al.~\cite{ChatterjeeJH03}.

\paragraph*{Outline.}
\infull{In the main part of the paper we describe the result for \vscc{s} in
more detail.} In Section~\ref{sec:prelim} the main definitions and the
notation are introduced. In Section~\ref{sec:subgraph} we show when and 
how we can identify a 2-isolated set in a proper subgraph of~$G$. In
Section~\ref{sec:algo} we present the $O(n^2)$-algorithm for \vscc{s}.
In Section~\ref{sec:kscc_short} we outline how the results from
Sections~\ref{sec:subgraph} and~\ref{sec:algo} extend to \kescc{s} and \kvscc{s}.
\inshort{The proofs are given in the full version of this paper.}
\infull{In Appendix~\ref{sec:kscc} we provide a full version with all proofs for \kescc{s} and \kvscc{s}.
In Appendix~\ref{sec:local} we present the $O(m^2 / \log n)$-algorithm for \escc{s}.}

\section{Preliminaries}\label{sec:prelim}
Let $G=(V,E)$ be a directed graph with $m = |E|$ edges and $n = |V|$ vertices.
Except when mentioned explicitly, we only consider simple graphs, i.e., 
graphs without parallel edges.
The reverse graph $\revG{G}$ of $G$ is equal to $(V,\rev{E})$ 
where $\rev{E}$ is the set containing for each edge $ (u, v) \in E $ 
its reverse $ (v, u) $. We use $S \subseteq V$ to denote a subset $S$ of $V$
and $S \subsetneq V$ to denote a proper subset $S$ of $V$.
For any set $S \subseteq V$ we denote by $G[S]$ the subgraph of 
$G$ induced by the vertices in $S$, i.e., the graph $(S, E \cap (S \times S))$.
We call edges from some $u \in V \setminus S$ to some $v \in S$ the \emph{incoming 
edges} of $S$.
The incoming edges of a vertex $v$ in $G$ are denoted by $\In_G(v)$,
the number of incoming edges by $\InDeg_G(v)$; analogously we use 
$\Out_G(v)$ and $\OutDeg_G(v)$ for outgoing edges.
We denote by $G \setminus V'$ the graph 
$G[V \setminus V']$ and by $G \setminus E'$ the graph 
$(V,E \setminus E')$ for an arbitrary set of vertices $V' \subseteq V$ and 
an arbitrary set of edges $E' \subseteq E$.

\paragraph*{Strong Connectivity.} 
A subgraph $G[S]$ induced by some set of vertices $S$ is \emph{strongly connected} 
if for every pair of distinct vertices $u$ and $v$ in $S$ there exists a path 
from $u$ to $v$ and a path 
from $v$ to $u$ in $G[S]$. A 
single vertex is considered 
strongly connected. The \emph{strongly connected components} (\scc{s}) 
of $G$ are its maximal strongly connected subgraphs and form a 
partition of $V$. A strongly connected subgraph with no outgoing edges is a 
\emph{bottom \scc} (\bscc), a strongly connected subgraph with no incoming edges 
is a \emph{top \scc} (\tscc). By definition, \bscc{s} and \tscc{s} are maximal. 
Every graph $G$ contains at least one \bscc and at least one \tscc. If $G$ is 
not strongly connected, then there exist both
a \bscc and a \tscc that are disjoint and thus one of them contains at
most half of the vertices of $G$. Note that a \bscc in $G$ is a \tscc
in $\revG{G}$ and vice versa. We further use that when a set of vertices $S$
cannot be reached by any vertex of $V \setminus S$ in $G$, then $G[S]$ contains
a \tscc of $G$. 
\infull{
The \scc{s} of a graph can be computed in $O(m)$~time~\cite{Tarjan72}.
} 

\paragraph*{Strong 2-Vertex Connectivity.}
A vertex $v \in V$ is a \emph{strong articulation point} if the removal of $v$ from 
$G$ increases the number of \scc{s} in $G$. 
\infull{
All strong articulation points of a graph 
can be found in time $O(m)$~\cite{ItalianoLS12}.
} 
Two (simple) paths are 
\emph{internally vertex-disjoint} if they do not share a vertex except possibly their 
endpoints. Two distinct vertices $u$ and $v$ are \emph{2-vertex strongly connected} 
in~$G$ if they are strongly connected and remain strongly connected after the
removal of any vertex except $u$ and $v$ from~$G$. 
If there is no edge between $u$ and $v$, then it holds that $u$ and $v$ are 
2-vertex strongly connected if and only if there exists two internally 
vertex-disjoint paths from~$u$ to~$v$ and two internally vertex-disjoint paths from~$v$ to~$u$~\cite{GeorgiadisILP15}.
A subgraph $G[S]$ induced by some set of vertices $S$ is 
\emph{2-vertex strongly connected} if every pair of distinct 
vertices~$u$ and~$v$ in~$S$ is 2-vertex strongly connected in~$G[S]$.
The \emph{2-vertex strongly connected components}\footnote{
Our definitions follow~\cite{GeorgiadisILP15}, while~\cite{Jaberi14,GeorgiadisILP15b} 
use slightly different definitions. The \vscc{s} of \cite{Jaberi14,GeorgiadisILP15b} 
can be determined in $O(n)$ time from the \vscc{s} defined here.}
(\vscc{s}) of a graph are its maximal 2-vertex strongly connected subgraphs. 
Equivalently, the \vscc{s} are
the maximal strongly connected subgraphs such that none of the subgraphs 
contains a strong articulation point. This definition of \vscc{s} allows for 
\emph{degenerate} \vscc{s} with less than three vertices.
While the \escc{s} form a partition of the vertices of the graph, the \vscc{s} 
form a partition of a subset of the edges.
In the remainder of the paper we omit ``strong(ly)'' from the above definitions
whenever it is clear from the context.

\paragraph*{Flow Graphs.}
We define the \emph{flow graph} $G(r)$ to be the graph $G$ with a vertex $r \in V$ 
designated as the root and with all vertices not reachable from $r$ removed.
A \emph{vertex-dominator} in~$G(r)$ is a vertex $v \in V \setminus \{r\}$ for
which there exists a vertex $u \in V \setminus \{r,v\}$ such that
$u$ is reachable from~$r$ and every path from~$r$ to~$u$ contains~$v$.
We say that $v$ \emph{dominates} $u$ in $G(r)$.
Note that in contrast to articulation points the removal of a vertex-dominator 
from $G$ might not increase the number of \scc{s} but instead might remove edges 
between \scc{s}.
\infull{
The vertex-dominators of a flow graph can be computed in 
linear time~\cite{BuchsbaumGKRTW08}.
}
\section{New top \scc{s} and dominators in subgraphs}\label{sec:subgraph}
Let an \emph{isolated set~$S$ w.r.t.\ 2-vertex-connectivity} (\emph{2-isolated set})
be a set of vertices with (1) incoming edges from at 
most one vertex and for which (2) there exist vertices without edges to $S$ in $G$.
2-isolated sets can be used to design a divide-and-conquer based algorithm
for the following reason: Let $T$ be the vertex set of a \vscc. The \vscc $G[T]$ 
is (1) strongly connected 
and (2) for any proper subset~$S$ of $T$ such that there exists a set of
vertices~$U$ in $T$ that has no edge to any vertex of~$S$, there are at least 
two vertices in $T \setminus (S \cup U)$ that connect $U$ with the 
vertices in~$S$. 
\begin{figure}
\centering
\begin{tikzpicture}
\tikzstyle{graph}=[circle,draw,minimum size=10mm]
\tikzstyle{vertex}=[circle,draw,fill,inner sep=1.2pt, solid]
\tikzstyle{arrow}=[->,line width=0.5pt,>=stealth',thick]
\node[graph] (S) at (-1.5,0) {$S$};
\node[graph] (U) at (1.5,0) {$U$};
\node[vertex] (v) at (0,0.1) {};
\node[vertex] (u) at (0,-0.2) {};

\path (U) edge[arrow] (v) edge[arrow] (u)
(v) edge[arrow] (S)
(u) edge[arrow] (S)
(S) edge[arrow, dotted, bend angle=20, bend left] (U)
(S) edge[arrow, dotted, bend angle=30, bend left] (U)
(S) edge[arrow, dotted, bend angle=25, bend right] (U)
;
\end{tikzpicture}
\end{figure}
Thus if we detect a set of vertices~$S$ that (a) 
cannot be reached by the vertices of $V \setminus S$ or (b) that can be reached 
from $V \setminus S$ only through one vertex~$v$, then we know that each \vscc 
of~$G$ contains either only vertices of $S \cup \set{v}$ or only vertices of 
$V \setminus S$. A 2-isolated set satisfies (a) or (b). 
Our algorithm repeatedly identifies specific 2-isolated sets~$S$ and recurses 
on the subgraphs induced by $S \cup \set{v}$ and $V \setminus S$, respectively.
As the recursion depth can be $\Theta(n)$, to achieve an $o(mn)$
running time, we cannot afford to look at all edges in each level of recursion.
Thus our algorithms are based on the following question: \emph{Can we identify
2-isolated sets by searching in a proper subgraph of $G$?}
Note that whenever an articulation point $v$ is removed from a strongly 
connected graph~$G$, then there exist both a \tscc and a \bscc in 
$G \setminus\set{v}$ that were adjacent to $v$ in $G$ and are disjoint.
Let $T$ be the vertices in the \tscc in $G \setminus\set{v}$. Observe that $T$ 
is a 2-isolated set in~$G$. Further, if $T$ contains only a few vertices,
then each vertex in $T$ has a low in-degree in~$G$ because all incoming 
edges to vertices in~$T$ in~$G$ come from~$v$ or other vertices of $T$. 
In our algorithm we search for such ``almost 
\tscc{s}'' $G[T]$ in the subgraph of $G$ induced by vertices with low in-degree,
which only takes time linear in the number of edges in this subgraph.
We do the same on $\revG{G}$ to detect small almost \bscc{s}.
\begin{definition}
	A set of vertices~$T$ induces an \emph{almost \tscc} in 
	$G$ with respect to a vertex~$v$ if $G[T]$ is a 
	\tscc in $G \setminus\set{v}$ but has incoming edges from $v$ in~$G$.
\end{definition}
Given a vertex $v$ such that an almost \tscc induced by $T$ w.r.t.\ $v$ 
exists, the top \scc $G[T]$ can be identified in a subgraph of $G\setminus \set{v}$
in time linear in the number of edges in the subgraph as long as 
it is contained in the subgraph. But how can we 
identify the vertex~$v$ without looking at 
the whole graph? Assume there exists a vertex $r \ne v$ that is not in $T$ but 
can reach $v$. Since $G[T]$ is a \tscc in $G\setminus \set{v}$, 
it follows that $v$ dominates every vertex of $T$ in the flow graph $G(r)$. 
This still holds
in any subgraph of $G$ as long as $r$ can reach $T$ in the subgraph. If additionally
all incoming edges of the vertices in $T$ are present in the subgraph,
we can identify~$v$ and~$T$ in time linear in the number of edges in the subgraph by 
finding the vertex-dominator~$v$ in the flow graph with root~$r$ and the \tscc $G[T]$ 
in the subgraph with~$v$~removed. Thus, instead of finding the right~$v$, we only
have to find the right~$r$. As edges are missing in the subgraph, it is not a-priori
clear how to choose~$r$, but, as shown below, we can use an artificial vertex as root~$r$.
Hence our approach is to first search for vertex-dominators~$v$ in a subgraph 
with an additional artificial root and then for a \tscc in the subgraph with $v$
removed. When the search is successful, we recurse separately on the almost \tscc 
and the remaining graph.

In our algorithm we cannot afford to identify all \scc{s} in the current 
graph~$G$ as we only want to spend time proportional to the edges in a proper subgraph
of~$G$; thus we cannot assume that the graph we are considering is strongly 
connected. This means that, in contrast to strongly connected 
graphs~\cite{ItalianoLS12}, when we 
identify a vertex-dominator~$v$ in~$G(r)$, the vertex~$v$ might not necessarily 
be an articulation point in~$G$. However, for an almost \tscc w.r.t.\ $v$ we still 
know that the set of vertices~$T$ in the almost \tscc is a 2-isolated set, i.e., all 
vertices of $V \setminus (T \cup \set{v})$ that can 
reach $T$ in $G$ can reach $T$ only through $v$.
Thus there cannot be two internally vertex-disjoint paths 
from any vertex of $G \setminus (T \cup \set{v})$ to any vertex of $T$.
This intuition about almost \tscc{s} is summarized in the following lemma, which
we use to show the correctness of our approach.
\begin{lemma}\label{lem:newtscc}
Let $v$ be a vertex such that some set of vertices~$T$ induces an almost \tscc with 
respect to $v$ in $G$.
Let $W = V \setminus (T \cup \set{v})$. If $W \ne \emptyset$, then there do not 
exist two internally vertex-disjoint paths from any vertex of~$W$ to any 
vertex of~$T$ in $ G $,
i.e., no vertex of~$W$ is 2-vertex-connected to any vertex of~$T$.
Additionally, the vertex~$v$ is a vertex-dominator in~$G(r)$ for every~$r \in W$
that can reach~$v$ in~$G$.
\end{lemma}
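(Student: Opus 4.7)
The plan is to first extract the key structural consequence of the almost \tscc definition: any $W$-to-$T$ path must pass through $v$. Once that is established, both claims follow quickly.

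First I would argue that no edge from $W$ to $T$ exists in $G$. Since $G[T]$ is a \tscc in $G \setminus \{v\}$, the set $T$ has no incoming edges in $G \setminus \{v\}$ from vertices in $(V \setminus \{v\}) \setminus T = W$. Because $G \setminus \{v\}$ only deletes the vertex $v$ (and edges incident to $v$), any edge from a vertex $w \in W$ to a vertex $t \in T$ in $G$ would still be present in $G \setminus \{v\}$, contradicting the fact that $G[T]$ is a \tscc there. Therefore no such edge exists in $G$.

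From this I would conclude the core structural fact: \emph{every path from any $w \in W$ to any $t \in T$ in $G$ contains $v$}. Indeed, consider such a path and let $t$ be its first vertex in $T$; the predecessor of $t$ on the path lies in $V \setminus T = W \cup \{v\}$, and by the previous paragraph it cannot lie in $W$, so it must be $v$. For the first claim of the lemma, suppose towards a contradiction that two internally vertex-disjoint paths from some $w \in W$ to some $t \in T$ exist. Both contain $v$, and since $v \ne w$ and $v \ne t$, the vertex $v$ is an internal vertex of both paths, contradicting internal vertex-disjointness. Hence no pair of internally vertex-disjoint $w$-to-$t$ paths exists, which is exactly the statement that no vertex of $W$ is 2-vertex-connected to any vertex of $T$.

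For the second claim, fix $r \in W$ that can reach $v$ in $G$. By the definition of almost \tscc, $v$ has at least one outgoing edge into $T$, say to $u \in T$; concatenating an $r$-to-$v$ path with this edge shows that $u$ is reachable from $r$ in $G(r)$, and $u \notin \{r, v\}$ since $u \in T$ and $r \in W$. By the structural fact above, every $r$-to-$u$ path passes through $v$, so $v$ dominates $u$ in $G(r)$, making $v$ a vertex-dominator as required. The main (minor) obstacle is the edge-case bookkeeping — ensuring $T \ne \emptyset$ (which follows from $T$ inducing a \tscc) and that $u$ and $r$ are distinct from $v$ — so that the definition of vertex-dominator applies verbatim; these are immediate from the definitions of almost \tscc and of $W$.
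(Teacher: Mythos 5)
Your proof is correct and follows essentially the same route as the paper's (which proves the statement in its $k$-connectivity generalization, Lemma~\ref{lem:knewtscc}): since $G[T]$ is a \tscc in $G\setminus\set{v}$, there are no edges from $W$ to $T$, hence every $W$-to-$T$ path passes through $v$, which yields both the non-2-vertex-connectivity and the dominator claim. Your write-up merely spells out the details (the witness $u\in T$ for the dominator definition) that the paper's terser proof leaves implicit.
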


Let $\subG{G}{h} = (\subV{h},\subE{h})$ be a subgraph of a directed graph 
$G = (V,E)$, i.e., $\subV{h} \subseteq V$ and $\subE{h}
\subseteq G[\subV{h}]$. We use the index $h$ to identify specific subgraphs.
In the remainder of this section we want to characterize which almost \tscc{s} in $G$ we 
can identify in $\subG{G}{h}$. Let $v$ be a 
vertex such that an almost \tscc w.r.t.\ $v$ exists in~$G$. To identify $v$ as a
vertex-dominator in a flow graph, we define below
a graph created from $\subG{G}{h}$ with an auxiliary root.
Let the \emph{white} vertices $\good{G}{h} \subseteq \subV{G}$ be the set of vertices
for which we have the guarantee that for each vertex in $\good{G}{h}$ its incoming
edges in $\subG{G}{h}$ are the same as in $G$. Let $\bad{G}{h} = \subV{h} 
\setminus \good{G}{h}$ be the \emph{blue} vertices, which might miss incoming
edges in $\subG{G}{h}$ compared to $G$.
We show that as long as the vertices in the 
almost \tscc are white, i.e., are not missing incoming edges 
in~$\subG{G}{h}$, an almost \tscc w.r.t.\ 
a vertex~$v$ in $\subG{G}{h}$ is an almost \tscc w.r.t.\ $v$ in $G$ and vice versa. 
In contrast, no conclusions can be drawn from 
an almost \tscc in $\subG{G}{h}$ that includes blue vertices.
\begin{definition}\label{def:subgraph}
For a given subgraph $\subG{G}{h} = (\subV{h},\subE{h})$ of a directed graph 
$G = (V,E)$ and a set of blue vertices $\bad{G}{h}$ that contains all vertices 
that have fewer incoming edges in $\subG{G}{h}$ than in $G$, we define the flow 
graph $\subGv{G}{h}(\rootv{G}{h})$ as follows.
If $\lvert \bad{G}{h} \rvert \ge 2$, let $\subGv{G}{h}$ be the graph 
$\subG{G}{h}$ with an additional vertex~$\rootv{G}{h}$ and an additional edge 
from $\rootv{G}{h}$ to each vertex in~$\bad{G}{h}$. If $\bad{G}{h}$ contains a 
single vertex, we name it $\rootv{G}{h}$ and let $\subGv{G}{h} = \subG{G}{h}$.
\end{definition}
\infull{Note that if we would define 
the flow graph $\subGv{G}{h}(\rootv{G}{h})$ in the case $\lvert \bad{G}{h} 
\rvert = 1$ in the same way as for $\lvert \bad{G}{h} \rvert > 1$, then in 
$\subGv{G}{h}(\rootv{G}{h})$
the root $\rootv{G}{h}$ could reach the vertices in $\good{G}{h}$ only through
the vertex in $\bad{G}{h}$, i.e., the vertex in $\bad{G}{h}$ would be 
a vertex-dominator in $\subGv{G}{h}(\rootv{G}{h})$ independent of the underlying
graph $G$.}
In the following consider a subgraph $\subG{G}{h}$ and a set of vertices 
$\subV{h}$ partitioned into $\bad{G}{h}$ and $\good{G}{h}$ as\infull{ defined} above;
the statements for $\subGv{G}{h}$ hold whenever $\subGv{G}{h}$ is defined.
\begin{lemma}\label{lem:tscc}
A set of white vertices $T \subseteq \good{G}{h}$ induces a \tscc in 
$\subG{G}{h}$ and $\subGv{G}{h}$, respectively, if and only if it induces a \tscc in $G$.
\end{lemma}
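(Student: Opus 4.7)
The plan is to show that when we restrict attention to a set $T$ of white vertices, the graphs $G$, $\subG{G}{h}$, and $\subGv{G}{h}$ are indistinguishable from the \tscc point of view. More precisely, I want to establish two invariants: the induced subgraph on $T$ is the same in all three graphs, and the set of incoming edges of $T$ from outside $T$ is the same in all three. Being a \tscc is determined entirely by these two pieces of data (strong connectivity of the induced subgraph together with the absence of incoming edges from outside), so the iff then follows essentially tautologically.

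First I would unpack the hypothesis $T \subseteq \good{G}{h}$: by definition of $\good{G}{h}$, every $v \in T$ has the same set of incoming edges in $\subG{G}{h}$ as in $G$. Two consequences follow. Every edge of $G$ with both endpoints in $T$ is an incoming edge of some white vertex and is therefore already present in $\subG{G}{h}$, so $G[T] = \subG{G}{h}[T]$. Every edge of $G$ from $V \setminus T$ to $T$ is likewise an incoming edge of a white vertex and hence lies in $\subG{G}{h}$, with source necessarily in $\subV{h} \setminus T$; the converse inclusion is immediate from $\subE{h} \subseteq E$. Hence $T$ has an incoming edge from outside in $G$ if and only if it has one from $\subV{h}\setminus T$ in $\subG{G}{h}$.

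To lift the two invariants to $\subGv{G}{h}$ whenever it is defined, I would only need the observation that the new edges leave $\rootv{G}{h}$ and enter $\bad{G}{h}$, and that $\rootv{G}{h}$ itself is either a fresh auxiliary vertex outside $V$ or the unique element of $\bad{G}{h}$; in either case $\rootv{G}{h} \notin T$ and, since $T \cap \bad{G}{h} = \emptyset$, none of the new edges enters $T$. The iff then closes in one line in each direction, with maximality coming for free: a strongly connected vertex set with no incoming edges from outside is automatically a \tscc, because any strictly larger strongly connected superset would be forced to reach $T$ from outside and thereby produce such an edge. The closest thing to an obstacle is keeping the two cases of Definition~\ref{def:subgraph} straight so that $\rootv{G}{h}$ is correctly excluded from $T$ and the auxiliary edges never enter $T$; beyond this bookkeeping, the proof is essentially just reading off the definitions.
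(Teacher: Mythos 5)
Your proof is correct and is essentially the paper's own argument (given in the appendix as the more general Lemma~\ref{lem:ktscc}): since all incoming edges of white vertices are preserved, $G[T]$ coincides with the induced subgraph in the sparser graph and $T$ has an incoming edge in one graph iff it has one in the other. Your explicit bookkeeping for $\subGv{G}{h}$ (the auxiliary root and its edges never enter $T$ because $T \cap \bad{G}{h} = \emptyset$) and the maximality remark just spell out what the paper leaves implicit.
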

If white vertices $T$ induce an almost \tscc $G[T]$ with respect to $v$, all 
incoming edges, and thus $v$,
are present in $\subG{G}{h}$. 
This implies the following corollary.
\begin{corollary}\label{cor:corr}
A set of white vertices $T \subseteq \good{G}{h}$ induces an almost \tscc 
with respect to a vertex $v\in V$ in~$\subG{G}{h}$ and 
$\subGv{G}{h}$, respectively, if and 
only if it induces an almost \tscc with respect to $v$ in~$G$.
\end{corollary}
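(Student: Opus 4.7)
The plan is to reduce the statement to Lemma~\ref{lem:tscc} by unpacking the definition of an almost \tscc into its two ingredients. Specifically, a set $T$ induces an almost \tscc w.r.t.\ $v$ in an ambient graph $H$ iff (i) $H[T]$ is a \tscc of $H \setminus \set{v}$ and (ii) there is at least one edge from $v$ into $T$ in $H$. I will check both ingredients against each of the three ambient graphs $G$, $\subG{G}{h}$, and $\subGv{G}{h}$, exploiting the assumption $T \subseteq \good{G}{h}$ throughout.

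For ingredient~(i), I would apply Lemma~\ref{lem:tscc} to the pair $(G \setminus \set{v},\, \subG{G}{h} \setminus \set{v})$ rather than to the original pair $(G, \subG{G}{h})$. The key verification is that $T$ is still ``white'' in this modified pair: for every $u \in T$, whiteness gives $\In_G(u) = \In_{\subG{G}{h}}(u)$, and deleting $v$ removes exactly the same (possibly empty) edge $(v,u)$ on both sides, so $\In_{G \setminus \set{v}}(u) = \In_{\subG{G}{h} \setminus \set{v}}(u)$. Hence a valid blue set for the modified pair is contained in $\bad{G}{h} \setminus \set{v}$, which is still disjoint from~$T$. Lemma~\ref{lem:tscc} then yields directly that ``$T$ induces a \tscc in $G \setminus \set{v}$'' is equivalent to ``$T$ induces a \tscc in $\subG{G}{h} \setminus \set{v}$''.

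For ingredient~(ii), whiteness once more says that the set of edges from $v$ into $T$ is identical in $G$ and in $\subG{G}{h}$, so ``$T$ has an incoming edge from~$v$'' transfers verbatim. Combining (i) and (ii) gives the corollary for $\subG{G}{h}$. The passage from $\subG{G}{h}$ to $\subGv{G}{h}$ will then be pure bookkeeping: if $\lvert \bad{G}{h} \rvert \le 1$ the two graphs coincide by Definition~\ref{def:subgraph}; otherwise $\subGv{G}{h}$ differs only by the auxiliary vertex $\rootv{G}{h} \notin V$ and the edges $\rootv{G}{h} \to \bad{G}{h}$, and since $T \cap \bad{G}{h} = \emptyset$ none of these edges enters $T$, so neither the strong connectivity of $G[T]$ nor its set of incoming edges is affected (before or after removing $v$).

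The one place where I expect to need care is the reduction to Lemma~\ref{lem:tscc}: I must ensure that the ``white'' property survives the removal of $v$, so that the hypothesis of the lemma still applies to the modified pair. The observation that the edge $(v,u)$ is removed identically on both sides resolves this cleanly, so no genuine obstacle remains; the rest of the argument is definition chasing.
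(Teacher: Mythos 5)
Your proposal is correct and follows essentially the same route as the paper: the paper also derives the corollary from Lemma~\ref{lem:tscc} via the observation that whiteness of $T$ forces all incoming edges of $T$ (and hence $v$ and its edges into $T$) to be present in $\subG{G}{h}$. Your explicit check that whiteness survives deleting $v$, so the lemma applies to the pair $(G \setminus \set{v}, \subG{G}{h} \setminus \set{v})$, is exactly the step the paper leaves implicit, and your bookkeeping for $\subGv{G}{h}$ matches the intended argument.
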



The following lemma specifies which almost \tscc{s} w.r.t.\ a vertex~$v$ in $G$ we 
can identify by searching for vertex-dominators in $\subGv{G}{h}(\rootv{G}{h})$ 
based on the reachability of $v$ from the vertices in $\bad{G}{h}$.
\begin{lemma}\label{lem:finddom}
Assume $\bad{G}{h} \ne \emptyset$, let $T \subseteq \good{G}{h}$ be a set 
of white vertices, and let $v \in V$ be such that there exists an almost \tscc $G[T]$ 
with respect to $v$ in $G$.
If $v$ is either not in $\bad{G}{h}$ and can be reached from a vertex of 
$\bad{G}{h}$ or $v$ is in $\bad{G}{h}$ and $\lvert \bad{G}{h}\rvert \ge 2$,
then $v$ is a dominator in $\subGv{G}{h}(\rootv{G}{h})$.
\end{lemma}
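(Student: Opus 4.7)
The plan is to deduce the statement from Corollary~\ref{cor:corr} together with the observation that the auxiliary root $\rootv{G}{h}$ only sends edges to blue vertices, and therefore cannot bypass $v$ when reaching the white set~$T$.

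First, I would invoke Corollary~\ref{cor:corr}: since $T \subseteq \good{G}{h}$, the almost \tscc $G[T]$ with respect to $v$ in $G$ yields an almost \tscc $\subGv{G}{h}[T]$ with respect to $v$ in $\subGv{G}{h}$. Two consequences matter for us. (i) Since $\subGv{G}{h}[T]$ is a \tscc in $\subGv{G}{h} \setminus \set{v}$, every edge of $\subGv{G}{h}$ that enters $T$ from outside $T$ originates at $v$. (ii) Since $T$ still has incoming edges from $v$ in $\subGv{G}{h}$, there is some $t \in T$ with edge $(v,t)$ present in $\subGv{G}{h}$, and in particular $v \notin T$.

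Second, I would check that $v$ is reachable from $\rootv{G}{h}$ in $\subGv{G}{h}$. By the case hypothesis, either $v \in \bad{G}{h}$ with $\lvert \bad{G}{h}\rvert \ge 2$, in which case the added edge $(\rootv{G}{h},v)$ directly witnesses reachability; or $v \notin \bad{G}{h}$ and some $u \in \bad{G}{h}$ reaches $v$ in $\subG{G}{h}$. In the latter case $\rootv{G}{h}$ reaches $u$, either because $\rootv{G}{h} = u$ when $\lvert \bad{G}{h}\rvert = 1$, or via the added edge $(\rootv{G}{h},u)$ when $\lvert \bad{G}{h}\rvert \ge 2$; the path from $u$ to $v$ in $\subG{G}{h}$ is preserved in $\subGv{G}{h}$. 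Combined with the edge $(v,t)$ from the previous step, $t$ is reachable from $\rootv{G}{h}$.

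Third, I would argue that every path from $\rootv{G}{h}$ to $t$ in $\subGv{G}{h}$ passes through $v$. Consider such a path and let $t'$ be the first vertex of $T$ it visits. The predecessor $w$ of $t'$ lies outside $T$ (in particular $w \neq \rootv{G}{h}$ is possible only if we have already shown $\rootv{G}{h} \notin T$, which holds because $\rootv{G}{h}$ is either auxiliary or lies in $\bad{G}{h}$, disjoint from $T \subseteq \good{G}{h}$) and sends an edge to $t' \in T$; by (i) this forces $w = v$. Combined with $v \neq t$ from (ii), this shows that $v$ dominates $t$ in $\subGv{G}{h}(\rootv{G}{h})$. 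The main obstacle is the careful case split imposed by Definition~\ref{def:subgraph} between $\lvert \bad{G}{h}\rvert = 1$, where $\rootv{G}{h}$ is an actual vertex of~$G$, and $\lvert \bad{G}{h}\rvert \ge 2$, where $\rootv{G}{h}$ is auxiliary; in particular one must verify in both regimes that the out-neighborhood of $\rootv{G}{h}$ is confined to $\bad{G}{h}$, so that the white set~$T$ genuinely cannot be entered from the root without traversing~$v$.
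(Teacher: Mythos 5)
Your proof is correct and follows essentially the paper's own route: the paper establishes this statement via its $k$-generalization (Corollary~\ref{cor:kfind}), whose proof transfers the almost \tscc to the subgraph and flow graph via Corollary~\ref{cor:kcorr} and then applies Lemma~\ref{lem:knewtscc} (the extension of Lemma~\ref{lem:newtscc}) with the root of the flow graph in the role of~$r$; your third step simply inlines that dominator argument (the first entry of any root-to-$T$ path into $T$ must come from~$v$). Two minor remarks: your closing claim that the out-neighborhood of $\rootv{G}{h}$ is confined to $\bad{G}{h}$ is false in the regime $\lvert \bad{G}{h}\rvert = 1$, where $\rootv{G}{h}$ is a genuine vertex of $\subG{G}{h}$ and may well have edges to white vertices---but this is harmless, since your fact~(i) together with $v \ne \rootv{G}{h}$ already excludes any edge from the root into~$T$; and you read ``can be reached'' as reachability in $\subG{G}{h}$, which matches how the paper uses the lemma, though under the reading ``reachable in $G$'' the same conclusion still follows by taking the last blue vertex on a $G$-path to~$v$ and noting that the remaining suffix enters only white vertices and hence survives in $\subG{G}{h}$.
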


In the following section we define specific subgraphs $\subG{G}{h}$
that allow us to identify an almost \tscc in $G$ that has at most a 
certain size by searching for vertex-dominators~$v$ in $\subGv{G}{h}(\rootv{G}{h})$
and \tscc{s} in $\subG{G}{h} \setminus \set{v}$.
We additionally have to consider one special case, namely if $v$ is the only 
vertex in $\bad{G}{h}$ and an almost \tscc w.r.t.\ $v$ exists. In this case 
we have $\rootf{G}{h} = v$.
We explicitly identify almost \tscc{s} with respect to this vertex. 

\section{\vscc{s} in $O(n^2)$ time}\label{sec:algo}
In this section we provide some intuition for the algorithm and outline 
its analysis.
\infull{All proofs are given in Appendix~\ref{sec:kscc}.}
To find vertex-dominators, articulation points, and \scc{s} the known linear 
time algorithms are used (see Section~\ref{sec:intro}).

Let $G = (V,E)$ be a simple directed graph. We consider for $i \in \mathbb{N}$ 
the subgraphs $G_i = (V, E_i)$ of $G$ where~$E_i$ contains for 
each vertex of $V$ its first $2^i$ incoming edges in $E$ (for some arbitrary but
fixed ordering of the incoming edges of each vertex). Note that when
$i \ge \log (\max_{v \in V}{\InDeg_G(v)})$, then $G_i = G$.
Let $\mmdeg$ be the minimum of $\max_{v \in V}{\InDeg_G(v)}$ and 
$\max_{v \in V}{\OutDeg_G(v)}$. Following Definition~\ref{def:subgraph}, 
the set~$\bad{G}{i}$ contains all vertices with in-degree more 
than $2^i$ in~$G$. %

\begin{procedure}
\caption{2vSCC($G$)} 
\label{alg1}
\SetKwProg{myproc}{}{}{}
	\For{$i \leftarrow 1$ \KwTo $\lceil \log \mmdeg \rceil - 1$}{
			$(S, Z) \leftarrow \levelsearch(G, i)$ \;
			\tcc*[l]{$Z$ contains $v$ if $G[S]$ is \emph{almost} top or bottom \scc w.r.t.\ $v$}
			\If{$S \ne \emptyset$}{
				\Return{$\rec(G[S \cup Z]) \cup \rec(G[V \setminus S])$}\;
			}
	}
	$(S, Z) \leftarrow \allsearch(G)$\;
 	\tcc*[l]{$Z$ contains $v$ if $G[S]$ is \emph{almost} top \scc w.r.t.\ $v$}
	\eIf{$S \ne \emptyset$}{
		\Return{$\rec(G[S \cup Z]) \cup \rec(G[V \setminus S])$}\;
	}{
	\Return{$\set{G}$}\;
	}
\end{procedure}

Let $S$ be a set of at most $2^i$ vertices that induces a strongly connected 
subgraph~$G[S]$ of $G$ such that $G[S]$ is a top \scc or an 
almost top \scc with respect to some vertex $v$. Since the only
edges from vertices of $V \setminus S$ to $S$ are from $v$, the in-degree
of each vertex in $S$ can be at most $2^i$.
By applying the results from the previous section, we 
show that we can detect such a set $S$ by searching for \scc{s}
and vertex-dominators in the graphs~$\subGv{G}{i}$ constructed from $G_i$ 
with the artificial root $\rootv{G}{i}$ as in Definition~\ref{def:subgraph}. 
\begin{lemma}\label{lem:level}
If a set of vertices $S$ with $\lvert S \rvert \le 2^i$ 
induces a \tscc or an almost \tscc in $G$ with respect to 
some vertex~$v$, then $S \subseteq V\setminus \bad{G}{i}$.
\end{lemma}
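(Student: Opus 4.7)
The plan is to prove the lemma directly from the definitions by bounding the in-degree of each vertex of $S$ in $G$ and comparing with the threshold that defines $\bad{G}{i}$.

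Recall that a vertex $u$ lies in $V \setminus \bad{G}{i}$ iff $\InDeg_G(u) \le 2^i$. So it suffices to show that every $u \in S$ satisfies $\InDeg_G(u) \le 2^i$. I would split into the two cases.

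First, if $G[S]$ is a \tscc of $G$, then by definition $S$ has no incoming edges from $V \setminus S$, so every incoming edge of $u \in S$ comes from $S \setminus \{u\}$. Because $G$ is simple, this gives $\InDeg_G(u) \le |S|-1 \le 2^i - 1 < 2^i$.

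Second, if $G[S]$ is an almost \tscc w.r.t.\ some $v$, then by Definition $G[S]$ is a \tscc of $G \setminus \set{v}$; hence the only vertex outside $S$ that can have edges into $S$ is $v$ itself. Every incoming edge of $u \in S$ therefore originates either in $S \setminus \{u\}$ or at $v$, and simplicity again gives at most one edge from $v$ to $u$. Thus $\InDeg_G(u) \le (|S|-1) + 1 = |S| \le 2^i$, so $u \notin \bad{G}{i}$.

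No real obstacle arises: the statement is a counting argument that uses only the definition of almost \tscc (all external in-edges to $S$ come from a single vertex $v$) and the simplicity of $G$. The only subtlety is being a bit careful with the strict versus non-strict inequality between $|S|-1$ and $2^i$ in the almost-\tscc case — there the extra edge from $v$ uses up exactly the slack, and this is why the threshold in the definition of $\bad{G}{i}$ is ``more than $2^i$'' rather than ``at least $2^i$''.
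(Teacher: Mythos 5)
Your proof is correct and matches the paper's argument, which proves the same bound in the generalized form of Lemma~\ref{lem:klevel}: incoming edges of vertices in an (almost) \tscc come only from within $S$ (plus at most one edge from $v$ by simplicity), so every vertex of $S$ has in-degree at most $\lvert S \rvert \le 2^i$ and hence lies outside $\bad{G}{i}$. Your remark about the strict threshold in the definition of $\bad{G}{i}$ is exactly the point that makes the almost-\tscc case go through.
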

To find \bscc{s} and almost \bscc{s} we also search for top 
\scc{s} in $\revG{G}$. The search for both top and bottom \scc{s} 
ensures that whenever an (almost) \tscc and a disjoint (almost) \bscc exist in~$G$, 
we only spend time proportional to the smaller one.
This search is performed in Procedure~\ref{alglevel}, which fulfills the 
following guarantee.
\begin{lemma}\label{lem:find}
	If for some integer $1 \le i < \log \mmdeg$ and 
	$\G \in \set{G, \revG{G}}$ there exists a set of vertices $T \subseteq 
	V \setminus \bad{\G}{i}$ that induces in $\G$ a \tscc or an almost \tscc with 
	respect to some vertex~$v$ with $T \subsetneq 
	V \setminus \set{v}$, then $\levelsearch(G, i)$ returns a non-empty set $S$.
\end{lemma}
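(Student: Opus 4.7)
The plan is to show that, under the hypotheses, the structures $T$ and (in the almost-\tscc case) $v$ become visible to $\levelsearch(G, i)$ inside the auxiliary flow graph $\subGv{G}{i}(\rootv{G}{i})$ (or $\subGv{\revG{G}}{i}(\rootv{\revG{G}}{i})$ when $\G = \revG{G}$), so that the \scc and vertex-dominator computations the procedure runs on these graphs identify some 2-isolated set. I argue the case $\G = G$; the case $\G = \revG{G}$ is symmetric since $\levelsearch$ performs the analogous search on the reverse graph.

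First I would establish the structural bridge. Because $i < \log \mmdeg$ the maximum in-degree of $G$ exceeds $2^i$, so $\bad{G}{i}$ is non-empty and $\subGv{G}{i}$ is well-defined. Since $T \subseteq V \setminus \bad{G}{i} = \good{G}{i}$, every vertex of $T$ is white, and Lemma~\ref{lem:tscc} together with Corollary~\ref{cor:corr} imply that $T$ induces a \tscc (respectively an almost \tscc with respect to $v$) in $\subGv{G}{i}$ if and only if it does so in $G$. This lets me reason entirely inside $\subGv{G}{i}$.

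I would then split on the type of $T$. If $T$ is a plain \tscc of $G$, the procedure returns a non-empty $S$ simply by enumerating top \scc{s} of $\subG{G}{i}$ that consist only of white vertices. If $T$ is an almost \tscc with respect to $v$, I distinguish three subcases. If $\bad{G}{i} = \set{v}$, then $\rootv{G}{i} = v$, and the dedicated check for an almost \tscc with respect to the root (mentioned at the end of Section~\ref{sec:subgraph}) identifies $T$. If either $v \in \bad{G}{i}$ with $\lvert \bad{G}{i} \rvert \ge 2$, or $v \notin \bad{G}{i}$ and $v$ is reachable from $\bad{G}{i}$ in $\subG{G}{i}$, Lemma~\ref{lem:finddom} guarantees that $v$ is a vertex-dominator of $\subGv{G}{i}(\rootv{G}{i})$; the procedure locates $v$ as a dominator and then recovers $T$ as a top \scc of $\subG{G}{i} \setminus \set{v}$, again via Corollary~\ref{cor:corr}.

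The main obstacle is the remaining subcase, where $v \notin \bad{G}{i}$ and $v$ is not reachable from $\bad{G}{i}$ in $\subG{G}{i}$, so Lemma~\ref{lem:finddom} no longer applies. Here my approach is to exhibit a different white \tscc of $G$ that the procedure can report. Let $C_v$ be the \scc of $\subG{G}{i}$ containing $v$; every vertex of $C_v$ reaches $v$ in $\subG{G}{i}$, so none can lie in $\bad{G}{i}$, and hence $C_v \subseteq \good{G}{i}$. If $C_v$ is itself a top \scc of $\subG{G}{i}$, then Lemma~\ref{lem:tscc} upgrades it to a \tscc of $G$. Otherwise some top \scc $C$ of $\subG{G}{i}$ reaches $C_v$; the same reachability argument forces $C \subseteq \good{G}{i}$, and Lemma~\ref{lem:tscc} upgrades $C$ as well to a \tscc of $G$. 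In either situation the procedure's search for top \scc{s} consisting of white vertices returns a non-empty $S$, completing the proof.
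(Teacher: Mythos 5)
Your proposal is correct and follows essentially the same route as the paper, which proves the general-$k$ version (Lemma~\ref{lem:kfind}) by first disposing of the white-\tscc case via Lemma~\ref{lem:ktscc}, then invoking the dominator lemma (Lemma~\ref{lem:finddom}, resp.\ Corollary~\ref{cor:kfind}) and the dedicated $\lvert \bad{\G}{i} \rvert = 1$ check. Your explicit treatment of the subcase where $v$ is not reachable from $\bad{G}{i}$ (constructing an all-white top \scc via the condensation) is exactly the contrapositive of the step hidden in the paper's argument that if no white \tscc exists then all white vertices and $v$ are reachable from $\bad{G}{i}$, so the two proofs coincide in substance.
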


In Procedure~\ref{alg1} we start the search for (almost) top \scc{s} at $i = 1$. 
Whenever the search is not successful, we increase~$i$ by one, until we have 
$G_i = G$ or $\revG{G}_i = \revG{G}$. 
For the search the Procedure~\ref{alglevel} is used
as long as $2^i < \mmdeg$,
i.e., both $\bad{G}{i}$ and $\bad{\revG{G}}{i}$ are non-empty, and the 
Procedure~\ref{algsearch} afterwards. Procedure~\ref{algsearch} identifies
an (almost) top \scc in $G$ if one exists by using the known procedures 
for finding \scc{s} and articulation points.
In this way we can show that whenever we had to go up to~$i^*$ or 
had to use Procedure~\ref{algsearch} to identify an 
(almost) top or bottom \scc in $G$, the identified subgraph contains 
$\Omega(2^{i^*})$ vertices, where $i^* = \lceil \log \mmdeg \rceil$ for Procedure~\ref{algsearch}.
This will imply that the search in $G_i$ and $\revG{G}_i$ for $i$ up to $i^*$
takes time $O(n \cdot 2^{i^*})$ which is $O(n \cdot \min\{\lvert S \rvert, \lvert 
V\setminus S \rvert\})$. This will allow us to bound the total running time by $O(n^2)$.

\begin{procedure}
\caption{2IsolatedSetLevel($G$, $i$)} 
\label{alglevel}
\ForEach{$\G \in \set{G, \revG{G}}$}{
	\tcc{$2^i < \max_{v \in V}{\InDeg_\G(v)} \Longrightarrow \bad{\G}{i} \ne \emptyset$}
	construct $\G_i = (V, E_i)$ with 
	$E_i = \cup_{v\in V}\{\text{first } 2^i \text{ edges in } \In_\G(v)\}$\;
	$\bad{\G}{i} = \{v \mid \InDeg_\G(v) > 2^i\}$\;
	$S \leftarrow \topsccex(\G_i, \bad{\G}{i})$\;
	\If{$S \ne \emptyset$}{
		\Return{$(S, \emptyset)$}\;
	}
	construct flow graph $\subGv{\G}{i}(\rootv{\G}{i})$
	\tcc*[f]{see Definition~\ref{def:subgraph}}\;
	\If{exists vertex-dominator $v$ in $\subGv{\G}{i}(\rootv{\G}{i})$}{
		$S \leftarrow \topsccex(\G_i\setminus \set{v}, \bad{\G}{i})$\;
		\Return{$(S, \set{v})$}
	}
	\ElseIf{$\lvert \bad{\G}{i} \rvert = 1$ and 
	$\exists$ \tscc $\subsetneq V \setminus \set{\rootv{\G}{i}}$ 
	in $\G_i \setminus \set{\rootv{\G}{i}}$}{
		$S \leftarrow \topscc(\G_i\setminus \set{\rootv{\G}{i}})$\;
		\Return{$(S, \set{\rootv{\G}{i}})$}
	}
}
\Return{$(\emptyset, \emptyset)$}\;
\end{procedure}

Let $\G_i \in \set{G_i, \revG{G}_i}$. The Procedure~\ref{alglevel} first 
searches for a \tscc in $\G_i$ that does not contain a vertex of 
$\bad{\G}{i}$. If no such 
\tscc is found, the flow graph $\subGv{\G}{i}(\rootv{\G}{i})$ 
is constructed and searched for vertex-dominators. If a vertex-dominator~$v$ is 
found, a \tscc in~$\G_i \setminus \set{v}$ that does not contain a vertex of 
$\bad{\G}{i}$ is found; one can show that such a \tscc always exists.
We additionally have to consider the special case when 
$\lvert \bad{\G}{i} \rvert = 1$. In this case we have $\bad{\G}{i} = 
\set{\rootv{\G}{i}}$ and we want to detect when there exists an
almost \tscc $\G[T]$ induced by some set of vertices $T$
with respect to $\rootv{\G}{i}$ in $\G_i$
such that $V \setminus (T \cup \set{\rootv{\G}{i}})$ is not empty.
We use Procedure~$\topsccex(H, B)$ to denote the search for a \tscc induced 
by vertices $S$
in a graph~$H$ such that $S$ does not contain a vertex of~$B$. Such a \tscc can
simply be found by marking \tscc{s} in a standard \scc algorithm.
We let all procedures that search for an \scc return the set of vertices~$S$ 
in the \scc instead of the subgraph~$G[S]$.

\begin{procedure}
\caption{2IsolatedSet($G$)} 
\label{algsearch}
	$S \leftarrow \topscc(G)$\;
	\If{$S \subsetneq V$}{
		\Return{$(S, \emptyset)$}\;
	}
	\If{exists articulation point $v$ in $G$}{
		$S \leftarrow \topscc(G \setminus \set{v})$\;
		\Return{$(S, \set{v})$}\;
	}
	\Return{$(\emptyset, \emptyset)$}\;
\end{procedure}

If no call to Procedure~\ref{alglevel} could identify an (almost) top or 
bottom \scc, we check
in Procedure~\ref{algsearch} whether the graph is strongly connected
and either make progress by separating strongly connected components
from each other or by finding an articulation point in the strongly
connected graph. If an articulation point $v$ is found, disjoint top and 
bottom \scc{s} exist after the removal of the articulation point~$v$. 
Procedure~\ref{algsearch} returns a top \scc in $G \setminus \set{v}$ 
in this case. If the graph~$G$ is strongly connected and does not contain an
articulation point, then $G$ is a \vscc. In this case the 
Procedure~\ref{algsearch} returns the empty set, the recursion stops, and 
$\rec(G)$ returns $G$.

Whenever the algorithm identifies an (almost) top or bottom \scc induced by a 
set of vertices $S$,
it recursively
calls itself on $G[S \cup Z]$ and $G[V \setminus S]$ for $Z = \emptyset$ or 
$Z = \set{v}$, respectively. We use Lemma~\ref{lem:newtscc} to show
that in this case every \vscc of $G$ is completely contained in either
$G[S \cup Z]$ or $G[V \setminus S]$, which will imply the correctness of the algorithm.
\begin{theorem}[Correctness]\label{th:corr}
Let $G$ be a simple directed graph. $\rec(G)$ computes the \vscc{s} of $G$.
\end{theorem}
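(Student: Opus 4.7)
The plan is to proceed by strong induction on $|V|$. Termination requires verifying that in every recursive invocation the two subproblems $G[S \cup Z]$ and $G[V \setminus S]$ are strictly smaller than $G$, i.e., both $S \ne \emptyset$ and $V \setminus (S \cup Z) \ne \emptyset$ hold. For $\levelsearch$ this is ensured because the returned $S$ is always disjoint from the non-empty set $\bad{\G}{i}$, and in the special $\lvert \bad{\G}{i} \rvert = 1$ branch the procedure additionally requires a proper subset of $V \setminus \set{\rootv{\G}{i}}$; for $\allsearch$ it follows because either $S \subsetneq V$ is enforced by the first check, or $G$ is strongly connected with an articulation point $v$, in which case $G \setminus \set{v}$ has at least two \scc{s} and therefore $V \setminus (S \cup \set{v})$ is non-empty.

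For the base case, when $\allsearch$ returns $(\emptyset, \emptyset)$, Procedure~\ref{algsearch} certifies that $G$ is strongly connected and contains no strong articulation point, which by the equivalent definition of \vscc{s} makes $G$ itself a \vscc, so returning $\set{G}$ is correct.

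For the inductive step the key claim to establish is that whenever $(S, Z)$ is returned with $S \ne \emptyset$, either $Z = \emptyset$ and $S$ induces a \tscc or \bscc of $G$, or $Z = \set{v}$ and $S$ induces an almost \tscc or almost \bscc of $G$ with respect to $v$. For $\allsearch$ this is immediate from the procedure, using that when $v$ is a strong articulation point of strongly connected $G$ the returned $S$ must have an incoming edge from $v$ in $G$ (otherwise $S$ would itself be a \tscc of $G$, contradicting strong connectivity). For $\levelsearch$ with $\G = G$ the claim follows from Lemma~\ref{lem:tscc} and Corollary~\ref{cor:corr}, combined with the observation that once the initial $\topsccex$ call has returned empty, any \tscc $S \subseteq \good{\G}{i}$ subsequently found in $\G_i \setminus \set{v}$ must possess an incoming edge from $v$ in $\G_i$ and hence in $\G$; the $\G = \revG{G}$ case is symmetric, via \tscc{s}/almost \tscc{s} in $\revG{G}$ corresponding to \bscc{s}/almost \bscc{s} in $G$.

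Granted this characterization, I would invoke Lemma~\ref{lem:newtscc} (for almost \tscc{s}), its dual obtained by applying the same lemma in $\revG{G}$ (for almost \bscc{s}), and the trivial fact that a proper \tscc or \bscc has no boundary-crossing edges, to conclude that no vertex of $W := V \setminus (S \cup Z)$ is 2-vertex strongly connected in $G$ to any vertex of $S$. Consequently the vertex set of every \vscc of $G$ lies entirely in $S \cup Z$ or entirely in $V \setminus S$; since \vscc{s} are maximal induced 2-vertex strongly connected subgraphs, the \vscc{s} of $G$ contained in $S \cup Z$ coincide with the \vscc{s} of $G[S \cup Z]$, and analogously for $V \setminus S$. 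The inductive hypothesis applied to the two strictly smaller recursive subproblems then yields exactly the \vscc{s} of $G$. I expect the main obstacle to be the characterization step for $\levelsearch$, where the three branches — a direct \tscc in $\G_i$, a \tscc in $\G_i \setminus \set{v}$ after locating a vertex-dominator, and the degenerate $\lvert \bad{\G}{i} \rvert = 1$ branch in which $\rootv{\G}{i}$ itself plays the role of the almost-\tscc vertex — each have to be matched with the appropriate statement from Section~\ref{sec:subgraph}, with the singleton case relying on the explicit proper-subset check in Procedure~\ref{alglevel} rather than on the flow-graph machinery.
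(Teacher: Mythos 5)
Your proof is correct and takes essentially the same route as the paper's (which proves the statement in the appendix as the $k=2$ case of Theorem~\ref{th:kcorr}): you characterize each returned pair $(S,Z)$ as a top/bottom \scc or an almost top/bottom \scc via Lemma~\ref{lem:tscc} and Corollary~\ref{cor:corr} — inlining what the paper packages as Lemma~\ref{lem:ksearch} — then split using Lemma~\ref{lem:newtscc} and its reverse-graph dual, and induct, with the same base case via strong connectivity plus absence of an articulation point. The only detail to add is that in the dominator branch with $\lvert \bad{\G}{i} \rvert = 1$ the dominator $v$ cannot equal the root $\rootv{\G}{i}$, so $V \setminus (S \cup Z)$ still contains that blue vertex, which completes your non-emptiness (termination) argument in that subcase.
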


By stopping the recursion when the number of vertices is a small 
constant and distinguishing between the number of vertices $n'$ at the current
level of the recursion and the total number of vertices $n$,
we can show that the runtime of $O(n' \cdot \min\{\lvert S \rvert, \lvert 
V\setminus S \rvert\})$ without recursion leads to a total runtime of $O(n^2)$.

\begin{theorem}[Runtime]\label{th:timeV1}
	Procedure~\ref{alg1} can be implemented in time $O(n^2)$.
\end{theorem}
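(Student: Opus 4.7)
The plan is to bound the cost of a single recursive call (excluding its recursive subcalls) by $O(n' \cdot \min\{|S|, |V \setminus S|\})$, where $n' = |V|$ at the current call and $(S, Z)$ is the set returned by the successful search, and then to combine these bounds to obtain $O(n^2)$ overall. One invocation of Procedure~\ref{alglevel} at level $i$ builds $G_i$ and $\revG{G}_i$ in $O(n' \cdot 2^i)$ time since each vertex keeps at most $2^i$ incoming edges, and then runs the linear-time subroutines ($\topscc$, $\topsccex$, together with a dominator computation on $\subGv{\G}{i}(\rootv{\G}{i})$) on these subgraphs; the total is $O(n' \cdot 2^i)$. Summing geometrically over $i = 1, \dots, i^*$, the cost up to and including the successful iteration $i^*$ is $O(n' \cdot 2^{i^*})$; Procedure~\ref{algsearch}, if reached, runs in $O(m')$ time on the current graph, and since $m' \le n' \gamma$ with $\gamma \le 2^{i^*}$ when Procedure~\ref{algsearch} is invoked, this also fits within $O(n' \cdot 2^{i^*})$.

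The heart of the proof is to show that whenever iteration $i^*$ returns a non-empty $S$, both $|S|$ and $|V \setminus S|$ are $\Omega(2^{i^*})$. The first bound $|S| > 2^{i^*-1}$ is forced by the failure of the previous iteration: by Lemma~\ref{lem:level}, every (almost) top or bottom \scc of size at most $2^{i^*-1}$ consists entirely of white vertices at level $i^*-1$, and by Lemma~\ref{lem:find} such a set would already have been returned then. For the complementary bound I would exploit that Procedure~\ref{alglevel} searches both $G$ and $\revG{G}$ at every level: if $S$ is an (almost) \tscc associated with a vertex $v$, then in the strongly-connected case the dual (almost) \bscc is disjoint from $S \cup \{v\}$, sits inside $V \setminus S$, and by the same level-$(i^*-1)$ failure reasoning applied to $\revG{G}$ its size also exceeds $2^{i^*-1}$. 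The remaining success branches (not-strongly-connected graph detected by $\topsccex$, the special $|\bad{\G}{i}| = 1$ case, and the articulation-point branch of Procedure~\ref{algsearch}) are handled analogously.

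Combining these ingredients gives non-recursive work $O(n' \cdot \min\{|S|, |V \setminus S|\})$ per call. Writing $n'_1 \le |S| + 1$ and $n'_2 = |V \setminus S|$ for the two subproblem sizes (with $n'_1 + n'_2 \le n' + 1$), the elementary inequality $n' \cdot \min\{n'_1, n'_2\} \le 2 \, n'_1 n'_2$ turns the bound into the recurrence $T(n') \le T(n'_1) + T(n'_2) + O(n'_1 n'_2)$, which a direct induction on $n'$ shows solves to $T(n) = O(n^2)$; halting the recursion once $n' = O(1)$ absorbs the base-case cost. The main obstacle is the complementary size bound on $|V \setminus S|$, which demands a careful case analysis across the four ``success'' branches of Procedures~\ref{alglevel} and~\ref{algsearch} together with correct bookkeeping for the vertex $v \in Z$ that may be duplicated across the two subproblems.
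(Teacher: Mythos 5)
Your plan follows essentially the same route as the paper's own proof (given for general $k$ in the appendix): per-level cost $O(n'\cdot 2^i)$ summed geometrically, the two-sided bound $\Omega(2^{i^*})$ on both $\lvert S\rvert$ and $\lvert V\setminus S\rvert$ obtained from the failure of level $i^*-1$ on both $G$ and $\revG{G}$ together with the disjoint dual (almost) bottom \scc, and an induction on the resulting divide-and-conquer recurrence with a constant-size base case (the paper uses $f(n')\le f(n_v+1)+f(n'-n_v)+n_v n'$ with $n_v=\min(\lvert S\rvert,\lvert V\setminus S\rvert-1)$, which is equivalent to your $O(n'_1 n'_2)$ formulation). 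The only detail you gloss over, which the paper treats explicitly, is that constructing $G_i$ within the current recursive subgraph requires lazily purging obsolete adjacency-list entries, a cost charged globally as $O(m)=O(n^2)$ rather than per level.
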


\section{Extension to \kscc{s}}\label{sec:kscc_short}
For any integral constant~$k > 2$ the presented algorithm extends to computing 
the $k$-edge and the $k$-vertex strongly connected components.
\infull{In this section we outline the necessary changes, the details including all proofs are 
given in Appendix~\ref{sec:kscc}.}
\inshort{In this section we outline the necessary changes, see the full version for
more details.}

Let an \emph{element} of a graph $G$ denote an edge when \kescc{s} are searched for
and a vertex when \kvscc{s} are searched for.
We first extend the concepts of bridges, articulation points, 
and dominators from a single element to sets of elements with size less than $k$.
A \emph{separator w.r.t.\ $k$-connectivity} (\emph{$k$-separator})
is a minimal set of elements such 
that the set contains less than $k$ elements and its removal from the graph 
increases the number of \scc{s} in the graph. Two distinct vertices
$u$ and $v$ are \emph{$k$-\textup{(}strongly-\textup{)}connected} if they are 
strongly connected and they remain strongly connected after the 
removal of any less than $k$ elements different from $u$ and $v$ from $G$. 
The \emph{$k$-strongly connected
components} (\kscc{s}) of a graph $G$ are its maximal subgraphs $G[S]$ such that every
pair of distinct vertices $u$ and $v$ in $S$ is $k$-connected in $G[S]$.%
\infull{For vertex-connectivity this definition
allows for degenerate \kvscc{s} with $k$ or less vertices. 
Given the \kvscc{s}, the degenerate \kvscc{s} can be identified in linear time.}

In a flow graph $G(r)$ a \emph{dominator $Z$ w.r.t.\ $k$-connectivity} 
(\emph{$k$-dominator}) is a \emph{minimal} set of less than 
$k$ elements in $G(r) \setminus \set{r}$ such that
there exists a vertex $u \in G(r) \setminus (\set{r} \cup Z)$
such that $u$ is reachable from $r$ and every path from $r$ to $u$
contains an element of $Z$. \infull{We say that $Z$ \emph{$k$-dominates} $u$ in $G(r)$.
Note that, in contrast to $k$-separators, the removal of a $k$-dominator from $G$ 
might not increase the number of \scc{s} but instead remove edges between \scc{s}.}
A $k$-dominator in a flow graph $G(r)$ and a $k$-separator in a graph $G$ can
for edge-connectivity be found in time $O(m \log n)$~\cite{Gabow95} and for 
vertex-connectivity in time $O(mn)$~\cite{Even75,Galil80,HenzingerRG00,Gabow06}.

A set of vertices $T$ induces an \emph{almost \tscc w.r.t.\ $k$-connectivity} 
(\emph{$k$-almost \tscc}) in $G$ with respect to a set of elements $Z$ with 
$\lvert Z \rvert < k$ if $G[T]$ is a \tscc in $G \setminus Z$ but 
has, for vertex-connectivity, incoming edges from \emph{each of} the vertices in $Z$, 
or, for edge-connectivity, \emph{all} the edges in $Z$ as incoming edges in $G$. 

We adapt our algorithm as follows. For edge-connectivity we use different flow 
graphs: \enum{1} We contract all vertices in 
$\bad{G}{i}$ to a single vertex, while keeping all edges between the vertices in 
$\bad{G}{i}$ and the remaining vertices
as parallel edges. \enum{2} We take the new contracted vertex as the root of the 
flow graph. With these definitions it is rather straightforward to extend the algorithm
to $\kescc{s}$.

The extension to $k > 2$ is more complicated for vertex-connectivity. 
In particular, we have to deal with the case $0 < \lvert \bad{G}{i} \rvert < k$.
Note that in this case we cannot use an additional vertex that we connect to 
the vertices of $\bad{G}{i}$ as the root in the flow graph because the vertices
of $\bad{G}{i}$ would be a $k$-dominator in this flow graph independent of the 
underlying graph $G$. To be able to identify a set $Z \cap \bad{G}{i} \ne \emptyset$
with $\lvert Z \rvert < k$ for which a $k$-almost \tscc exists in $G$, 
we use $\lvert \bad{G}{i} \rvert < k$ different flow graphs.
If the search in the $\lvert \bad{G}{i} \rvert$
flow graphs is not successful, we additionally search for a 
$(k-\lvert \bad{G}{i} \rvert)$-separator in $G_i \setminus \bad{G}{i}$\inshort{.}\infull{
to detect the case $Z \supseteq \bad{G}{i}$.
If such a $(k-\lvert \bad{G}{i} \rvert)$-separator $Z'$ exists, 
then $Z = Z' \cup \bad{G}{i}$ contains less 
than $k$ vertices and there exists a $k$-almost \tscc induced by some set of vertices 
$T \subsetneq V \setminus Z$ in $G_i$.}
These changes give the following result.
\begin{theorem}
For any integral constant $k > 2$ \kescc{s} can be computed in time $O(n^2 \log n)$ 
and \kvscc{s} in time $O(n^3)$. \escc{s} can be computed in time $O(n^2)$.
\end{theorem}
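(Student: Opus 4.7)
The plan is to treat the three bounds separately, in each case recycling the framework of Sections~\ref{sec:subgraph} and~\ref{sec:algo}: the hierarchy $G_i$ remains the same, Procedure~\ref{alg1} is unchanged in structure, and what varies is the definition of ``almost \tscc'', the construction of the flow graph (Definition~\ref{def:subgraph}), the subroutine used to find $k$-dominators and $k$-separators, and the bound on the per-level work fed into the telescoping argument of Theorem~\ref{th:timeV1}. For \escc{s} I would replay the whole Section~\ref{sec:subgraph}--\ref{sec:algo} argument with ``vertex-dominator'' replaced by ``edge-dominator'', ``articulation point'' by ``strong bridge'', and the almost \tscc defined w.r.t.\ a single edge; since Lemmas~\ref{lem:newtscc}--\ref{lem:finddom} rest only on reachability, they translate verbatim, and because edge-dominators and strong bridges are still linear-time computable, the per-level cost stays at $O(n\cdot 2^i)$ with $|S|=\Omega(2^i)$ at the successful level, yielding the $O(n^2)$ bound exactly as in Theorem~\ref{th:timeV1}.

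For \kescc{s} with $k>2$ I would modify Definition~\ref{def:subgraph} by contracting all of $\bad{G}{i}$ into one super-vertex $\rootv{G}{i}$, keeping parallel edges; edge-connectivity is preserved by identifying sources as long as edge multiplicities are kept, and this contraction uniformly handles the troublesome case $|\bad{G}{i}|=1$. The adapted Corollary~\ref{cor:corr} and Lemma~\ref{lem:finddom} then say that any edge set $Z$ with $|Z|<k$ supporting a $k$-almost \tscc in $G$ appears as a $k$-dominator of the contracted flow graph, provided every edge of $Z$ is present in $G_i$. Plugging Gabow's $O(m\log n)$ $k$-dominator algorithm~\cite{Gabow95} into Procedure~\ref{alglevel} makes each level cost $O(n\cdot 2^i\log n)$; geometric summation to the successful level gives $O(n\cdot|S|\log n)$ per recursive call, and the recursion-tree analysis of Theorem~\ref{th:timeV1} yields $O(n^2\log n)$.

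The hardest case is \kvscc{s} with $k>2$, because contraction would destroy vertex-disjointness. Instead, for each $r\in\bad{G}{i}$ I would build a separate flow graph by taking $r$ as root and adding auxiliary edges from $r$ to $\bad{G}{i}\setminus\set{r}$; the claim is that any $Z$ with $|Z|<k$, $Z\cap\bad{G}{i}\neq\emptyset$, and $Z\not\supseteq\bad{G}{i}$ becomes a $k$-dominator in at least one of these at most $k-1$ flow graphs, while the complementary case $Z\supseteq\bad{G}{i}$ is captured by an additional $(k-|\bad{G}{i}|)$-vertex-separator call on $G_i\setminus\bad{G}{i}$. Using the $O(mn)$ vertex-separator algorithm of~\cite{Even75,Galil80,HenzingerRG00,Gabow06}, each level costs $O(n^2\cdot 2^i)$ (since $k$ is constant and at most $k$ such searches are performed); geometric summation to the successful level gives $O(n'^2\cdot\min(|S|,|V'\setminus S|))$ per recursive call, and one extra $n$ factor on top of Theorem~\ref{th:timeV1}'s telescoping produces the $O(n^3)$ bound. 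The main obstacle lies exactly in this vertex case: I must carefully verify that the multi-root flow graph construction together with the auxiliary separator call jointly exhaust every possible intersection pattern of $Z$ with $\bad{G}{i}$, and re-prove the analogue of Lemma~\ref{lem:finddom} when the root is itself a blue vertex rather than an auxiliary node. Once this case analysis is settled, the runtime bookkeeping is routine.
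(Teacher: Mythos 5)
Your overall skeleton is the paper's: the same hierarchy $G_i$, contraction of $\bad{G}{i}$ into the root for edge connectivity, per-root flow graphs plus a separator call for vertex connectivity, Gabow's $O(m\log n)$ routine for edge $k$-dominators, the $O(mn)$ vertex-separator routine, and the same recursion-tree charging. But there are two concrete gaps. First, in the \kvscc case you only handle $0<\lvert\bad{G}{i}\rvert<k$: your ``one flow graph per $r\in\bad{G}{i}$'' with ``at most $k-1$ flow graphs'' silently assumes $\lvert\bad{G}{i}\rvert<k$, whereas at small levels $i$ almost every vertex is blue, so this prescription would mean up to $\Theta(n)$ dominator computations of cost $O(mn)$ each per level, destroying the $O(n^3)$ bound. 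The paper splits into two regimes (Definition~\ref{def:kvsubgraph}): for $\lvert\bad{G}{i}\rvert\ge k$ a \emph{single} flow graph with an artificial root joined to all of $\bad{G}{i}$ suffices, and is safe precisely because no $Z$ with $\lvert Z\rvert<k$ can contain all of $\bad{G}{i}$, so the spurious all-blue dominator you rightly avoid cannot occur; only for $0<\lvert\bad{G}{i}\rvert<k$ does it use your per-root graphs plus the $(k-\lvert\bad{G}{i}\rvert)$-separator test. Also, the ``exhaust every intersection pattern of $Z$ with $\bad{G}{i}$'' claim you defer is exactly Corollaries~\ref{cor:kcorrdom} and~\ref{cor:kfind} together with Lemma~\ref{lem:specialcase} (note your stated claim omits the pattern $Z\cap\bad{G}{i}=\emptyset$, which must also be caught), and one additionally needs Lemma~\ref{lem:ktsccindom} to guarantee that a white \tscc in $G_i\setminus Z$ actually exists once a dominator is found, and Lemma~\ref{lem:ksearch} to ensure both recursion parts are nonempty.

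Second, for \escc{s} the ``verbatim replay'' of Definition~\ref{def:subgraph} with edge-dominators does not quite work: the auxiliary edges $(\rootv{G}{i},b)$, $b\in\bad{G}{i}$, are not edges of $G$, yet they can be edge-dominators of the flow graph (e.g.\ when some blue vertex is unreachable from all other blue vertices in $G_i$). If the dominator search returns such an artificial edge $e$, then $G_i\setminus\set{e}$ contains no \tscc avoiding $\bad{G}{i}$, the level is abandoned, and the guarantee of Lemma~\ref{lem:find} (and hence the size bound of Corollary~\ref{cor:ksize} that $\lvert S\rvert=\Omega(2^{i^*})$ at the successful level) fails, which is what the $O(n\cdot\min\{\lvert S\rvert,\lvert V\setminus S\rvert\})$ charging and the $O(n^2)$ bound rest on. The fix is what you already propose for $k>2$ and what the paper does for all $k$ (Definition~\ref{def:kesubgraph}): contract $\bad{G}{i}$ into the root keeping parallel edges, so every edge of the flow graph is a real edge of $G$; the $O(n^2)$ bound for \escc{s} is then Theorem~\ref{th:ktimeE1} with $k=2$. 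Finally, a minor point: the recurrences $f(n')\le f(n_v+k)+f(n'-n_v)+n_v(n')^2$ and its $\log$-weighted edge analogue need constant-size base cases where the recursion stops and the known $O(mn^2)$, resp.\ $O(mn)$, algorithms are invoked (Theorems~\ref{th:ktimeE1} and~\ref{th:ktimeV1}); this is routine but should be said.
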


\section*{Acknowledgements}
We would like to thank Giuseppe Italiano for suggesting the problem and 
Slobodan Mitrovi\'c for helpful discussions. V.~L.\ would like to thank 
Christian Tschabuschnig for his help in improving the readability of the algorithms.

\printbibliography[heading=bibintoc] 

\infull{\clearpage
\section*{Appendix}
\appendix
\section{Missing proofs and extension to \kscc{s}}\label{sec:kscc}

In this section we describe the algorithms for $k$-edge and $k$-vertex strongly
connected components and prove their correctness and running time. This in
particular implies the results presented in the main part of the paper.
We use the definitions of Sections~\ref{sec:prelim} and~\ref{sec:kscc_short}, 
otherwise the section is self-contained. 
We analyze most parts simultaneously for edge and vertex connectivity but
explicitly point out important differences.
Note that the text is meant to be read consistently for either edge or 
vertex connectivity, e.g., when a statement about $k$-connectivity is interpreted 
as a statement about $k$-vertex connectivity, then any occurrence of ``element'' 
has to be interpreted as a vertex and not as an edge.

\subsection{New top \scc{s} and $k$-dominators in subgraphs}\label{sec:ksubgraph}
We introduce the notion of an \emph{isolated set} with respect to $k$-connectivity,
\emph{$k$-isolated set} for short, for the informal introduction of our approach. 
Let for edge connectivity a \emph{$k$-isolated set}~$S \subsetneq V$ be a set of 
vertices with less than $k$ incoming edges and let $U = V \setminus S$. 
Recall that we use ``incoming edges 
of a set of vertices $S$'' to denote the edges from $V \setminus S$ to $S$.
For vertex connectivity, let a \emph{$k$-isolated set}~$S$ be a set of 
vertices with (1) incoming edges from less than $k$ vertices and for which (2) 
the set of vertices~$U \subseteq V \setminus S$ that have no edges to vertices 
of $S$ is not empty. 
Clearly, for both edge and vertex connectivity, no element of $S$ is 
$k$-connected to any vertex of $U$ 
because there cannot exist $k$ disjoint paths from any vertex of $U$ to 
any vertex of $S$. 
the graph is strongly connected and does not contain a $k$-separator, i.e., the 
graph is $k$-connected.
Think of the following simple recursive algorithm to output the \kscc{s} of graph~$G$: 
\begin{compactenum}
	\item Find a $k$-isolated set $S$ in $G$.
	\item If none exists, output $G$.
	\item Otherwise recurse on the graphs induced by $S$ and $U$; for 
	vertex connectivity add the vertices in $V \setminus (S \cup U)$ to both $S$ 
	and $U$ before the recursion. 
\end{compactenum}
Our algorithms follow this scheme for specific $k$-isolated sets.
As the recursion depth can be $\Theta(n)$, to achieve an $o(mn)$
running time, we cannot afford to look at all edges in each level of recursion.
Thus our algorithms are based on the following question:
\begin{quote}
\emph{Can we identify $k$-isolated sets by searching in a proper subgraph of $G$?}
\end{quote}
We first explain the $k$-isolated sets our algorithms identify, formalize the 
correctness idea outlined above for this kind of $k$-isolated sets, and provide an 
intuition why and when they can be identified in proper subgraphs of $G$.
We then formalize the latter in the remaining part of this subsection. 
The results in this subsection are formulated for general subgraphs such that 
they can also be used for the $O(m^2 / \log n)$-algorithm for \escc{s} presented 
in Appendix~\ref{sec:local}. Our algorithms for \kscc{s} are described in 
Subsection~\ref{sec:kalgo}.

The $k$-isolated sets we identify are \tscc{s} and $k$-almost \tscc{s}. 
Recall the definition of $k$-almost \tscc{s}: 
A set of vertices $T$ induces a \emph{$k$-almost \tscc} in $G$ with 
respect to a set of elements $Z$ with 
$\lvert Z \rvert < k$ if $G[T]$ is a \tscc in $G \setminus Z$ but 
has, for vertex connectivity, incoming edges from \emph{each of} the vertices in $Z$, 
or, for edge connectivity, \emph{all} the edges in $Z$ as incoming edges in $G$.  
For the sake of a compact formulation, we introduce the notation \nt{T}{Z}{G}
for a $k$-almost \tscc in $G$ with respect to a set of elements $Z$ 
with $\lvert Z \rvert < k$ induced by the set of vertices~$T$.
For edge connectivity each $k$-almost \tscc is induced by a $k$-isolated set. 
For vertex connectivity this holds whenever there exist vertices without edges 
to the $k$-almost \tscc; our algorithm only identifies $k$-almost \tscc{s} that are 
induced by $k$-isolated sets.

For the intuition behind the definition of $k$-almost \tscc{s}, think of a 
strongly connected graph that contains a $k$-separator $Z$ (see 
Section~\ref{sec:kscc_short}). Recall that we 
require $k$-separators to be \emph{minimal} (with respect to set inclusion), 
i.e., $Z$ is a minimal set 
of less than $k$ elements such that $G \setminus Z$ is not strongly connected.
For a $k$-separator~$Z$ there exist both a \tscc and a \bscc in 
$G \setminus Z$ that were adjacent 
to $Z$ in $G$ and are disjoint. Let $T$ be the vertices in a \tscc in 
$G \setminus Z$ (there can be more than one for vertex connectivity). Observe 
that $T$ is a $k$-isolated set in $G$ and induces a \nt{T}{Z}{G}. Further, 
\tscc{s} can be identified in time linear in the number of edges in $G$ by a 
standard \scc algorithm by simply marking the \scc{s} without incoming edges.
To see why the notion of $k$-isolated sets is helpful when searching in a 
subgraph of $G$, note the following: 
If $T$ contains only a few vertices, then each vertex in $T$ has low 
in-degree in~$G$ because all incoming edges of a vertex of~$T$ in~$G$ either come 
from other vertices in~$T$ or, for edge connectivity, are the edges in $Z$, and, 
for vertex connectivity, come from the vertices in~$Z$. 
In our algorithms for \kscc{s} we search for \tscc{s} and $k$-almost 
\tscc{s} in the subgraph of $G$ induced by vertices with low in-degree.
We do the same on $\revG{G}$ to detect small \bscc{s} and small $k$-almost \bscc{s}
(defined analogously).

To identify a \nt{T}{Z}{G}, we do not only have to find the \tscc $G[T]$ in 
$G \setminus Z$ but first have to identify the set of elements $Z$.
Assume there exists a vertex $r \notin Z$ that is not in $T$ but 
can reach all elements in $Z$. Since $G[T]$ is a \tscc in $G\setminus Z$,
it follows that $Z$ $k$-dominates every vertex of $T$ in the flow graph $G(r)$.
We formalize this observation and the intuition about the correctness of our 
approach of repeatedly identifying \tscc{s} and $k$-almost \tscc{s} (that are also 
$k$-isolated sets) in the following lemma. 
\begin{lemma}[Extension of Lemma~\ref{lem:newtscc} to $k$-connectivity]
\label{lem:knewtscc}
Let $T$ and $Z$ be such that a \nt{T}{Z}{G} exists. Let $W = V \setminus T$ 
for edge connectivity and let $W = V \setminus (T \cup Z)$ for 
vertex connectivity. Assume $W \ne \emptyset$. Then 
no vertex of~$W$ is $k$-connected to any vertex of~$T$.
Additionally, the set $Z \cap G(r)$ is a $k$-dominator in~$G(r)$ for 
every~$r \in W$ for which $Z \cap G(r)$ is not empty.
\end{lemma}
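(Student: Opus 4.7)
The plan is to prove the two claims separately, mirroring the argument for the special case $k=2$ in Lemma~\ref{lem:newtscc} and leaning on a Menger-type cut argument for claim~(1) and a reachability analysis for claim~(2).

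For the first claim, the key observation is that every path in $G$ from a vertex $w \in W$ to a vertex $t \in T$ must traverse some element of $Z$. For edge connectivity this is immediate from the definition of $\nt{T}{Z}{G}$: once the edges in $Z$ are removed, $G[T]$ is a \tscc of $G \setminus Z$, so no edge enters $T$ from $V \setminus T$ except those in $Z$. For vertex connectivity, $G[T]$ being a \tscc of $G \setminus Z$ means no incoming edges to $T$ survive in $G \setminus Z$, so every incoming edge of $T$ in $G$ originates in a vertex of $Z$; any path from $w \in W = V \setminus (T \cup Z)$ to $T$ must therefore contain an internal vertex of $Z$. Since $\lvert Z \rvert < k$, the standard edge- respectively vertex-version of Menger's theorem implies that there can be at most $\lvert Z \rvert < k$ edge-disjoint or internally vertex-disjoint paths from $w$ to $t$. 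Hence $w$ and $t$ are not $k$-connected in $G$, and so certainly not in any induced subgraph containing both.

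For the second claim, I would first show that if $Z \cap G(r)$ is non-empty then some $t \in T$ is reachable from $r$ and therefore lies in $G(r)$. In the edge case, every edge in $Z$ has its head in $T$ by definition of $\nt{T}{Z}{G}$; if such an edge is in $G(r)$ then its head is reachable from $r$. In the vertex case, every vertex in $Z$ has an outgoing edge to $T$, and if it is reachable from $r$ then so is its out-neighbour in $T$. Fix any such $t$. Because $r \in W$, we have $r \notin T$ (and for vertex connectivity also $r \notin Z$), so $r$ does not reach $t$ in $G \setminus Z$; consequently every $r$-to-$t$ path in $G$ contains an element of $Z$. Any such element on a path from $r$ to $t$ is itself reachable from $r$ (and its endpoints are reachable, in the edge case) and so lies in $Z \cap G(r)$.

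Thus $t \in G(r) \setminus (\set{r} \cup (Z \cap G(r)))$ (using $T \cap Z = \emptyset$ in the vertex case, and that $t$ is a vertex while $Z \cap G(r)$ is a set of edges in the edge case) is reachable from $r$, and every path from $r$ to $t$ uses an element of $Z \cap G(r)$, while $\lvert Z \cap G(r) \rvert \le \lvert Z \rvert < k$. This is exactly the defining dominating property of a $k$-dominator. The main obstacle is the minimality requirement in the definition of a $k$-dominator: if $Z \cap G(r)$ happens to contain elements that are not needed to dominate any remaining vertex, one passes to a minimal subset with the same dominating property, which is still contained in $Z \cap G(r)$ and inherits the bound $<k$; the structural requirement that every element of $Z$ is adjacent to $T$ (by definition of $\nt{T}{Z}{G}$) together with the reachability of $t$ via that element makes it straightforward to verify that no strictly smaller subset is forced upon us in the generic case, so the statement of the lemma goes through as written.
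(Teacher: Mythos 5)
Your proof follows the same route as the paper's: both claims rest on the single observation that, since $G[T]$ is a \tscc in $G \setminus Z$, every path in $G$ from a vertex of $W$ to a vertex of $T$ must contain an element of $Z$. For the first claim the paper argues even more tersely than you do, directly from the removal-based definition of $k$-connectedness (deleting the fewer than $k$ elements of $Z$, none of which equals $w$ or $t$, leaves no $w$-to-$t$ path), so your detour through Menger's theorem is harmless but unnecessary. For the second claim you are in fact more careful than the paper: the paper simply asserts that the path property makes $Z \cap G(r)$ a $k$-dominator, whereas you additionally verify that some $t \in T$ lies in $G(r)$ whenever $Z \cap G(r) \ne \emptyset$ and that the elements of $Z$ met by $r$-to-$t$ paths are themselves in $G(r)$.

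The one point you do not settle is the minimality requirement in the definition of a $k$-dominator, and your appeal to ``the generic case'' is not a proof: for $k > 2$ it can genuinely happen that a proper subset of $Z \cap G(r)$ already dominates a vertex of $T$ (for instance when some element of $Z$ is reachable from $r$ only through $T$ itself, so that a single entering element of $Z$ lies on every $r$-to-$T$ path), in which case $Z \cap G(r)$ is not minimal as literally defined. The paper's own proof is silent on exactly this point, so your attempt is no weaker than the published argument. The honest conclusion, which your ``pass to a minimal subset'' remark already provides, is that some nonempty subset of $Z \cap G(r)$ is a $k$-dominator in $G(r)$; this weaker form is all that the later uses of the lemma (Corollary~\ref{cor:kfind} and the dominator search in Procedure~\ref{algklevel}) actually need, since there the algorithm only requires that the $k$-dominator search in the flow graph succeed.
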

\begin{proof}
	By the definition of a \tscc, the vertices in~$T$ are strongly connected in
	$G \setminus Z$ but have no incoming edges from vertices of~$W$ in 
	$G \setminus Z$. Hence in~$G$ every path from a vertex of~$W$ to a vertex 
	of~$T$ contains an element of $Z$. 
	This implies that $Z \cap G(r)$ is a $k$-dominator in~$G(r)$ for every~$r \in W$. 
	For vertex connectivity we 
	have that the vertices in $W$ have no edges to the vertices in~$T$ in~$G$; 
	hence no vertex of~$W$ is $k$-connected to a vertex of~$T$.
\end{proof}

Let $r \in V$ be a vertex such that there does \emph{not} exist a set of 
vertices $S$ with 
$r \notin S$ that induces a \tscc in $G$, i.e., all vertices in $V$ can be 
reached from $r$. We show below that this is a sufficient condition such that 
whenever $Z$ is a $k$-dominator in the flow graph 
$G(r)$, then there exists a set $T$ that induces a \tscc in $G \setminus Z$ and 
\nt{T}{Z}{G} indeed exists. Thus if we only want to detect \tscc{s} and $k$-almost \tscc{s}
\nt{T}{Z}{G} for which both $T$ and $Z$ do not contain $r$, we can use the 
following approach to find one of them whenever one exists: 
\begin{compactenum}
	\item Search for a 
set of vertices $T$ with $r \notin T$ that induces a \tscc in $G$. 
\item If none is found, search for a $k$-dominator in $G(r)$. 
\item If a $k$-dominator $Z$ is found, find a set of vertices $T$ with 
$r \notin T$ that induces a \tscc in $G \setminus Z$.
\end{compactenum}
We formalize the correctness of this approach with the following two lemmata.
\begin{lemma}\label{lem:ktsccindom}
	Let $r$ be a vertex that can reach all vertices in $G$ and let $G(r)$ be 
	the flow graph rooted at $r$. Let $Z$ be a $k$-dominator in $G(r)$. Then
	there exists a set of vertices $T$ with $r \notin T$ that is a \tscc in
	$G \setminus Z$.
\end{lemma}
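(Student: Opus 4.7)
The plan is to unpack the definition of a $k$-dominator into a concrete reachability statement in $G \setminus Z$, then apply the standard observation from the preliminaries that a set with no incoming edges from outside contains a \tscc.

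First, by definition of a $k$-dominator, there exists a vertex $u \in V(G(r)) \setminus (\set{r} \cup Z)$ such that every path from $r$ to $u$ in $G(r)$ contains at least one element of $Z$. Equivalently, in the graph $G(r) \setminus Z$ the vertex $u$ is not reachable from $r$; since $G(r)$ is obtained from $G$ by restricting to vertices reachable from~$r$, and since $r$ reaches every vertex of $G$ by hypothesis, $u$ is also not reachable from $r$ in $G \setminus Z$.

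Next, let $U$ be the set of vertices of $G \setminus Z$ that are not reachable from $r$ in $G \setminus Z$. We have $u \in U$, so $U \ne \emptyset$, and clearly $r \notin U$. I would then observe that in $G \setminus Z$ there can be no edge from a vertex $x \in V \setminus (U \cup Z)$ to a vertex $y \in U$: the vertex $x$ is reachable from~$r$ in $G \setminus Z$ (since $x \notin U$ and $x \notin Z$), so such an edge would make $y$ reachable from~$r$ in $G \setminus Z$, contradicting $y \in U$. Hence $U$ has no incoming edges in $G \setminus Z$ from vertices outside $U$.

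Finally, I invoke the fact recalled in the preliminaries: if a set of vertices $S$ cannot be reached by any vertex of its complement in a graph, then the induced subgraph contains a \tscc of that graph. Applying this to $U$ in the graph $G \setminus Z$ yields a set $T \subseteq U$ that induces a \tscc of $G \setminus Z$. Since $r \notin U \supseteq T$, this $T$ is the desired \tscc of $G \setminus Z$ with $r \notin T$. The only step that requires any care is the translation from ``$Z$ is a $k$-dominator in $G(r)$'' to ``there is a vertex not reachable from $r$ in $G \setminus Z$'', which is where the assumption that $r$ reaches all of $G$ is used to pass between $G(r)$ and $G$.
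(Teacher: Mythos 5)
Your proof is correct and follows essentially the same route as the paper: the paper takes $D$ to be the set of vertices dominated by $Z$ in $G(r)$ (which, under the hypothesis that $r$ reaches all of $G$, coincides with your set $U$ of vertices unreachable from $r$ in $G \setminus Z$), observes that it has no incoming edges from outside in $G \setminus Z$, and concludes it contains a \tscc of $G \setminus Z$ avoiding $r$. Your version just spells out the reachability argument slightly more explicitly; no gap.
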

\begin{proof}
Let $D$ be the vertices dominated by $Z$ in $G(r)$. Since $D$ is dominated 
by~$Z$, there are no edges from vertices in $G(r) \setminus (D \cup Z)$ to 
vertices in~$D$ in~$G \setminus Z$. Thus either $D$ contains a \tscc that does not 
contain~$r$ in $G \setminus Z$ or there are vertices in $G \setminus G(r)$ that 
have edges to vertices in~$D$. By assumption $G \setminus G(r)$ is empty.
\end{proof}
\begin{lemma}\label{lem:kdomnewtscc}
Let $G(r)$ be a flow graph for some graph $G = (V,E)$ and some vertex $r \in V$. 
Let $Z$ be a $k$-dominator in $G(r)$ and let the set of vertices $T$ with 
$r \notin T$ induce a \tscc in $G \setminus Z$. Then \nt{T}{Z}{G} exists.
\end{lemma}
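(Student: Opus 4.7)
The plan is to verify the three defining properties of $\nt{T}{Z}{G}$. Two of them are immediate: $|Z| < k$ by the definition of a $k$-dominator, and $G[T]$ being a \tscc in $G \setminus Z$ by hypothesis. The substantive content is the third: for vertex connectivity, every $z \in Z$ must have an outgoing edge into $T$ in $G$; for edge connectivity, every edge in $Z$ must be an incoming edge of $T$ in $G$. I plan to prove this by exploiting the \emph{minimality} clause in the definition of a $k$-dominator.

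The first step is a structural observation that I will reuse: since $G[T]$ is a \tscc in $G \setminus Z$, any edge $(x,y) \in E$ with $y \in T$ and $x \notin T$ must be ``blocked'' by $Z$. In the edge-connectivity case this forces $(x,y) \in Z$; in the vertex-connectivity case it forces $x \in Z$ (note that $y \notin Z$, since $y \in T$ survives in $G \setminus Z$). Otherwise the edge would remain in $G \setminus Z$, contradicting that $T$ has no incoming edges there.

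Next I would argue the third property by contradiction. Suppose some $z \in Z$ is not incident to $T$ in the required sense, and fix any $u \in T$ reachable from $r$ in $G$. Consider any path $P$ from $r$ to $u$ in $G$; since $r \notin T$ while $u \in T$, $P$ must at some point cross from $V \setminus T$ into $T$, and by the structural observation the crossing element---the edge itself in the edge case, or its tail vertex in the vertex case---lies in $Z$. By assumption $z$ is not itself such a boundary element, so the crossing element lies in $Z \setminus \set{z}$. Hence every path from $r$ to $u$ in $G$ uses an element of $Z \setminus \set{z}$, meaning $Z \setminus \set{z}$ already dominates $u$ in $G(r)$; this contradicts the minimality of $Z$, so the original assumption on $z$ fails and condition (iii) holds.

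The step I expect to require the most care is justifying the existence of a $u \in T$ reachable from $r$, since without it the minimality argument is vacuous. I would obtain it by starting from any vertex $w \in G(r) \setminus (\set{r} \cup Z)$ witnessing that $Z$ is a $k$-dominator and following in-edges of $G \setminus Z$ backwards from $w$'s SCC: by the structural observation, any such backward step keeps us within the set of vertices dominated by $Z$, so the traversal terminates inside $G(r)$ at a top \scc of $G \setminus Z$. Since the algorithm only invokes the lemma on a $T$ obtained in this way (i.e.\ on a top \scc reachable from the chosen root), the minimality argument applies in every invocation and the conclusion $\nt{T}{Z}{G}$ follows.
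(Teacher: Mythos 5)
Your proof is correct and follows essentially the same route as the paper's: the incoming edges of $T$ must be blocked by $Z$ because $G[T]$ is a \tscc in $G \setminus Z$, and an element of $Z$ not incident to $T$ would let a proper subset of $Z$ dominate $T$ in $G(r)$, contradicting the minimality built into the definition of a $k$-dominator. The reachability point you single out (needing some $u \in T$ reachable from $r$) is not resolved differently in the paper --- its proof tacitly assumes $r$ can reach $T$, which holds in every invocation of the lemma since there the root reaches all the relevant vertices --- so your extra discussion only makes explicit an assumption the paper leaves implicit.
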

\begin{proof}
	Since $T$ induces a \tscc in $G \setminus Z$, the incoming edges of $T$ are
	clearly a subset of $Z$ for edge connectivity and the edges from $Z$ to $T$ 
	for vertex connectivity. It remains to show that $T$ has an incoming edge
	for each element of $Z$, i.e., that $Z$ satisfies the minimality condition
	for $k$-almost \tscc{s}. Assume by contradiction that \nt{T}{Z'}{H} exists for 
	a proper subset $Z'$ of $Z$. Then $r$ can reach $T$ only through the elements
	of $Z'$, i.e., $Z'$ is a $k$-dominator in $G(r)$, a contradiction to the 
	minimality of $k$-dominators.
\end{proof}

Now we want to generalize this approach to subgraphs $H$ of $G$, that is,
for some appropriately chosen root~$r$,
we first search for a \tscc in $H$ not containing $r$; if no \tscc is found, 
we search for a $k$-dominator in the flow graph $H(r)$; and 
for a $k$-dominator $Z$ in $H(r)$ we find a \tscc in $H \setminus Z$
not containing~$r$.
If no \tscc is found in $H$ but a $k$-dominator $Z$ is found, this yields a 
\nt{T}{Z}{H} by Lemma~\ref{lem:kdomnewtscc}. 
But when does a \nt{T}{Z}{H} imply a \nt{T}{Z}{G}?
First think of a \tscc in $H$ induced by a set of vertices $T$.
We are only allowed to look at the edges in $H$ but assume for now we know for
each vertex whether all its incoming edges in $G$ are also contained in $H$.
If some vertex of $T$ has more incoming edges in $G$ than in $H$, we cannot
decide whether $T$ also induces a \tscc in $G$. However, if for each vertex of~$T$
all its incoming edges in $G$ are also in $H$, then $T$ induces
a \tscc in $G$ as well.
Now think of a \nt{T}{Z}{H}. If the incoming edges of each vertex in $T$ are 
present in $H$, then also the elements of $Z$ have to be contained in $H$ because 
otherwise at least one vertex of $T$ would miss an incoming edge in $H$.
Thus by the observation for \tscc{s}
above we have that \nt{T}{Z}{G} exists. We state this formally
for slightly more general graphs $H$ in the following lemma and its corollary
such that we can apply these results to all graphs we define in this work.
Consider a graph $H = (V_H, E_H)$ constructed from a 
graph $G = (V, E)$ that is not necessarily a subgraph of $G$ but has the following 
guarantee for all vertices in some set $A \subseteq V_H \cap V$: 
For each vertex~$u$ of~$A$ the incoming edges of $u$ in $H$ are the same as in~$G$.
\begin{lemma}[Extension of Lemma~\ref{lem:tscc}]\label{lem:ktscc}
Let $G = (V, E)$ and $H = (V_H, E_H)$ be two graphs with the following guarantee 
for all vertices in $A \subseteq V_H \cap V$: For each 
vertex $u \in A$ we have $\In_H(u) = \In_G(u)$. Then a set of vertices 
$T \subseteq A$ induces a \tscc in $H$ if and only if it induces a \tscc in $G$.
\end{lemma}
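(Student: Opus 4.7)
The plan is to observe that both defining features of a \tscc --- strong connectivity of the induced subgraph, and the absence of incoming edges from outside --- are fully determined by the collection $\{\In_G(u) : u \in T\}$, which by hypothesis coincides with $\{\In_H(u) : u \in T\}$. Everything else is essentially bookkeeping.

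First I would check that $G[T]$ and $H[T]$ have identical edge sets. An edge $(v,u)$ lies in $G[T]$ iff $u,v \in T$ and $(v,u) \in \In_G(u)$; since $u \in T \subseteq A$, the hypothesis gives $\In_G(u) = \In_H(u)$, so this is equivalent to $u,v \in T$ and $(v,u) \in \In_H(u)$, i.e., $(v,u) \in H[T]$. In particular, $G[T]$ is strongly connected iff $H[T]$ is strongly connected.

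Next I would check that $T$ has an incoming edge from outside in $G$ iff it has one in $H$. Any such edge in $G$ is of the form $(v,u) \in E$ with $u \in T$ and $v \in V \setminus T$; it belongs to $\In_G(u) = \In_H(u) \subseteq E_H$, and since $v$ is then the tail of an edge of $E_H$ it must lie in $V_H$, hence in $V_H \setminus T$. Thus $(v,u)$ is also an incoming edge of $T$ in $H$, and the reverse direction is symmetric. Combining this with the previous paragraph gives the claimed equivalence.

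I do not anticipate any substantive obstacle; the only point that needs a moment of care is the potential mismatch between the vertex sets $V$ and $V_H$, but this dissolves because every predecessor of a vertex $u \in T$ participates in an edge of $\In_G(u) = \In_H(u)$ and is therefore automatically a shared element of $V$ and $V_H$.
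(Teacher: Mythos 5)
Your proposal is correct and follows essentially the same route as the paper's own (very brief) proof: the hypothesis on $A$ gives $H[T]=G[T]$, so strong connectivity transfers, and $T$ has an incoming edge in $H$ iff it has one in $G$. Your extra care about the mismatch between $V$ and $V_H$ is a fine elaboration of the same argument, not a different approach.
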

\begin{proof}
By the guarantee on $A$, the incoming edges of the vertices in $T$ are the same 
in $H$ and $G$, i.e., we have $H[T] = G[T]$.
Furthermore, the set of vertices $T$ has no incoming edges in~$H$ if 
and only if it has no incoming edges in~$G$.
\end{proof}
\begin{corollary}[Extension of Corollary~\ref{cor:corr}]\label{cor:kcorr}
	Let $G$, $H$, and $A$ be as in Lemma~\ref{lem:ktscc}. For a set of vertices
	$T \subseteq A$ and a set of elements $Z$ we have that \nt{T}{Z}{H} exists
	if and only if \nt{T}{Z}{G} exists.
\end{corollary}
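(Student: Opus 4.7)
The plan is to reduce the claim to Lemma~\ref{lem:ktscc} by first observing that the set of edges entering $T$ from outside $T$ is identical in $G$ and $H$. Since $T \subseteq A$, the hypothesis gives $\In_H(u) = \In_G(u)$ for every $u \in T$, so taking the union over $u \in T$ and restricting to edges whose tail lies outside $T$ yields the same edge set in both graphs. From this, the ``incoming elements'' part of the definition of a $k$-almost \tscc transfers directly between $G$ and $H$: for edge connectivity, $Z$ is literally a subset of incoming edges of $T$, and this subset is the same whether computed in $G$ or in $H$; for vertex connectivity, the set of vertices in the host graph with an edge into $T$ coincides in $G$ and in $H$, so the requirement that $T$ receives an edge from each vertex of $Z$ is equivalent in the two graphs. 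In particular, whenever a \nt{T}{Z}{G} or a \nt{T}{Z}{H} exists, we have $T \cap Z = \emptyset$ and $Z$ is present in both $G$ and $H$.

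It then remains to establish that $T$ induces a \tscc in $G \setminus Z$ if and only if it induces a \tscc in $H \setminus Z$. For this I would apply Lemma~\ref{lem:ktscc} to the pair $G \setminus Z$ and $H \setminus Z$ with the set $A' := A \setminus Z$ (in the edge case $A' = A$, since $Z$ consists of edges). The required hypothesis $\In_{H \setminus Z}(u) = \In_{G \setminus Z}(u)$ for $u \in A'$ follows immediately from $\In_H(u) = \In_G(u)$ for $u \in A$, because exactly the same elements of $Z$ are deleted from both graphs. Since $T \cap Z = \emptyset$ we have $T \subseteq A'$, so Lemma~\ref{lem:ktscc} yields the desired equivalence of the \tscc condition. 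Combining the two equivalences (for the \tscc condition and for the ``$Z$-condition'') in both directions completes the proof.

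The main obstacle is merely bookkeeping: handling the vertex- and edge-connectivity definitions uniformly and checking that the incoming-edges-agree guarantee on $A$ propagates to $A'$ after deleting~$Z$. Both reduce to the single fact that $\In_H(u) = \In_G(u)$ for $u \in T$, which is inherited from the hypothesis $T \subseteq A$; no additional structural insight beyond Lemma~\ref{lem:ktscc} is needed.
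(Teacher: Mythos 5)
Your proposal is correct and follows essentially the same route as the paper, which states this corollary without an explicit proof, relying on exactly the observation you formalize: since $\In_H(u)=\In_G(u)$ for all $u\in T$, the incoming edges of $T$ (and hence the set $Z$ and the condition that each element of $Z$ contributes an incoming edge) coincide in $G$ and $H$, and the \tscc equivalence then follows from Lemma~\ref{lem:ktscc}. Your extra bookkeeping step of applying Lemma~\ref{lem:ktscc} to $G\setminus Z$ and $H\setminus Z$ with $A'=A\setminus Z$ is a sound way of making the paper's implicit argument precise.
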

Our algorithms identify \tscc{s} and $k$-almost \tscc{s} that are subsets of $A$
in a graph $H$ with the guarantee described above. Together with 
Lemma~\ref{lem:knewtscc} this is crucial for the correctness of our approach.

To obtain algorithms with a good running time, we want to identify a \tscc or a
$k$-almost \tscc that is a subset of $A$ in $H$ whenever a \tscc or $k$-almost \tscc that
is a subset of $A$ (and is a $k$-isolated set) exists in $G$. Note that we cannot 
say anything about \tscc{s} in $H$ that contain vertices that are not in $A$. 
Recall that we find $k$-almost \tscc{s} by searching for $k$-dominators~$Z$ in flow 
graphs of $H$ and then searching for \tscc{s} in $H \setminus Z$.
Further, recall that since we want to achieve a running time of $o(mn)$ (at least 
for edge connectivity or for $k = 2$), we cannot afford to recompute all strongly
connected components and thus cannot assume that the graph the algorithm currently
operates on is strongly connected. Hence to argue about which $k$-almost \tscc{s} in $G$ 
we can identify by the search in a flow graph of $H$, we have to consider 
the reachability of 
a set $Z$ for which a \nt{T}{Z}{G} exists from the root of the flow graph in $H$.
The flow graphs used by our algorithms contain all vertices in $V_H \setminus A$.
Thus whenever some elements of a set $Z$ are 
not reachable from the root of such a flow graph, i.e., are not in $H(r)$,
then there exists a non-empty subset of $A$ that is not reachable from 
vertices in $V_H \setminus A$. Thus
in this case there exists a \tscc in $H$ that contains only vertices in $A$
and is also a \tscc in $G$ by Lemma~\ref{lem:ktscc}.
Hence in this case our algorithms can make progress by identifying in $H$ a \tscc that 
only contains vertices in $A$ instead of searching for the $k$-almost \tscc with 
respect to $Z$.
\begin{observation}\label{obs:notreach}
Let $G = (V, E)$, $H = (V_H, E_H)$, and $A$ be as in Lemma~\ref{lem:ktscc}. 
Let $H(r)$ be a flow graph in $H$ for some root~$r \in V_H$ 
that contains all vertices in $V_H \setminus A$.
If some vertices of $A$ are not 
contained in $G(r)$, then there exists a set of vertices $S \subseteq A$
that induces a \tscc $H[S]$ in $H$.
\end{observation}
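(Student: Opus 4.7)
The plan is to produce the required \tscc among the $A$-vertices that fail to be reachable from $r$. I read the statement's ``$G(r)$'' as ``$H(r)$'' throughout, since $r \in V_H$ need not lie in $V$ and the paragraph preceding the observation talks precisely about elements ``not in $H(r)$''. Let $U \subseteq A$ denote the set of vertices of $A$ that are not reachable from $r$ in $H$; by hypothesis $U \ne \emptyset$.

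The main step is to show that $U$ has no incoming edges in $H$ from $V_H \setminus U$. For any edge $(w, u) \in E_H$ with $u \in U$, reachability of $w$ from $r$ in $H$ would immediately propagate to $u$, contradicting $u \in U$; so $w$ must also be unreachable from $r$ in $H$. The assumption that $H(r)$ contains every vertex of $V_H \setminus A$ then forces $w \in A$, and combined with its unreachability this gives $w \in U$.

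The conclusion is then immediate: since $H[U]$ is a nonempty directed graph it contains a top \scc $H[S]$ of itself, and because $U$ receives no incoming $H$-edges from outside $U$, the same is true of $S \subseteq U$. Hence $H[S]$ is already a \tscc of the whole graph $H$, with $S \subseteq U \subseteq A$, as required.

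The only real obstacle is interpretive — confirming that ``$G(r)$'' in the statement should be $H(r)$ — together with locating exactly where the hypothesis $V_H \setminus A \subseteq V_{H(r)}$ is used, namely to rule out the case $w \in V_H \setminus A$ in the edge analysis above. The Lemma \ref{lem:ktscc} guarantee on $A$ is not invoked in the proof itself; it only becomes relevant when the observation is subsequently applied to lift such a \tscc of $H$ to a \tscc of $G$.
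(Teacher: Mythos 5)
Your proof is correct and follows essentially the same reasoning the paper gives informally in the paragraph preceding the observation: the $A$-vertices unreachable from $r$ (reading the statement's ``$G(r)$'' as $H(r)$, which matches the surrounding text) have no incoming edges from outside their set, since $V_H \setminus A$ lies entirely in $H(r)$, and such a closed set must contain a \tscc of $H$. Your added remarks—that the in-degree guarantee on $A$ is not needed for the observation itself and where the hypothesis $V_H \setminus A \subseteq H(r)$ enters—are accurate.
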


We now come back to proper subgraphs of graphs.
Let $\subG{G}{h} = (\subV{h},\subE{h})$ be a subgraph of a graph 
$G = (V,E)$, i.e., $\subV{h} \subseteq V$ and $\subE{h}
\subseteq G[\subV{h}]$. We use the index $h$ to identify specific subgraphs
and the corresponding sets of edges and vertices.
Let $\good{G}{h}$ be the set of vertices in $\subG{G}{h}$
for which we can guarantee for each $u \in \good{G}{h}$ that
$\In_{\subG{G}{h}}(u) = \In_{G}(u)$ and let $\bad{G}{h}$ denote $\subV{h} 
\setminus \good{G}{h}$ (which is a subset of $V \setminus \good{G}{h}$).
If $\bad{G}{h} = \emptyset$, then $\subG{G}{h} = G$ and we can simply search for 
\scc{s} and $k$-separators in $G$ to make progress in our algorithms. Thus we
assume $\bad{G}{h} \ne \emptyset$ in the remainder of this subsection.
We next define graphs and flow graphs derived from $\subG{G}{h}$ in which we 
can identify a set $Z$ as a $k$-dominator in the flow graph whenever a 
\nt{T}{Z}{G} with $T \subseteq \good{G}{h}$ 
exists and each element of $Z$ is reachable from the root of the flow 
graph---except for one special case for vertex connectivity, which we consider
separately.
Intuitively, in the graphs derived from $\subG{G}{h}$ we view the vertices 
in $\bad{G}{h}$ as a (directed) clique. Since we do not know anything about the 
structure of the subgraph induced by $\bad{G}{h}$, this is a ``worst-case''
assumption on the connectivity of these vertices with respect to detecting
$k$-isolated sets.

We first define the graphs and flow graphs derived from $\subG{G}{h}$
for edge connectivity, then for vertex connectivity.
For edge connectivity we are able to identify a \nt{T}{Z}{G} with 
$T \subseteq \good{G}{h}$ for which the edges in $Z$ are reachable from 
the vertices in $\bad{G}{h}$ by using the following way of contracting 
the vertices in $\bad{G}{h}$ and use the new contracted vertex as the root in 
the flow graph.
\begin{definition}[Flow graph for edge connectivity]\label{def:kesubgraph}
Let $\subG{G}{h} = (\subV{h},\subE{h})$ be a subgraph of a graph 
$G = (V,E)$. Let $\good{G}{h}$ be the set of vertices in $\subG{G}{h}$
for which we can guarantee for each $u \in \good{G}{h}$ that 
$\In_{\subG{G}{h}}(u) = \In_{G}(u)$ and let $\bad{G}{h}$ denote $\subV{h} 
\setminus \good{G}{h}$. 

For edge connectivity and $\lvert \bad{G}{h} \rvert \ge 1$, we define the 
flow graph $\subGf{G}{h}(\rootf{G}{h})$ as follows.
Let the graph $\subGf{G}{h}$ be the multi-graph~$\subGf{G}{h} = 
(\subVf{h},\subEf{h})$ where all vertices in $\bad{G}{h}$ are contracted to a 
single vertex $\rootf{G}{h}$ in the following way.
The vertices $\subVf{h}$ are equal to $\good{G}{h} \cup \set{\rootf{G}{h}}$ and
the edges $\subEf{h}$ consists of all edges in $G[\good{G}{h}]$ and 
one edge $(u,\rootf{G}{h})$ for each edge in $(u,v) \in E \cap (\good{G}{h} 
\times \bad{G}{h})$ and, symmetrically, one edge $(\rootf{G}{h},v)$ for 
each $(u,v) \in E \cap (\bad{G}{h} \times \good{G}{h})$.
\end{definition}
With the flow graph $\subGf{G}{h}(\rootf{G}{h})$ we have the following
strategy to identify a \tscc or a $k$-almost \tscc in 
$\subG{G}{h}$ and $\subGf{G}{h}$ whenever a \tscc or a $k$-almost \tscc that
is a subset of $\good{G}{h}$ exists in $G$:
\begin{compactenum}
\item Search for a \tscc induced by vertices in $\good{G}{h}$ in $\subG{G}{h}$.
\item If none found, search for a $k$-dominator in $\subGf{G}{h}(\rootf{G}{h})$.
\item If a $k$-dominator~$Z$ is found, find a \tscc induced by 
vertices in $\good{G}{h}$ in $\subG{G}{h}\setminus Z$.
\end{compactenum} 
We detect a \tscc in the first step whenever some vertex of $\good{G}{h}$ is not 
reachable from any vertex of $\bad{G}{h}$. Thus if no such \tscc exists, all 
vertices of $\good{G}{h}$ are contained in $\subGf{G}{h}(\rootf{G}{h})$.
This has two consequences. First, if a $k$-dominator $Z$ is found in the 
second step, by Lemma~\ref{lem:kdomnewtscc} a \tscc induced by 
vertices $T \subseteq \good{G}{h}$ exists in $\subG{G}{h}\setminus Z$.
In this case \nt{T}{Z}{H} exists by Lemma~\ref{lem:kdomnewtscc}
and \nt{T}{Z}{G} exists by Corollary~\ref{cor:kcorr}; this is crucial for 
the correctness of the approach. Second, if for some set of vertices $T \subseteq 
\good{G}{h}$ and some set of edges $Z$ a \nt{T}{Z}{G} exists, then $Z$ 
is contained in $\subGf{G}{h}(\rootf{G}{h})$; thus in this case a
$k$-dominator is identified in $\subGf{G}{h}(\rootf{G}{h})$.

For vertex connectivity we have to use different flow graphs because we also want
to identify a \nt{T}{Z}{G} with $T \subseteq \good{G}{h}$ for which some of the 
vertices in $Z$ are contained
in $\bad{G}{h}$. We first consider the simpler case when $\lvert \bad{G}{h} 
\rvert \ge k$. In this case we connect an artificial root vertex to 
all vertices in $\bad{G}{h}$, which allows us to detect a set $Z$ for which 
a \nt{T}{Z}{G} exists when each vertex
in $Z$ is either contained in $\bad{G}{h}$ or reachable from a vertex in $\bad{G}{h}$.
When $0 < \lvert \bad{G}{h} \rvert < k$ we cannot use an additional vertex 
that we connect to the vertices of $\bad{G}{h}$ as root in the flow graph 
because the vertices of $\bad{G}{h}$ would be a $k$-dominator in this flow 
graph independent of the underlying graph $G$. We still want to detect a 
\nt{T}{Z}{G} with $T \subseteq \good{G}{h}$ for which $Z$ includes vertices 
in $\bad{G}{h}$ 
in this case. We further distinguish two cases: Either there
exists a vertex in $\bad{G}{h}$ that is not in $Z$ or $\bad{G}{h}$ is 
contained in $Z$. For the first case we use $\lvert \bad{G}{h} \rvert < k$
different flow graphs and we can identify the set of elements in $Z$ a flow 
graph for a vertex in $\bad{G}{h} \setminus Z$
when each vertex in $Z$ is either contained in $\bad{G}{h}$ or reachable from 
a vertex in $\bad{G}{h}$. In the second case, i.e., $B \subseteq Z$,
we cannot identify $Z$ in any of the flow graphs; we consider
this case explicitly by testing whether $\subG{G}{h}\setminus \bad{G}{h}$ is 
strongly connected and searching for a $(k-\lvert \bad{G}{h} \rvert)$-separator 
in $\subG{G}{h} \setminus \bad{G}{h}$.
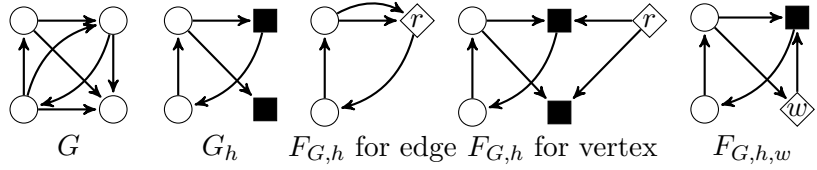
\begin{figure}
\centering
\begin{tikzpicture}
\tikzstyle{vertex}=[circle,draw,minimum size=3.5mm, solid]
\tikzstyle{arrow}=[->,line width=0.5pt,>=stealth',shorten >=1pt,thick]
\tikzstyle{bad}=[rectangle,draw,fill,minimum size=3mm, solid]
\tikzstyle{root}=[diamond,draw,inner sep=0pt, minimum size=4.5mm, solid]
\tikzstyle{ann} = [draw=none,fill=none]

\begin{scope}
\matrix[column sep=8mm, row sep=8mm]{
	\node[vertex] (1) {};
	& \node[vertex] (2) {};\\
	\node[vertex] (3) {};
	& \node[vertex] (4) {};\\
};
\path (1) edge[arrow] (2) edge[arrow] (4)
(2) edge[arrow, bend left, bend angle=20] (3) edge[arrow] (4)
(3) edge[arrow] (1) edge[arrow] (4) edge[arrow, bend left, bend angle=20] (2);
\node[ann,below=8mm]{$G$};
\end{scope}

\begin{scope}[xshift=2cm]
\matrix[column sep=8mm, row sep=8mm]{
	\node[vertex] (1) {};
	& \node[bad] (2) {};\\
	\node[vertex] (3) {};
	& \node[bad] (4) {};\\
};
\path (1) edge[arrow] (2) edge[arrow] (4)
(2) edge[arrow, bend left, bend angle=20] (3)
(3) edge[arrow] (1);
\node[ann,below=8mm]{$\subG{G}{h}$};
\end{scope}

\begin{scope}[xshift=4cm]
\matrix[column sep=8mm, row sep=8mm]{
	\node[vertex] (1) {};
	& \node[root] (2) {$r$};\\
	\node[vertex] (3) {};
	& \\
};
\path (1) edge[arrow] (2) edge[arrow, bend left, bend angle=50] (2)
(2) edge[arrow, bend left, bend angle=20] (3)
(3) edge[arrow] (1);
\node[ann,below=8mm]{$\subGf{G}{h}$ for edge};
\end{scope}

\begin{scope}[xshift=6.5cm]
\matrix[column sep=8mm, row sep=8mm]{
	\node[vertex] (1) {};
	& \node[bad] (2) {};
	& \node[root] (r) {$r$};\\
	\node[vertex] (3) {};
	& \node[bad] (4) {};
	& \\
};
\path (1) edge[arrow] (2) edge[arrow] (4)
(2) edge[arrow, bend left, bend angle=20] (3)
(3) edge[arrow] (1)
(r) edge[arrow] (2) edge[arrow] (4);
\node[ann,below=8mm]{$\subGf{G}{h}$ for vertex};
\end{scope}

\begin{scope}[xshift=9cm]
\matrix[column sep=8mm, row sep=8mm]{
	\node[vertex] (1) {};
	& \node[bad] (2) {};\\
	\node[vertex] (3) {};
	& \node[root] (4) {$w$};\\ 
};
\path (1) edge[arrow] (2) edge[arrow] (4)
(2) edge[arrow, bend left, bend angle=20] (3)
(3) edge[arrow] (1)
(4) edge[arrow] (2);
\node[ann,below=8mm]{$\subGf{G}{h,w}$};
\end{scope}
\end{tikzpicture}
\caption{Simple example graphs for graphs and vertices as in Definitions~\ref{def:kesubgraph} and~\ref{def:kvsubgraph}. 
The black square vertices denote vertices of $\bad{G}{h}$, the white 
diamond vertices denote the vertices that are used as roots in the 
corresponding flow graphs.}
\end{figure}
\begin{definition}[Flow graphs for vertex connectivity]\label{def:kvsubgraph}
Let $\subG{G}{h} = (\subV{h},\subE{h})$ be a subgraph of a graph 
$G = (V,E)$. Let $\good{G}{h}$ be the set of vertices in $\subG{G}{h}$
for which we can guarantee for each $u \in \good{G}{h}$ that 
$\In_{\subG{G}{h}}(u) = \In_{G}(u)$ and let $\bad{G}{h}$ denote $\subV{h} 
\setminus \good{G}{h}$. 

\begin{itemize}
\item For vertex connectivity and $\lvert \bad{G}{h} \rvert \ge k$, we define 
the flow graph $\subGf{G}{h}(\rootf{G}{h})$ as follows. Let $\subGf{G}{h}$ be 
the graph $\subG{G}{h}$ with an additional vertex~$\rootf{G}{h}$ and an 
additional edge from $\rootf{G}{h}$ to each vertex in~$\bad{G}{h}$.
\item  For vertex connectivity and $0 < \lvert \bad{G}{h} \rvert < k$, we 
define $\lvert 
\bad{G}{h} \rvert$ different flow graphs, one for each $w \in \bad{G}{h}$. Let 
$\subGf{G}{h,w}(\rootf{G}{h,w})$ denote the flow graph for $w \in \bad{G}{h}$. 
The root $\rootf{G}{h,w}$ is equal to $w$ and the graph $\subGf{G}{h,w}$
is the graph $\subG{G}{h}$ with an additional edge from $w$ to each vertex in
$\bad{G}{h} \setminus \set{w}$.
\end{itemize}
\end{definition}
The approach for vertex connectivity differs from the approach for 
edge connectivity mainly by (1) a different definition of the flow graph 
$\subGf{G}{h}(\rootf{G}{h})$ (2) in the case 
$0 < \lvert \bad{G}{h} \rvert < k$ by (2a) searching in $\lvert 
\bad{G}{h} \rvert$ different flow graphs and (2b) considering the special 
case to detect when a \nt{T}{Z}{G} with $Z \supseteq \bad{G}{h}$ exists.
We use the following lemma to show the correctness for the special case.
\begin{lemma}\label{lem:specialcase}
Consider vertex connectivity and assume $0 < \lvert \bad{G}{h} \rvert < k$.
Assume that \emph{no} \tscc induced by vertices in $\good{G}{h}$ exists in 
$\subG{G}{h}$ and that \emph{no} $k$-dominator exists in any of the flow graphs 
$\subGf{G}{h,w}(\rootf{G}{h,w})$ for $w \in \bad{G}{h}$. 

\begin{compactenum}[\lu a\ru]
\item If $\subG{G}{h} 
\setminus \bad{G}{h}$ is not strongly connected, let $T$ be a set of vertices
that induces a \tscc in $\subG{G}{h} \setminus \bad{G}{h}$ and let $Z' = \emptyset$.
\item If $\subG{G}{h} \setminus \bad{G}{h}$ is strongly connected, $\lvert 
\bad{G}{h} \rvert < k - 1$, and there exists a $(k - \lvert \bad{G}{h}
\rvert)$-separator~$Z'$
in $\subG{G}{h} \setminus \bad{G}{h}$, let $T$ be a set of vertices that induces a 
\tscc in $\subG{G}{h} \setminus (Z' \cup \bad{G}{h})$.
\end{compactenum}
Let $Z = Z' \cup \bad{G}{h}$.
If either the conditions for \enum{a} or for \enum{b} hold, then \nt{T}{Z}{G} exists.
\end{lemma}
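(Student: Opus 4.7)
The plan is to verify the three conditions that define a $k$-almost \tscc: $\lvert Z \rvert < k$, $G[T]$ is a \tscc in $G \setminus Z$, and every vertex of $Z$ has an outgoing edge to some vertex of $T$ in $G$. The size bound is routine: in case~\enum{a} we have $\lvert Z \rvert = \lvert \bad{G}{h} \rvert < k$, and in case~\enum{b} we have $\lvert Z \rvert = \lvert Z' \rvert + \lvert \bad{G}{h} \rvert \le (k - \lvert \bad{G}{h} \rvert - 1) + \lvert \bad{G}{h} \rvert < k$. For the \tscc condition I apply Lemma~\ref{lem:ktscc} to the pair $G \setminus Z$ and $\subG{G}{h} \setminus Z$ with $A = T$: since $T \subseteq \good{G}{h} \setminus Z'$, the guarantee on $\good{G}{h}$ yields $\In_{G}(u) = \In_{\subG{G}{h}}(u)$ for $u \in T$, and this equality is preserved after removing the same set $Z$ from both graphs; by construction $T$ induces a \tscc in $\subG{G}{h} \setminus Z$, so it induces a \tscc in $G \setminus Z$ as well.

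The substantive work lies in the third condition, and I split $Z$ into $Z'$ and $\bad{G}{h}$. For $v \in Z'$ (which can only occur in case~\enum{b}) I argue by the minimality of $Z'$. Assume for contradiction that $v$ has no edge to $T$ in $G$; since $T \subseteq \good{G}{h}$, incoming edges to $T$ in $\subG{G}{h}$ and $G$ coincide, so $v$ also has no edge to $T$ in $\subG{G}{h}$. Then $T$ remains a \tscc in $\subG{G}{h} \setminus ((Z' \setminus \set{v}) \cup \bad{G}{h})$, and since $v \notin T$ the set $T$ is a proper \tscc of $\subG{G}{h} \setminus \bad{G}{h}$ after removing $Z' \setminus \set{v}$. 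Thus $Z' \setminus \set{v}$ disconnects the strongly connected graph $\subG{G}{h} \setminus \bad{G}{h}$, and since $\lvert Z' \setminus \set{v} \rvert < \lvert Z' \rvert < k - \lvert \bad{G}{h} \rvert$, this contradicts the minimality of $Z'$ as a $(k - \lvert \bad{G}{h} \rvert)$-separator.

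For $w \in \bad{G}{h}$ the plan is to invoke the hypothesis that $\subGf{G}{h,w}(\rootf{G}{h,w})$ contains no $k$-dominator. Suppose to the contrary that $w$ has no edge to $T$ in $G$. I claim that $Z \setminus \set{w}$, a set of size less than $k$, dominates some vertex of $T$ in $\subGf{G}{h,w}(\rootf{G}{h,w})$; extracting a minimal dominating subset then yields a $k$-dominator and hence the desired contradiction. Justifying the claim requires \enum{i} that some vertex of $T$ is reachable from $w$ in the flow graph and \enum{ii} that every such path passes through $Z \setminus \set{w}$. Point~\enum{ii} follows because $T$ is a \tscc in $\subGf{G}{h,w} \setminus Z = \subG{G}{h} \setminus Z$---the additional edges leaving $w$ point only into $\bad{G}{h} \setminus \set{w} \subseteq Z$ and are removed together with $Z$---so every edge entering $T$ in $\subGf{G}{h,w}$ originates in $Z$; combined with the assumption that $w$ itself contributes no such edge, every path from $w$ to $T$ must enter $T$ through a vertex of $Z \setminus \set{w}$.

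The main obstacle is point~\enum{i}, and this is where the hypothesis that no \tscc induced by $\good{G}{h}$-vertices exists in $\subG{G}{h}$ comes in. In case~\enum{a}, if no vertex of $\bad{G}{h}$ had an edge to $T$ in $\subG{G}{h}$, then $T$ (already a \tscc of $\subG{G}{h} \setminus \bad{G}{h}$) would be a \tscc of $\subG{G}{h}$ induced by vertices of $\good{G}{h}$, contradicting the hypothesis; therefore some $w' \in \bad{G}{h} \setminus \set{w}$ has an edge to $T$, and $w$ reaches $T$ via the added edge $w \to w'$ followed by this edge. In case~\enum{b}, $\subG{G}{h} \setminus \bad{G}{h}$ is strongly connected, so it suffices that $w$ reaches some vertex of $\good{G}{h}$; by the same hypothesis, at least one edge from $\bad{G}{h}$ to $\good{G}{h}$ must exist in $\subG{G}{h}$ (else $\good{G}{h}$ would itself be a \tscc of $\subG{G}{h}$), and $w$ reaches its tail via an added edge to $\bad{G}{h} \setminus \set{w}$, after which $T$ is reachable by strong connectivity of $\subG{G}{h} \setminus \bad{G}{h}$.
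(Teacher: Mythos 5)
Your proof is correct and follows essentially the same route as the paper's: the vertices of $Z'$ are handled via the minimality of the $(k-\lvert \bad{G}{h}\rvert)$-separator, and a vertex of $\bad{G}{h}$ without an edge to $T$ would make $Z$ minus that vertex dominate $T$ in the corresponding flow graph $\subGf{G}{h,w}(\rootf{G}{h,w})$, contradicting the no-$k$-dominator hypothesis. You are merely more explicit than the paper about the routine parts (the size bound, transferring the \tscc to $G\setminus Z$ via Lemma~\ref{lem:ktscc}, and the case-by-case reachability of $T$ from $w$, which the paper gets uniformly from the observation that all of $\good{G}{h}$ lies in every flow graph), so no substantive difference.
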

\begin{proof}
	First note that $T \subseteq \good{G}{h}$, that there are edges from 
	$V \setminus T$ to $T$ in $\subG{G}{h}$, and that these edges are all from vertices 
	of $Z$. Thus it remains to show that 
	\emph{each} vertex of $Z$ has an edge to a vertex of $T$. 
	In Case~(b) the existence of a vertex in $Z'$ that has no edge to $T$ would 
	contradict the minimality of a $(k - \lvert \bad{G}{h} \rvert)$-separator (thus
	\nt{T}{Z'}{\subG{G}{h} \setminus \bad{G}{h}} exists). To see that also each 
	vertex of $\bad{G}{h}$
	has an edge to a vertex of $T$, assume by contradiction that there exists a 
	set of vertices $U \subseteq \bad{G}{h}$ such that there is no edge 
	from $U$ to $T$.
	By the assumption that no \tscc induced by vertices in $\good{G}{h}$ exists in 
$\subG{G}{h}$, we have that all vertices in $\good{G}{h}$ are reachable from some 
vertex of $\bad{G}{h}$ and thus contained in 
$\subGf{G}{h,w}(\rootf{G}{h,w})$ for all $w \in \bad{G}{h}$. Hence $Z \setminus 
U$ would dominate the vertices in $T$ in $\subGf{G}{h,u}(\rootf{G}{h,u})$
for $u \in U$, a contradiction to our assumptions.
\end{proof}

The following two corollaries together with Lemmata~\ref{lem:knewtscc}, 
\ref{lem:ktscc}, and~\ref{lem:specialcase} 
provide a summary of the results in this subsection.
\begin{corollary}\label{cor:kcorrdom}
Assume $\bad{G}{h}$ is such that the flow graph $\subGf{G}{h}(\rootf{G}{h})$
is defined.
If \emph{no} \tscc induced by vertices in $\good{G}{h}$ exists in $\subG{G}{h}$
and a set of elements~$Z$ with $\lvert Z \rvert < k$ is a $k$-dominator 
in $\subGf{G}{h}(\rootf{G}{h})$, 
then for some set of vertices $T \subseteq \good{G}{h}$ \nt{T}{Z}{G} exists.
The same holds for the flow graphs $\subGf{G}{h,w}(\rootf{G}{h,w})$ 
for $w \in \bad{G}{h}$ whenever they are defined.
\end{corollary}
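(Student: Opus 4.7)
The plan is to derive \nt{T}{Z}{G} by chaining Lemmas~\ref{lem:ktsccindom} and~\ref{lem:kdomnewtscc} with Corollary~\ref{cor:kcorr} inside the auxiliary graph $\subGf{G}{h}$. I would proceed in three steps: \emph{(i)} verify that $\rootf{G}{h}$ reaches every vertex of $\subGf{G}{h}$, so that Lemma~\ref{lem:ktsccindom} is applicable; \emph{(ii)} invoke Lemmas~\ref{lem:ktsccindom} and~\ref{lem:kdomnewtscc} to produce a set $T$ with $\rootf{G}{h}\notin T$ and \nt{T}{Z}{\subGf{G}{h}}; and \emph{(iii)} show $T\subseteq\good{G}{h}$ and then invoke Corollary~\ref{cor:kcorr} with $H=\subGf{G}{h}$ and $A=\good{G}{h}$ to transfer the $k$-almost \tscc to $G$.

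For step~(i) I would argue by contradiction. In both Definitions~\ref{def:kesubgraph} and~\ref{def:kvsubgraph}, the root of $\subGf{G}{h}$ is either adjacent to every vertex of $\bad{G}{h}$ or is the single vertex obtained by contracting $\bad{G}{h}$; hence any good vertex unreachable from $\rootf{G}{h}$ in $\subGf{G}{h}$ is also unreachable from every vertex of $\bad{G}{h}$ in $\subG{G}{h}$. Let $U\subseteq\good{G}{h}$ collect all good vertices unreachable from $\bad{G}{h}$ in $\subG{G}{h}$; then $U$ is nonempty, and the guarantee $\In_{\subG{G}{h}}(u)=\In_G(u)$ for $u\in\good{G}{h}$ forces every in-neighbor of $U$ in $\subG{G}{h}$ to lie inside $U$ (an in-neighbor in $\bad{G}{h}$ would make $U$-vertices reachable from $\bad{G}{h}$, while an in-neighbor in $\good{G}{h}\setminus U$ would route such reachability into $U$). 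Hence $\subG{G}{h}[U]$ contains a \tscc of $\subG{G}{h}$ with vertex set in $\good{G}{h}$, contradicting the hypothesis. Thus $\subGf{G}{h}(\rootf{G}{h})=\subGf{G}{h}$, and step~(ii) is a direct application of Lemmas~\ref{lem:ktsccindom} and~\ref{lem:kdomnewtscc} to $\subGf{G}{h}$ with root $\rootf{G}{h}$.

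For step~(iii), the edge-connectivity case is immediate because $V_{\subGf{G}{h}}=\good{G}{h}\cup\set{\rootf{G}{h}}$ and $\rootf{G}{h}\notin T$. In the vertex-connectivity case, a hypothetical $w\in T\cap\bad{G}{h}$ would keep its artificial incoming edge from $\rootf{G}{h}$ alive in $\subGf{G}{h}\setminus Z$: the edge is not deleted, since $\rootf{G}{h}\notin V$ implies $\rootf{G}{h}\notin Z$, and $w\notin Z$ because $T\cap Z=\emptyset$. This would contradict $T$ being a \tscc of $\subGf{G}{h}\setminus Z$. The same argument, with $\rootf{G}{h}$ replaced by $\rootf{G}{h,w}$ and using that $k$-dominators exclude the flow-graph root by definition, handles the flow graphs $\subGf{G}{h,w}(\rootf{G}{h,w})$ for $w\in\bad{G}{h}$.

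I expect step~(i) to be the main obstacle: the ``unreachable good vertex implies good \tscc in $\subG{G}{h}$'' deduction must work uniformly across all three flow-graph constructions, and in the edge-connectivity case it tacitly relies on identifying the multi-edges incident to $\rootf{G}{h}$ in $\subGf{G}{h}$ with the original edges of $G$ between $\bad{G}{h}$ and $\good{G}{h}$; the same identification is what lets us invoke Corollary~\ref{cor:kcorr} in the contracted setting via the hypothesis $\In_{\subGf{G}{h}}(u)=\In_G(u)$.
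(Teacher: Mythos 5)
Your proposal is correct and follows essentially the same route as the paper's proof: the hypothesis of no good \tscc in $\subG{G}{h}$ gives that the root reaches all of $\good{G}{h}$, then Lemma~\ref{lem:ktsccindom} and Lemma~\ref{lem:kdomnewtscc} produce the $k$-almost \tscc, which Corollary~\ref{cor:kcorr} transfers to $G$. The only (harmless) difference is that you apply Corollary~\ref{cor:kcorr} directly with $H=\subGf{G}{h}$ (making the $T\subseteq\good{G}{h}$ check and the multi-edge identification explicit), whereas the paper first moves the \tscc back into $\subG{G}{h}\setminus Z$ via Lemma~\ref{lem:ktscc} and applies Lemma~\ref{lem:kdomnewtscc} and Corollary~\ref{cor:kcorr} there.
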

\begin{proof}
Since no \tscc induced by vertices in $\good{G}{h}$ exists in $\subG{G}{h}$,
all vertices in $\good{G}{h}$ are reachable from some vertex of $\bad{G}{h}$
in $\subG{G}{h}$ and thus contained in $\subGf{G}{h}(\rootf{G}{h})$
(or for vertex connectivity and $\lvert \bad{G}{h} 
\rvert < k$ in $\subGf{G}{h,w}(\rootf{G}{h,w})$ for all $w \in \bad{G}{h}$).
Thus there exists a \tscc induced by a set of vertices $T \subseteq \good{G}{h}$
in $\subGf{G}{h} \setminus Z$ (Lemma~\ref{lem:ktsccindom}) and thus also in 
$\subG{G}{h} \setminus Z$ (Lemma~\ref{lem:ktscc}).
By Lemma~\ref{lem:kdomnewtscc} this implies that \nt{T}{Z}{\subG{G}{h}} exists.
Thus by Corollary~\ref{cor:kcorr} \nt{T}{Z}{G} exists.
\end{proof}

\begin{corollary}[Extension of Lemma~\ref{lem:finddom}]\label{cor:kfind}
Assume $\bad{G}{h}$ is such that the flow graph $\subGf{G}{h}(\rootf{G}{h})$
is defined. Let $T \subseteq \good{G}{h}$.
If \emph{no} \tscc induced by vertices in $\good{G}{h}$ exists in $G$,
for some set of elements~$Z$ with $\lvert Z \rvert < k$ \nt{T}{Z}{G} exists,
and $\bad{G}{h} \setminus Z$ is not empty, then $Z$ is a $k$-dominator 
in $\subGf{G}{h}(\rootf{G}{h})$.
The same holds for at least one of the flow graphs $\subGf{G}{h,w}(\rootf{G}{h,w})$ 
for $w \in \bad{G}{h}$ whenever they are defined.
\end{corollary}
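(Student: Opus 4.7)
The plan is to verify the four defining properties of a $k$-dominator---the size bound, containment in the flow graph away from the root, the existence of a dominated vertex reachable from the root, and minimality---in each of the two flow-graph families. For the edge-connectivity flow graph and the vertex-connectivity flow graph with $|\bad{G}{h}| \ge k$ the relevant flow graph is $\subGf{G}{h}(\rootf{G}{h})$. For the vertex-connectivity case with $0 < |\bad{G}{h}| < k$ I need to exhibit a specific $w \in \bad{G}{h}$ for which $Z$ is a $k$-dominator in $\subGf{G}{h,w}(\rootf{G}{h,w})$; the natural (and only) choice is $w \in \bad{G}{h} \setminus Z$, which is non-empty exactly because the hypothesis of the corollary asserts this. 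This is also where that hypothesis is consumed.

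First I transfer the $k$-almost \tscc from $G$ to $\subG{G}{h}$. Since $T \subseteq \good{G}{h}$, Corollary~\ref{cor:kcorr} gives \nt{T}{Z}{\subG{G}{h}}; in particular all elements of $Z$ lie in $\subG{G}{h}$, and their images map naturally into the flow graph (for instance, an incoming edge of $T$ from $\bad{G}{h}$ becomes an edge out of the contracted root in the edge case). The hypothesis that no \tscc induced by vertices of $\good{G}{h}$ exists in $G$ lifts to $\subG{G}{h}$ via Lemma~\ref{lem:ktscc}, which means every vertex of $\good{G}{h}$ is reachable from some vertex of $\bad{G}{h}$ in $\subG{G}{h}$. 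In $\subGf{G}{h}(\rootf{G}{h})$ this translates directly into reachability of all of $\good{G}{h}$ from $\rootf{G}{h}$; in $\subGf{G}{h,w}(\rootf{G}{h,w})$ with $w \in \bad{G}{h} \setminus Z$ I additionally use that the added edges from $w$ reach every other vertex of $\bad{G}{h}$, so $w$ reaches all of $\good{G}{h}$ as well. In particular the root reaches every vertex of $T$.

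Next I use \nt{T}{Z}{G} to conclude that $T$ is disconnected from the root once $Z$ is removed. Because $G[T]$ is a \tscc in $G \setminus Z$, the set $T$ has no incoming edges in the flow graph once the (translated) elements of $Z$ are deleted; hence every path from the root to any $u \in T$ must traverse some element of $Z$. I verify that the root itself is not in $Z$ in each case (a contracted, respectively auxiliary, vertex in the first two flow graphs; the chosen $w \in \bad{G}{h} \setminus Z$ in the third) and that some vertex of $T$ qualifies as the witness required by the definition of $k$-dominator.

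The hard part will be verifying minimality, since the $k = 2$ analogue (Lemma~\ref{lem:finddom}) has a singleton $Z$ for which minimality is automatic. My plan is to exploit the fact that the definition of \nt{T}{Z}{G} forces $Z$ to be exactly the boundary of $T$ in $G$ (every edge of $Z$ ends in $T$, respectively every vertex of $Z$ has an edge into $T$). If a proper subset $Z' \subsetneq Z$ dominated some vertex in the flow graph, then some element $z \in Z \setminus Z'$ remains, and by the reachability established above its source in the flow graph is reachable from the root without using $Z'$, producing a path from the root to $T$ via $z$; I then need to argue that this path in fact reaches every vertex of the purported dominated set by propagating along internal edges of $G[T]$, which is strongly connected in $G \setminus Z$ and therefore also in the flow graph after deletion of $Z'$. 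Making this propagation argument airtight---in particular handling the possibility that the dominated set of $Z'$ does not coincide with $T$ but is carved out by cuts inside $T$ combined with part of $Z$---is the main technical obstacle.
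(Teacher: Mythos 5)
Your set-up matches the paper's: you transfer \nt{T}{Z}{G} to the subgraph via Corollary~\ref{cor:kcorr}, lift the ``no \tscc induced by vertices of $\good{G}{h}$'' hypothesis with Lemma~\ref{lem:ktscc} to get reachability of all of $\good{G}{h}$ (and hence of $Z$ and $T$) from the root of the relevant flow graph, and pick $w \in \bad{G}{h} \setminus Z$ in the case $0 < \lvert \bad{G}{h} \rvert < k$, which is exactly where the hypothesis $\bad{G}{h} \setminus Z \ne \emptyset$ is used. The gap is exactly where you place it yourself: the minimality clause in the definition of a $k$-dominator is never established, and the repair you sketch does not go through as stated. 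If $Z' \subsetneq Z$ dominates some vertex, the element $z \in Z \setminus Z'$ you want to route through need not have its tail (or, for vertex connectivity, itself) reachable from the root by a path avoiding $Z'$ --- such paths may be forced through $Z'$ into $T$ and back out again --- and, as you concede, the set dominated by $Z'$ need not be related to $T$ at all. So the proposal is a plan with an admitted hole rather than a proof.

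The paper never confronts this issue inside the corollary: its proof consists of the two transfer steps you already have, the observation that reachability of $Z$ from $\bad{G}{h}$ puts all of $Z$ inside the flow graph, and then a single application of Lemma~\ref{lem:knewtscc} with the root of the flow graph playing the role of $r \in W$ (the root lies outside $T \cup Z$: it is the contracted or auxiliary vertex, respectively the chosen $w \in \bad{G}{h} \setminus Z$). That lemma already asserts that $Z \cap G(r)$ is a $k$-dominator in $G(r)$ whenever \nt{T}{Z}{G} exists and $r \in W$, so the dominator property --- minimality included --- is discharged wholesale there rather than re-verified in the corollary (whether the terse proof of Lemma~\ref{lem:knewtscc} fully spells out minimality is a separate matter, but that is where the paper locates the burden). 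You have in fact established every hypothesis needed to invoke it: the almost-\tscc structure survives the root modifications because the incoming edges of $T \subseteq \good{G}{h}$ are untouched, and reachability gives $Z \cap \subGf{G}{h}(\rootf{G}{h}) = Z$ (respectively $Z \cap \subGf{G}{h,w}(\rootf{G}{h,w}) = Z$). Replacing your from-scratch verification of the dominator definition by that citation closes the gap and dissolves your ``main technical obstacle''.
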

\begin{proof} 
By Corollary~\ref{cor:kcorr} \nt{T}{Z}{\subG{G}{h}} exists.
Since no \tscc induced by vertices in $\good{G}{h}$ exists in $G$, no 
such \tscc exists in $\subG{G}{h}$ by Lemma~\ref{lem:ktscc} and thus all vertices in 
$\good{G}{h}$ and all elements in $Z$ are reachable from some vertex of $\bad{G}{h}$
in $\subG{G}{h}$ and hence contained in $\subGf{G}{h}(\rootf{G}{h})$
(or for vertex connectivity and $\lvert \bad{G}{h} 
\rvert < k$ in $\subGf{G}{h,w}(\rootf{G}{h,w})$ for all $w \in \bad{G}{h}$).
As we exclude the case that $\bad{G}{h} \setminus Z$ is empty 
(for vertex connectivity), we have that in the case $\lvert \bad{G}{h} 
\rvert < k$ there exists some  $w \in \bad{G}{h}$ such that $\rootf{G}{h,w} \notin Z$. 
Thus by Lemma~\ref{lem:knewtscc} $Z$ is a $k$-dominator in 
$\subGf{G}{h}(\rootf{G}{h})$, or for vertex connectivity and $\lvert \bad{G}{h} 
\rvert < k$, in $\subGf{G}{h,w}(\rootf{G}{h,w})$ for $w \in \bad{G}{h} \setminus Z$.
\end{proof}

\subsection{The algorithms for \kescc{s} and \kvscc{s}}\label{sec:kalgo}
In this subsection we present our algorithms based on specific subgraphs 
$\subG{G}{h}$ that allow us to identify \tscc{s} and $k$-almost \tscc{s} in $G$ that 
have at most a certain size. This is crucial for 
the runtime analysis of the hierarchical graph decomposition technique.
We first provide intuition for the runtime analysis and describe the algorithms
and then formally prove their correctness and running times.

Let $G = (V,E)$ be a simple directed graph. We consider the following 
hierarchical graph decomposition: For level $i \in \mathbb{N}$ let 
the subgraph $\subG{G}{i} = (V, E_i)$ of $G$ contain all vertices in $V$ and for 
each vertex of $V$ its first $2^i$ incoming edges in $E$ (for some arbitrary but
fixed ordering of the incoming edges of each vertex). Note 
that for $i \ge \log (\max_{v \in V}{\InDeg_G(v)})$ we have $\subG{G}{i} = G$.
Following the definitions in the previous subsection, 
let $\good{G}{i}$ be the set of vertices with in-degree at most $2^i$ in~$G$ and 
let $\bad{G}{i} = V \setminus \good{G}{i}$ be the set of vertices with in-degree 
more than $2^i$ in $G$.

Recall that we make progress in our algorithms by separating specific $k$-isolated sets,
namely \tscc{s} and $k$-almost \tscc{s}, from the remaining graph. The main idea
of the hierarchical graph decomposition is to detect ``vertices to separate''
that contain $O(2^i)$ vertices in $\subG{G}{i}$ in time proportional to
the number of edges in $\subG{G}{i}$, i.e., in time $O(n \cdot 2^i)$.
The search for ``vertices to separate'' is started at level $i = 1$. When the 
search is not successful, the level $i$ is increased by one. Thus if the 
level has to be increased up to $i^*$ to identify a set of vertices, then 
this set contains $\Omega(2^{i^*})$ vertices because otherwise it would have 
been detected already at level $i^* - 1$. The time spent in the levels 1 up to
$i^*$ forms a geometric series and thus can be bounded by $O(n \cdot 2^{i^*})$
if the work per level is $O(n \cdot 2^i)$.
In our algorithm we recurse on the identified $k$-isolated sets.
To account for the recursion, we search ``in parallel'' on $G$ and its reverse
graph $\revG{G}$, that is, we also search for \bscc{s} and $k$-almost \bscc{s} by searching 
for \tscc{s} and $k$-almost \tscc{s} in $\revG{G}$. 
The search for both ($k$-almost) top and bottom \scc{s} 
ensures that whenever a ($k$-almost) \tscc and a disjoint ($k$-almost) \bscc exist in~$G$, 
we only spend time proportional to the smaller one; that is, to
identify a set $S$ when the per-level runtime is $O(n \cdot 2^i)$, we spend
time $O(n \cdot \min\{\lvert S \rvert, \lvert V \setminus S \rvert\})$.

Let $T$ be a set of at most $2^i - k$ vertices such that there exists a 
\nt{T}{Z}{G} for some set of elements $Z$. Since the only
edges from vertices of $V \setminus S$ to $S$ are, for edge connectivity,
the edges in $Z$, or, for vertex connectivity, edges from $Z$, the in-degree
of each vertex in $T$ can be at most $2^i$. Thus the vertices in $T$ are 
contained in $\good{G}{i}$. This allows us to apply the results from the 
previous subsection to show that we can identify a \tscc or $k$-almost \tscc of $G$
with at most $2^i - k$ vertices
by searching for \tscc{s} in $\subG{G}{i}$ and for $k$-dominators in derived
flow graphs~$\subGf{G}{i}(\rootf{G}{i})$ as in 
Definitions~\ref{def:kesubgraph} and~\ref{def:kvsubgraph}.
\begin{lemma}[Extension of Lemma~\ref{lem:level}]\label{lem:klevel}
Let $G$ be a simple directed graph.
	\begin{compactenum}[\lu 1\ru]
		\item If a set of vertices $S$ with $\lvert S \rvert \le 2^i + 1$ 
		induces a \tscc $G[S]$ in $G$, 
		\label{sublem:klevelC}
		\item or if there is a set of elements $Z$ with $\lvert Z \rvert < k$ such that
		for some set of vertices~$S$ with $\lvert S \rvert 
		\le 2^i - k + 2$ there exists a $k$-almost \tscc $G[S]$ with 
		respect to~$Z$ in $G$, \label{sublem:klevelEV}
		\label{sublem:klevelb1}
	\end{compactenum}
	then $S \subseteq \good{G}{i}$.
\end{lemma}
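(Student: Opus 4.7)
The claim is purely combinatorial: we must show that under either hypothesis every vertex $u \in S$ satisfies $\InDeg_G(u) \le 2^i$, since $\good{G}{i}$ is by construction the set of vertices with in-degree at most $2^i$ in $G$. The plan is therefore to bound $\InDeg_G(u)$ separately in the two cases, in each case splitting the incoming edges of $u$ into those whose tails lie inside $S$ and those whose tails lie outside $S$.

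For case~\enum{\ref{sublem:klevelC}}, if $G[S]$ is a \tscc in $G$, then by definition $S$ has no incoming edges from $V \setminus S$ in $G$. Hence every edge of $\In_G(u)$ originates in $S \setminus \{u\}$, which, since $G$ is simple, gives $\InDeg_G(u) \le |S| - 1 \le 2^i$.

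For case~\enum{\ref{sublem:klevelEV}}, the key observation is that $G[S]$ being a $k$-almost \tscc w.r.t.\ $Z$ means $G[S]$ is a \tscc in $G \setminus Z$. Consequently every edge incoming to $S$ in $G$ is either internal to $S$ or, for edge-connectivity, one of the edges in $Z$, and, for vertex-connectivity, an edge originating in $Z$. In both settings the number of edges of $\In_G(u)$ whose tail lies in $V \setminus S$ is at most $|Z| \le k-1$: for edge-connectivity this is immediate since $Z$ itself has fewer than $k$ edges, and for vertex-connectivity it holds because $G$ is simple, so $u$ receives at most one edge from each vertex of $Z$. Combining the two contributions gives $\InDeg_G(u) \le (|S|-1) + (k-1) \le (2^i - k + 1) + (k-1) = 2^i$, so again $u \in \good{G}{i}$.

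There is really no hard step here; the lemma is designed precisely so that the slack $2^i - k + 2$ in case~\enum{\ref{sublem:klevelEV}} absorbs the $k-1$ possible ``external'' incoming edges permitted by the definition of a $k$-almost \tscc, recovering the bound $2^i$ in both the edge- and vertex-connectivity interpretations. The only subtlety worth flagging explicitly in the write-up is the use of simplicity of $G$ in the vertex-connectivity case to bound the contribution of $Z$ to $\InDeg_G(u)$ by $|Z|$ rather than by something larger.
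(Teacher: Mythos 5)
Your proof is correct and follows essentially the same route as the paper: bound the in-degree of each vertex of $S$ by splitting incoming edges into those internal to $S$ (at most $\lvert S \rvert - 1$) and, in the $k$-almost \tscc case, at most $k-1$ external edges coming from $Z$ (using simplicity of $G$ for vertex-connectivity), which is exactly the paper's argument with Case~\enum{2} reduced to Case~\enum{1} applied to $G \setminus Z$.
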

\begin{proof}
	\begin{compactenum}[\lu 1\ru]
		\item Consider any graph $\tilde{G}$ in which $S$ induces a \tscc.
	Since a \tscc has no incoming edges, all incoming edges of vertices
	in $S$ have to come from other vertices of $S$. Thus each vertex in $S$ can 
	have an in-degree of at most $\lvert S \rvert - 1$ in~$\tilde{G}$.
	The claim follows for $\tilde{G} = G$.
		\item Consider Case~\enum{1} for $\tilde{G} = G \setminus Z$.
		In $G$ each vertex in $S$ can have 
		at most $k-1$ additional incoming edges compared to $G \setminus Z$, namely 
		an edge from each vertex of $Z$ for vertex connectivity and the 
		edges in~$Z$ for edge connectivity.
		Thus we can bound the in-degree in~$G$ of each vertex in~$S$ by~$2^i$. We 
		have $S \subseteq \good{G}{i}$.
	\end{compactenum}
\end{proof}

Let $\mmdeg$ be the minimum of $\max_{v \in V}{\InDeg_G(v)}$ and 
$\max_{v \in V}{\OutDeg_G(v)}$. In Algorithm~\ref{algk} we start the search 
for ($k$-almost) top \scc{s} at $i = 1$. Whenever the search is not successful, we
increase~$i$ by one, until we have $2^i \ge \mmdeg$, that is, 
$\subG{G}{i} = G$ or $\subG{\revG{G}}{i} = \revG{G}$. For the search the 
Procedure~\ref{algklevel} is used as long as $2^i < \mmdeg$,
i.e., both $\bad{G}{i}$ and $\bad{\revG{G}}{i}$ are not empty, and the 
Procedure~\ref{algksearch} is used afterwards. If a \tscc of $G$ or $\revG{G}$ is
identified, the procedures return the set of vertices in the \tscc (and an 
empty set $Z$). If a $k$-almost \tscc \nt{S}{Z}{G} or \nt{S}{Z}{\revG{G}} is identified, 
the procedures return the sets $S$ and $Z$. When one of the procedures
returns a non-empty set $S$ and a (potentially empty) set $Z$, the algorithm 
recurses on each of $G[S]$ and $G[V \setminus S]$ for edge connectivity and 
on each of $G[S \cup Z]$ and $G[V \setminus S]$ for vertex connectivity.
As the set $Z$ is not used further for edge connectivity, we do not have to 
worry about the direction of the edges in $Z$. We define the following shortcuts.
\begin{definition}
For a set of vertices $S$ and a set of elements $Z$, let $\GS$ for 
vertex connectivity be equal to~$G[S \cup Z]$ and for edge connectivity equal 
to $G[S]$. Let $\GV$ be equal to $G[V \setminus S]$.
\end{definition}
If all the calls to the procedures return only empty sets, then the considered 
graph $G$ is a \kscc and is returned.
\begin{algorithm}
\SetAlgoRefName{\kscc}
\caption{k-edge or k-vertex strongly connected components}
\label{algk}
\SetKwProg{myproc}{}{}{}
\myproc{$\reck(G):$}{
	$\mmdeg \leftarrow \min{(\max_{v \in V}{\InDeg_G(v)}, \max_{v \in V}{\OutDeg_G(v)})}$\;
	\For{$i \leftarrow 1$ \KwTo $\lceil \log \mmdeg \rceil - 1$}{
			$(S, Z) \leftarrow \klevelsearch(G, i)$\;
			\If{$S \ne \emptyset$}{
				\Return{$\reck(\GS) \cup \reck(\GV)$}\;
			}
	}
	$(S, Z) \leftarrow \kallsearch(G)$\;
	\eIf{$S \ne \emptyset$}{
		\Return{$\reck(\GS) \cup \reck(\GV)$}\;
	}{
	\Return{$\set{G}$}\;
	}
}
\end{algorithm}

We use Procedure~$\topsccex(H, B)$ to denote the search for a \tscc induced 
by vertices $S$ in a graph~$H$ such that $S$ does not contain a vertex of~$B$. 
Such a \tscc can simply be found by marking \tscc{s} in a standard \scc algorithm.
We let all procedures that search for an \scc return the set of vertices in the 
\scc.

The Procedure~\ref{algklevel} first constructs the graph $\subG{\G}{i} \in 
\set{\subG{G}{i}, \subG{\revG{G}}{i}}$ and searches 
for a \tscc in $\subG{\G}{i}$ that does not contain a vertex of $\bad{\G}{i}$, i.e.,
only contains vertices of $\good{G}{i}$ for which we know that all their 
incoming edges in $\G \in \set{G, \revG{G}}$ are present in $\subG{\G}{i}$.
If such a \tscc is found, its set of vertices is returned (and an empty 
set is returned as the second return value).
Otherwise the procedure continues in one of two ways.
First consider edge connectivity and the case
$\lvert \bad{\G}{i} \rvert \ge k$ for vertex connectivity.
Then the flow graph $\subGf{\G}{i}(\rootf{\G}{i})$ is constructed and searched 
for $k$-dominators. To find $k$-dominators the known algorithms listed in 
Section~\ref{sec:kscc_short} are used. 
If a $k$-dominator~$Z$ is found, a \tscc in~$\subG{\G}{i} 
\setminus Z$ that does not contain a vertex of 
$\bad{\G}{i}$ is found and the
vertices in the \tscc and the set $Z$ are returned.
Otherwise the procedure returns two empty sets.
In the case $\lvert \bad{\G}{i} \rvert < k$ for vertex connectivity
for each $w \in \bad{\G}{i}$ the flow graph $\subGf{\G}{i,w}(\rootf{\G}{i,w})$
is constructed and searched for a $k$-dominator. If at least one of the 
flow graphs contains a $k$-dominator $Z$, then a \tscc in~$\subG{\G}{i} 
\setminus Z$ that does not contain a vertex of 
$\bad{\G}{i}$ is found and the
vertices in the \tscc and the set $Z$ are returned.
If none of the flow graphs contains a $k$-dominator, there could still exist a 
$k$-separator that contains all vertices of $\bad{\G}{i}$, which we also want 
to detect at this level. To this end, we first test whether $\subG{\G}{i} 
\setminus \bad{\G}{i}$ is strongly connected and return the vertices in a \tscc 
of $\subG{\G}{i} \setminus \bad{\G}{i}$ if not. 
If $\subG{\G}{i} \setminus \bad{\G}{i}$ is strongly connected and 
$\lvert \bad{\G}{i} \rvert < k-1$, we search for 
a $(k-\lvert \bad{\G}{i} \rvert)$-separator $Z'$ in $\subG{\G}{i} \setminus 
\bad{\G}{i}$, using the known algorithm to find $k$-separators
as described in Section~\ref{sec:kscc_short}.
If such a $Z'$ exists, then $Z = Z' \cup \bad{\G}{i}$ is a $k$-separator and 
we return the vertices in a \tscc of $\subG{\G}{i} \setminus Z$ and the set $Z$.
\begin{procedure}
\caption{kIsolatedSetLevel($G$, $i$)} 
\label{algklevel}
\ForEach{$\G \in \set{G, \revG{G}}$}{
	\tcc{$2^i < \max_{v \in V}{\InDeg_\G(v)} \Longrightarrow \bad{\G}{i} \ne \emptyset$}
	construct $\subG{\G}{i} = (V, E_i)$ with 
	$E_i = \cup_{v\in V}\{\text{first } 2^i \text{ edges in } \In_\G(v)\}$\;
	$\bad{\G}{i} = \{v \mid \InDeg_\G(v) > 2^i\}$\;
	$S \leftarrow \topsccex(\subG{\G}{i}, \bad{\G}{i})$\label{l:tscc}\;
	\If{$S \ne \emptyset$}{
		\Return{$(S, \emptyset)$}\;
	}
	\If{$\lvert \bad{\G}{i} \rvert \ge k$ or \kescc{s} searched for}{
		construct flow graph $\subGf{\G}{i}(\rootf{\G}{i})$\;
		\If{exists $k$-dominator $Z$ in $\subGf{\G}{i}(\rootf{\G}{i})$}{
			$S \leftarrow \topsccex(\subG{\G}{i}\setminus Z, \bad{\G}{i})$\label{l:kdom1}\;
			\Return{$(S, Z)$}
		}
	}
	\Else(\tcc*[h]{only when \kvscc{s} searched for and $\lvert \bad{\G}{i} \rvert < k$}){
		construct flow graphs $\subGf{\G}{i,w}(\rootf{\G}{i,w})$ 
		for all $w \in \bad{\G}{i}$\;
		\If{exists $k$-dominator $Z$ in some $\subGf{\G}{i,w}(\rootf{\G}{i,w})$}{
			$S \leftarrow \topsccex(\subG{\G}{i}\setminus Z, \bad{\G}{i})$\label{l:kdom2}\;
			\Return{$(S, Z)$}
		}
		$S \leftarrow \topscc(\subG{\G}{i} \setminus \bad{\G}{i})$\label{l:tsccB}\;
		\If{$S \subsetneq V \setminus \bad{\G}{i}$}{
			\Return{$(S, \bad{\G}{i})$}\;
		}
		\If{$\lvert \bad{\G}{i} \rvert < k-1$ and 
		exists $(k-\lvert \bad{\G}{i} \rvert)$-separator $Z'$ in 
		$\subG{\G}{i} \setminus \bad{\G}{i}$}{
			$Z \leftarrow Z' \cup \bad{\G}{i}$\;
			$S \leftarrow \topscc(\subG{\G}{i}\setminus Z)$\label{l:kBsep}\;
			\Return{$(S, Z)$}
		}
	}
}
\Return{$(\emptyset, \emptyset)$}\;
\end{procedure}

If Procedure~\ref{algklevel} returns only empty sets for all levels~$i 
< \log \mmdeg$, Procedure~\ref{algksearch} is called. It first tests
whether the graph~$G$ is strongly connected and returns a \tscc of $G$ if not.
If $G$ is strongly connected, it searches for a $k$-separator in $G$.
If a $k$-separator~$Z$ is found, disjoint top and 
bottom \scc{s} exist in $G \setminus Z$. The procedure returns a \tscc 
of $G \setminus Z$ and the set $Z$ in this case.
If the graph~$G$ is strongly connected and does not contain a
$k$-separator, then $G$ is a \kscc. In this case the 
procedure returns two empty sets, this branch of recursion stops in 
Algorithm~\ref{algk}, and $\reck(G)$ returns $G$.
\begin{procedure}
\caption{kIsolatedSet($G$)} 
\label{algksearch}
$S \leftarrow \topscc(G)$\;
\If{$S \subsetneq V$}{
	\Return{$(S, \emptyset)$}\;
}
\If{exists $k$-separator $Z$ in $G$}{
	$S \leftarrow \topscc(G \setminus Z)$\;
	\Return{$(S, Z)$}\;
}
\Return{$(\emptyset, \emptyset)$}\;
\end{procedure}

\subsubsection{Correctness}
Note that, for each of $\G \in \set{G, \revG{G}}$, $\subG{\G}{i}$ is a 
subgraph of~$\G$ such that all vertices of $\subG{\G}{i}$ that are missing 
incoming edges in~$\subG{\G}{i}$ are contained in $\bad{\G}{i}$.
Recall that the flow graph $\subGf{\G}{i}(\rootf{\G}{i})$ is specified in 
Definitions~\ref{def:kesubgraph} and~\ref{def:kvsubgraph} and further that
for a non-empty set $S$ and a (potentially empty) set $Z$ returned by 
Procedure~\ref{algklevel} or~\ref{algksearch} $\GV$ is equal to 
$G[V \setminus S]$ and $\GS$ is equal to $G[S]$ for edge connectivity and equal 
to $G[S \cup Z]$ for vertex connectivity.

To show the correctness of the algorithms for \kscc{s}, the following three parts
are needed:
\enum{1}~Every step in the algorithm can be executed as described.
\enum{2}~Whenever a non-empty set~$S$ and a set~$Z$ are identified, 
every \kscc of~$G$ is completely contained in either $\GS$ or $\GV$ and both
$\GS$ and $\GV$ are not empty and thus proper subgraphs of~$G$.
\enum{3}~Whenever no non-empty set~$S$ is identified in~$G$, then $G$ is a 
\kscc.

We introduce Lemma~\ref{lem:ksearch} and Theorem~\ref{th:kcorr} to show 
Parts~(2) and~(3), respectively.
For Part~\enum{1} we need in particular that \enum{1a} $\bad{G}{i}$ and $\bad{\revG{G}}{i}$
are not empty whenever Procedure~$\klevelsearch(G, i)$ is called and
that \enum{1b} whenever a $k$-dominator in a flow graph derived from $\subG{\G}{i}$
is identified, then there exists a \tscc in $\subG{\G}{i}\setminus Z$ that does 
not contain a vertex of $\bad{\G}{i}$. For \enum{1a} the algorithm ensures 
$i < \log \mmdeg$, i.e., $2^i < \max_{v \in V}{\InDeg_G(v)}$ and
$2^i  < \max_{v \in V}{\OutDeg_G(v)}$. Lemma~\ref{lem:ktsccindom} implies \enum{1b}.

The following lemma shows, based on the results of Subsection~\ref{sec:ksubgraph},
that the Procedures~\ref{algklevel} and~\ref{algksearch} 
indeed find \tscc{s} or $k$-almost \tscc{s} in $G$ or $\revG{G}$. This will imply
by Lemma~\ref{lem:knewtscc}
that every \kscc of~$G$ is completely contained in either $\GS$ or $\GV$
whenever $\reck$ is called recursively on $\GS$ and $\GV$.
\begin{lemma}\label{lem:ksearch}
	Let $0 < i < \log \mmdeg$. If \enum{1} Procedure~\ref{algklevel} or
	\enum{2} Procedure~\ref{algksearch}
	return a non-empty set $S$ and a set $Z$ \lu with $\lvert Z \rvert < k$\ru
	for graph $G$,
	then $S$ induces either a top or a bottom \scc or a $k$-almost top or bottom
	\scc with respect to $Z$ in $G$.
	Additionally, the set $V \setminus (S \cup Z)$ is not empty.
\end{lemma}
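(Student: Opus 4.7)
The plan is to prove the lemma by a case analysis according to which branch of Procedures~\ref{algklevel} and~\ref{algksearch} actually produces the non-empty set $S$, in each case invoking the appropriate structural result from Subsection~\ref{sec:ksubgraph} to lift information about $\subG{\G}{i}$ (or an auxiliary flow graph derived from it) to the graph $\G \in \set{G, \revG{G}}$, and using that a \tscc of $\revG{G}$ is a \bscc of $G$ to absorb the symmetric case without duplication. The non-emptiness of $V \setminus (S \cup Z)$ will be checked separately in each branch, mostly by pointing at a concrete vertex of $\bad{\G}{i}$ (or one revealed by the $k$-separator) that lies outside $S \cup Z$.

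The easy branches are Procedure~\ref{algksearch} and the first branch of Procedure~\ref{algklevel}. For the first branch of Procedure~\ref{algklevel}, the call $\topsccex(\subG{\G}{i}, \bad{\G}{i})$ returns a set $S \subseteq \good{\G}{i}$ that induces a \tscc in $\subG{\G}{i}$, and Lemma~\ref{lem:ktscc} promotes this to a \tscc in $\G$; the hypothesis $i < \log \mmdeg$ forces $\bad{\G}{i} \ne \emptyset$, so $V \setminus S \supseteq \bad{\G}{i}$ is non-empty and $Z = \emptyset$ is fine. For Procedure~\ref{algksearch}, the first return is a \tscc of $G$ by construction and the second follows from the definition of a $k$-separator: $G \setminus Z$ is not strongly connected, so its \tscc $S$ witnesses $\nt{S}{Z}{G}$ and moreover $V \setminus (S \cup Z) \ne \emptyset$.

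For the $k$-dominator branches of Procedure~\ref{algklevel} (i.e., the $k$-dominator in $\subGf{\G}{i}(\rootf{\G}{i})$ for edge connectivity and for vertex connectivity with $\lvert \bad{\G}{i} \rvert \ge k$, and the $k$-dominator in some $\subGf{\G}{i,w}(\rootf{\G}{i,w})$ when $\lvert \bad{\G}{i} \rvert < k$ for vertex connectivity), I will argue as follows. The preceding $\topsccex$ call failed, so no \tscc of $\subG{\G}{i}$ lies entirely in $\good{\G}{i}$; Corollary~\ref{cor:kcorrdom} then produces some $T \subseteq \good{\G}{i}$ with $\nt{T}{Z}{\G}$ existing. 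The set actually returned is the \tscc $S$ computed by $\topsccex(\subG{\G}{i}\setminus Z, \bad{\G}{i})$ and hence $S \subseteq \good{\G}{i}$, so Lemmata~\ref{lem:ktscc} and~\ref{lem:kdomnewtscc} upgrade this into $\nt{S}{Z}{\G}$. Non-emptiness of $V \setminus (S \cup Z)$ comes from $\bad{\G}{i} \setminus Z$: for edge connectivity $Z$ is edges and $\bad{\G}{i} \ne \emptyset$; for the vertex case with $\lvert \bad{\G}{i}\rvert \ge k$ we have $\lvert Z\rvert < k$; for the vertex case with $\lvert \bad{\G}{i}\rvert < k$ the root $w \in \bad{\G}{i}$ of the relevant flow graph is not in $Z$ (since a $k$-dominator lives in $G(r) \setminus \{r\}$), witnessing a vertex in $V \setminus (S\cup Z)$.

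The main obstacle, and the one I expect to require the most care, is the last two branches of Procedure~\ref{algklevel}, specific to vertex connectivity with $0 < \lvert \bad{\G}{i} \rvert < k$, which are justified via Lemma~\ref{lem:specialcase}. The hypotheses of that lemma---that no \tscc in $\good{\G}{i}$ exists in $\subG{\G}{i}$ and no $k$-dominator in any $\subGf{\G}{i,w}(\rootf{\G}{i,w})$ exists---are exactly the preconditions for entering these branches. Case~(a) of Lemma~\ref{lem:specialcase} gives $\nt{S}{\bad{\G}{i}}{\G}$ when $\topscc(\subG{\G}{i}\setminus \bad{\G}{i})$ returns $S \subsetneq V \setminus \bad{\G}{i}$, and this strict inclusion directly yields the non-emptiness $V \setminus (S \cup \bad{\G}{i}) \ne \emptyset$. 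Case~(b) yields $\nt{S}{Z}{\G}$ for $Z = Z' \cup \bad{\G}{i}$; here one must check that removing $Z$ from the previously strongly connected graph $\subG{\G}{i} \setminus \bad{\G}{i}$ really does split it (this is the minimality of the $(k-\lvert \bad{\G}{i}\rvert)$-separator $Z'$), so that the \tscc $S$ of $\subG{\G}{i}\setminus Z$ is a proper subset of $V \setminus Z$ and $V \setminus (S \cup Z)$ is non-empty.
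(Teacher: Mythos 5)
Your proposal is correct and follows essentially the same route as the paper's proof: the same branch-by-branch case analysis of Procedures~\ref{algklevel} and~\ref{algksearch}, invoking Lemma~\ref{lem:ktscc}, Corollary~\ref{cor:kcorrdom} (via Lemma~\ref{lem:kdomnewtscc}), and Lemma~\ref{lem:specialcase}, with non-emptiness of $V \setminus (S \cup Z)$ witnessed by a vertex of $\bad{\G}{i}$ outside $Z$ (or by the explicit proper-subset test and the separator definition in the special cases). Your treatment is, if anything, slightly more explicit than the paper's about why the root $w \notin Z$ and about which lemma upgrades the returned $S$ itself to a $k$-almost \tscc.
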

\begin{proof}
	\begin{compactenum}[\lu 1\ru]
		\item Let $\G \in \set{G, \revG{G}}$ be the graph in which $S$ and $Z$ are 
		identified. In Procedure \ref{algklevel} a set of vertices $S$ and a 
		set of elements $Z$ can be 
		identified in three ways: \enum{a} $\G[S]$ is a \tscc in $\subG{\G}{i}$ 
		that does not include vertices of $\bad{\G}{i}$ and $Z = \emptyset$; 
	\enum{b} the algorithm finds a $k$-dominator $Z$ in 
	$\subGf{\G}{i}(\rootf{\G}{i})$ or $\subGf{\G}{i,w}(\rootf{\G}{i,w})$ for some
	$w \in \bad{\G}{i}$, and $\G[S]$ is a \tscc in~$\subG{\G}{i} \setminus Z$ that 
	does not include a vertex of~$\bad{\G}{i}$; 
	or \enum{c} vertex connectivity is
	considered, we have $\lvert \bad{\G}{i} \rvert < k$, and either \enum{c1} 
	$\subG{\G}{i} \setminus \bad{\G}{i}$ is not strongly connected or \enum{c2} there 
	exists a $(k - \lvert \bad{\G}{i} \rvert)$-separator $Z'$ in 
	$\subG{\G}{i} \setminus \bad{\G}{i}$ for $k - \lvert \bad{\G}{i} \rvert > 1$. Let 
	$Z' = \emptyset$ in Case~\enum{c1} and let $Z = Z' \cup \bad{\G}{i}$.
	Then in Case~\enum{c} $S$ induces a \tscc 
	in $\subG{\G}{i} \setminus Z$.
	We use that Case~\enum{b} can only occur if Case~\enum{a} did not occur
	and that Case~\enum{c} can only occur if the Cases~\enum{a} 
	and~\enum{b} did not occur in this call to Procedure \ref{algklevel}.
	With this we have:
	In Case~\enum{a} $\G[S]$ is a \tscc in $\G$ by Lemma~\ref{lem:ktscc}.
	In Case~\enum{b} \nt{S}{Z}{\G} exists by Corollary~\ref{cor:kcorrdom}.
	In Case~\enum{c} \nt{S}{Z}{\G} exists by Lemma~\ref{lem:specialcase}.
	
	In the Cases~\enum{a} and \enum{b} we have 
	$\bad{\G}{i} \cap V \setminus (S \cup Z) \ne \emptyset$ and thus 
	both $S$ and $V \setminus (S \cup Z)$ are not empty. In Case~\enum{c1} 
	$V \setminus (S \cup Z) = \emptyset$ is explicitly avoided by testing whether
	$S$ is a proper subset of $V \setminus Z$. In Case~\enum{c2}
	$V \setminus (S \cup Z) \ne \emptyset$ follows from the definition of 
	$k$-separators.
	
	\item
	In Procedure~\ref{algksearch} a non-empty set of vertices~$S$ and a set of elements 
	$Z$ can be identified in two ways:
	\enum{a} $G[S]$ is a top \scc in~$G$ with~$S \ne V$ and~$Z = \emptyset$; 
	or \enum{b} $G$ is strongly connected, the algorithm finds a $k$-separator~$Z$
	in~$G$ and~$G[S]$ is a $k$-almost top \scc w.r.t.\ $Z$ in~$G$. 
	In both cases it clearly holds that $V \setminus (S \cup Z) \ne \emptyset$
	because $G\setminus Z$ contains a top and a bottom \scc that are disjoint and
	only the \tscc is contained in $G[S]$.
	\end{compactenum}
\end{proof}

\begin{theorem}[Correctness]\label{th:kcorr}
Let $G$ be a simple directed graph. $\reck(G)$ computes the \kscc{s} of $G$.
\end{theorem}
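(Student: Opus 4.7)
The plan is to prove Theorem~\ref{th:kcorr} by induction on $|V|$, using Lemma~\ref{lem:ksearch} and Lemma~\ref{lem:knewtscc} as the main tools. The base case is when $\reck(G)$ takes the final \texttt{else} branch and returns $\{G\}$; I need to show that in this case $G$ is itself a single \kscc. This happens precisely when neither Procedure~\ref{algklevel} (for any level $i$) nor Procedure~\ref{algksearch} returns a non-empty set $S$. In particular, the call $\kallsearch(G)$ returning $(\emptyset, \emptyset)$ means (by inspecting Procedure~\ref{algksearch}) that $\topscc(G) = V$, i.e., $G$ is strongly connected, and that no $k$-separator exists in $G$; this is exactly the definition of $G$ being $k$-(strongly-)connected, so every pair of vertices is $k$-connected and $G$ is a single \kscc. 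Also note $\kallsearch$ is only reached when $2^i \ge \mmdeg$ for all relevant $i$, so the trivial cases (e.g.\ graphs with at most one vertex, or where $\mmdeg$ is tiny) are handled correctly since the \texttt{for} loop is simply skipped.

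For the inductive step, suppose $\reck(G)$ returns $\reck(\GS) \cup \reck(\GV)$ because some procedure returned a non-empty $S$ together with a set $Z$ with $|Z|<k$. By Lemma~\ref{lem:ksearch}, $S$ induces either a \tscc or \bscc of $G$ (when $Z=\emptyset$) or a $k$-almost \tscc or \bscc of $G$ with respect to $Z$, and moreover $V\setminus(S\cup Z)\neq\emptyset$. Combined with $S\neq\emptyset$ this guarantees $|\GS|<|V|$ and $|\GV|<|V|$, so the inductive hypothesis applies to both recursive calls. It remains to verify two things: \emph{(i)} every \kscc of $G$ is entirely contained in either $\GS$ or $\GV$, and \emph{(ii)} conversely, the \kscc{s} of $\GS$ and of $\GV$ really are \kscc{s} of $G$.

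For (i), apply Lemma~\ref{lem:knewtscc} (possibly to the reverse graph, which yields the analogous statement about $k$-almost \bscc{s}): with $T=S$ and $W=V\setminus S$ for edge-connectivity or $W=V\setminus(S\cup Z)$ for vertex-connectivity, no vertex of $W$ is $k$-connected to any vertex of $S$ in $G$. Therefore every \kscc of $G$ lies entirely in $V\setminus W = S\cup Z = \GS$ or entirely in $V\setminus S = \GV$ (with the small caveat that for edge-connectivity $W=V\setminus S$ directly gives the partition). For (ii), a set $C\subseteq\GS$ (resp.\ $C\subseteq\GV$) is a \kscc of $\GS$ if and only if it is a \kscc of $G$, because the vertex-disjoint/edge-disjoint paths witnessing $k$-connectivity between any two vertices of $C$ must stay inside the unique side containing $C$: any path from a vertex of $\GS\setminus Z$ to one of $\GV\setminus S$ (or vice versa) would have to cross $Z$, and the minimality and structure of a $k$-almost \tscc ensure the paths internal to a single \kscc never need to leave that side. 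More precisely, since $|Z|<k$, any $k$ internally disjoint paths between two vertices of $C$ cannot all cross $Z$, so at least one stays inside the induced subgraph, and by symmetric counting all $k$ of them can be chosen inside it.

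The main obstacle is (ii), i.e., verifying that the induced subgraph $\GS$ (and $\GV$) preserves \kscc{s}: one has to argue carefully that the required $k$ disjoint paths can be assumed to lie inside the induced subgraph, using that $|Z|<k$ together with the ``isolated set'' structure of $S$ established by Lemma~\ref{lem:knewtscc}. Once this is in hand, the induction closes cleanly: $\reck(\GS)\cup\reck(\GV)$ is, by the inductive hypothesis, the union of the \kscc{s} of $\GS$ and $\GV$, which by (i) and (ii) is exactly the set of \kscc{s} of $G$. The edge-connectivity case and the $k=2$ special cases follow from the same argument with $Z$ interpreted as edges and the corresponding version of Lemma~\ref{lem:knewtscc}.
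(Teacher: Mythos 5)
Your overall plan coincides with the paper's proof: induction over the recursion, Lemma~\ref{lem:ksearch} combined with Lemma~\ref{lem:knewtscc} (applied to $G$ or $\revG{G}$) to show that every \kscc of $G$ lies entirely in $\GS$ or in $\GV$, strict shrinkage of both recursive calls from $S \ne \emptyset$ and $V \setminus (S \cup Z) \ne \emptyset$, and the analysis of Procedure~\ref{algksearch} for the terminating branch (strongly connected with no $k$-separator, hence a \kscc, degenerate cases included). All of that matches the paper.

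The genuine gap is in your step (ii). You argue that a \kscc of $\GS$ (or $\GV$) is a \kscc of $G$ and vice versa by rerouting the $k$ disjoint paths that witness $k$-connectivity in the ambient graph into the side containing the component, justified by ``since $\lvert Z \rvert < k$, any $k$ internally disjoint paths \dots cannot all cross $Z$, so at least one stays inside the induced subgraph, and by symmetric counting all $k$ of them can be chosen inside it.'' The last inference is not valid: at most $\lvert Z \rvert < k$ of the paths can cross $Z$, so \emph{some} path stays inside, but nothing lets you replace the remaining witnesses by paths inside the induced subgraph; $k$-connectivity in the ambient graph does not in general transfer to induced subgraphs---this is exactly the blocks-versus-components distinction emphasized in the introduction. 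Fortunately no rerouting is needed, because the paper defines a \kscc of $G$ as a maximal induced subgraph $G[C]$ in which every pair of vertices is $k$-connected \emph{in $G[C]$ itself}. Hence if a \kscc $G[C]$ of $G$ is contained in $\GS$, its witnessing paths already lie in $G[C] = \GS[C]$, so $C$ is a $k$-connected subgraph of $\GS$, and maximality transfers since any larger $k$-connected induced subgraph of $\GS$ would also be one of $G$; conversely, every \kscc of $\GS$ or $\GV$ is a $k$-connected induced subgraph of $G$ and is therefore contained in a \kscc of $G$. This is precisely how the paper closes the argument in two sentences; replacing your path-counting claim by this observation repairs the proof, and the rest of your proposal stands as written.
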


\begin{proof}
We denote by $G$ the graph for which $\reck(G)$ is called, which is equal
to the input graph for the initial call to $\reck$ and equal to a subgraph
of the input graph for all recursive calls to $\reck$.
Whenever $\reck(G)$ identifies a set $S$ and recursively calls itself 
on $\GS$ and $\GV$, then by Lemma~\ref{lem:ksearch} the set $S$ either \enum{a} induces 
a top or bottom \scc in $G$ or \enum{b} 
a $k$-almost top or bottom \scc with respect to the set $Z$ in $G$. 
	In Case~\enum{a} each \kscc is completely contained in either~$\GS$ or~$\GV$
	by the fact that every \kscc is strongly connected.
	In Case~\enum{b} by Lemma~\ref{lem:knewtscc} no vertex 
	of~$V \setminus (S \cup Z)$ is $k$-connected to any vertex of~$S$ in~$\G$;
	thus each \kscc is completely contained in either~$\GS$ or~$\GV$
	by the definition of a \kscc.
This implies that each \kscc of $G$ is a \kscc in $\GS$ or $\GV$.
Note that each \kscc of a subgraph of $G$ is by 
definition a subset of a \kscc of $G$. Hence the union of the \kscc{s} of $\GS$ 
and $\GV$ yields the \kscc{s} of $G$.
Furthermore, by Lemma~\ref{lem:ksearch} both $S$ and $V \setminus (S \cup Z)$
are not empty, i.e., the recursion is on proper subgraphs of $G$ and thus the
algorithm terminates.
	
To show that the algorithm does not terminate until it has correctly identified
all \kscc{s} of the input graph, it remains to show that whenever $\reck(G)$ 
is called, it 
either identifies a set $S$ and recursively calls itself on $\GS$ and $\GV$
or the graph $G$ is strongly connected and does not contain a $k$-separator, i.e.,
the graph $G$ is a \kscc. For this it is sufficient to show that if the 
algorithm has not identified a set~$S$ in the for-loop, then it either 
identifies a set by the call to Procedure~\ref{algksearch} or $G$ is
strongly connected and does not contain a $k$-separator. 
If $G$ is not strongly connected, then Procedure~\ref{algksearch} 
returns a \tscc in $G$. If $G$ is strongly connected, then
either $G$ does not contain $k$-separators or Procedure~\ref{algksearch} finds a 
$k$-separator~$Z$ and identifies a \tscc in $G \setminus Z$, 
which exists by the definition of a $k$-separator.
\end{proof}

\subsubsection{Runtime}
We already showed by Lemma~\ref{lem:klevel} that sets of at most $2^i - k + 2$
vertices that induce a \tscc or a $k$-almost \tscc in $G$ are contained in 
$\good{G}{i} = V \setminus \bad{G}{i}$, 
i.e., the incoming edges of each vertex in such a set are present in $\subG{G}{i}$.
This allows us to apply the results from Subsection~\ref{sec:ksubgraph} to show 
that our algorithms find such a set whenever one exists.
Let $\mmdeg$ be the minimum of $\max_{v \in V}{\InDeg_G(v)}$ and 
$\max_{v \in V}{\OutDeg_G(v)}$ as in Algorithm~\ref{algk}.
\begin{lemma}[Extension of Lemma~\ref{lem:find}]\label{lem:kfind}
	If for some integer $1 \le i < \log \mmdeg$ and 
	$\G \in \set{G, \revG{G}}$ there exists a set of vertices $T \subseteq 
	\good{\G}{i}$ that either induces a \tscc
	or a $k$-almost \tscc with respect to some set of elements $Z$ with 
	$\lvert Z \rvert < k$ and $T \subsetneq V \setminus Z$ in $\G$, 
	then $\klevelsearch(G, i)$ returns a non-empty set $S$.
\end{lemma}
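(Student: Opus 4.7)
My plan is to trace through $\klevelsearch(G, i)$ in the iteration of its outer for-loop corresponding to the $\G$ in which $T$ lies; if the procedure already returns a non-empty $S$ during the earlier iteration (on the other of $G$, $\revG{G}$) then the conclusion holds trivially, so I may assume control reaches the $\G$-iteration. If $T$ induces a \tscc in $\G$, then $T \subseteq \good{\G}{i}$ together with Lemma~\ref{lem:ktscc} implies that $T$ is a \tscc of $\subG{\G}{i}$ disjoint from $\bad{\G}{i}$, so the initial $\topsccex(\subG{\G}{i}, \bad{\G}{i})$ call on Line~\ref{l:tscc} returns a non-empty set.

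I may therefore assume that $T$ induces a $k$-almost \tscc with respect to some $Z$ in $\G$ and that the initial $\topsccex$ call returned $\emptyset$. By Lemma~\ref{lem:ktscc} this means no \tscc of $\G$ is contained in $\good{\G}{i}$, which is precisely the hypothesis needed to invoke Corollaries~\ref{cor:kcorrdom} and~\ref{cor:kfind}. In the edge-connectivity branch, and in the vertex-connectivity branch whenever $|\bad{\G}{i}| \ge k$, we have $\bad{\G}{i} \setminus Z \ne \emptyset$ (trivially by type for edges, and because $|Z| < k \le |\bad{\G}{i}|$ for vertices), so Corollary~\ref{cor:kfind} guarantees a $k$-dominator in $\subGf{\G}{i}(\rootf{\G}{i})$. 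Once the procedure fixes \emph{some} $k$-dominator $Z^\ast$, Corollary~\ref{cor:kcorrdom} produces a white $T^\ast \subseteq \good{\G}{i}$ inducing a \tscc of $\subG{\G}{i} \setminus Z^\ast$ disjoint from $\bad{\G}{i}$ (via Lemma~\ref{lem:ktscc}), so the follow-up $\topsccex$ call on Line~\ref{l:kdom1} returns a non-empty set. The vertex-connectivity subcase with $|\bad{\G}{i}| < k$ and $\bad{\G}{i} \setminus Z \ne \emptyset$ is handled analogously: picking any $w \in \bad{\G}{i} \setminus Z$, Corollary~\ref{cor:kfind} gives a $k$-dominator in $\subGf{\G}{i,w}(\rootf{\G}{i,w})$, and Line~\ref{l:kdom2} returns non-empty.

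The main obstacle is the remaining subcase of vertex connectivity with $|\bad{\G}{i}| < k$ and $\bad{\G}{i} \subseteq Z$, where Corollary~\ref{cor:kfind} need not yield a $k$-dominator and I have to reason directly. Since $T \subseteq \good{\G}{i}$ is a \tscc of $\G \setminus Z$, Lemma~\ref{lem:ktscc} gives that $T$ is also a \tscc of $\subG{\G}{i} \setminus Z$, with $T \subsetneq V \setminus Z$ by hypothesis. If $Z = \bad{\G}{i}$, then $T$ is already a proper \tscc of $\subG{\G}{i} \setminus \bad{\G}{i}$, so Line~\ref{l:tsccB} returns a proper subset. Otherwise set $Z' = Z \setminus \bad{\G}{i} \ne \emptyset$; then $|Z'| = |Z| - |\bad{\G}{i}| < k - |\bad{\G}{i}|$, which in turn forces the threshold $|\bad{\G}{i}| < k-1$ required by Procedure~\ref{algklevel}. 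Moreover, removing $Z'$ from $\subG{\G}{i} \setminus \bad{\G}{i}$ destroys strong connectivity, because $T$ remains a proper \tscc after the removal. Hence either $\subG{\G}{i} \setminus \bad{\G}{i}$ is not strongly connected and Line~\ref{l:tsccB} fires, or it is strongly connected but witnesses a set of fewer than $k - |\bad{\G}{i}|$ vertices whose removal disconnects it, so a $(k-|\bad{\G}{i}|)$-separator exists and Line~\ref{l:kBsep} returns a non-empty \tscc. The fiddly part is precisely this size/threshold bookkeeping together with verifying that the hypotheses of the earlier lemmas remain satisfied through each reduction.
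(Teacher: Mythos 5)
Your proposal is correct and follows essentially the same route as the paper's proof: handle the \tscc case via Lemma~\ref{lem:ktscc} and Line~\ref{l:tscc}, use Corollary~\ref{cor:kfind} when $\bad{\G}{i} \setminus Z \ne \emptyset$ to get a $k$-dominator (Lines~\ref{l:kdom1}/\ref{l:kdom2}), and treat $\bad{\G}{i} \subseteq Z$ separately via Lines~\ref{l:tsccB} and~\ref{l:kBsep}. Your added bookkeeping (why the follow-up $\topsccex$ calls return non-empty via Corollary~\ref{cor:kcorrdom}, and why $Z' \ne \emptyset$ forces $\lvert \bad{\G}{i} \rvert < k-1$) is consistent with, and slightly more explicit than, the paper's argument.
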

\begin{proof}
	If $\G[T]$ is a \tscc in $\G$, then $\subG{\G}{i}[T]$ is a \tscc without vertices of
	$\bad{\G}{i}$ in $\subG{\G}{i}$ by Lemma~\ref{lem:ktscc}. Thus in this case
	in Line~\ref{l:tscc} of Procedure~\ref{algklevel} a non-empty set $S$ is returned.
	In the following assume that no non-empty set $S'$ is identified in 
	Line~\ref{l:tscc}, i.e.,
	that no \tscc without vertices of $\bad{\G}{i}$ exists in $\subG{\G}{i}$.
	
	If $\G[T]$ is a $k$-almost \tscc with respect to a set of elements~$Z$ in 
	$\G$, then either $\bad{\G}{i} \setminus Z$ is empty or a non-empty
	set $S$ is by Corollary~\ref{cor:kfind} returned 
	in Line~\ref{l:kdom1} or, for vertex connectivity
	and $\lvert \bad{\G}{i} \rvert < k$, in Line~\ref{l:kdom2}.
	
	If $\bad{\G}{i} \setminus Z$ is empty, i.e., $Z \supseteq \bad{\G}{i}$, assume
	that no $k$-dominator is found in any flow graph, i.e., no non-empty set $S'$
	is returned before Line~\ref{l:tsccB}.
	If $Z = \bad{\G}{i}$, then $\subG{\G}{i}[T]$ is a \tscc in $\subG{\G}{i} 
	\setminus \bad{\G}{i}$ and a non-empty set $S$ 
	is identified in Line~\ref{l:tsccB}. Otherwise $Z' = Z \setminus \bad{\G}{i} 
	\ne \emptyset$ contains less than $k - \lvert \bad{\G}{i} \rvert$ vertices
	and \nt{T}{Z'}{\subG{\G}{i}\setminus \bad{\G}{i}} exists. If no non-empty 
	set was identified in 
	Line~\ref{l:tsccB}, then $\subG{\G}{i} \setminus \bad{\G}{i}$ is strongly 
	connected. Thus, since $T \subsetneq V \setminus Z$, the set $Z'$ is a 
	$(k - \lvert \bad{\G}{i} \rvert)$-separator in $\subG{\G}{i} \setminus \bad{\G}{i}$.
	Hence a non-empty set $S$ is returned in Line~\ref{l:kBsep}.
\end{proof}
\begin{corollary}\label{cor:ksize}
Let \enum{a} $i^* < \log \mmdeg$ be the smallest integer such that for all
$1 \le i < i^*$ $\klevelsearch(G, i)$ returned empty sets $S'$ and 
a non-empty set $S$ is returned by $\klevelsearch(G, i^*)$ 
or, (2) if all calls to $\klevelsearch(G, i)$ returned empty sets $S'$ for $1 \le 
i < \log \mmdeg$, but $\kallsearch(G)$ returned a non-empty set $S$, let $i^* = 
\lceil \log \mmdeg \rceil$. Then $\lvert S \rvert > 2^{i^*-1} - k + 2$, respectively.
\end{corollary}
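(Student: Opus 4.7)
The proof is a direct contradiction argument that chains Lemmata~\ref{lem:ksearch}, \ref{lem:klevel}, and \ref{lem:kfind}. First I would apply Lemma~\ref{lem:ksearch} to the non-empty set $S$ returned either by $\klevelsearch(G, i^*)$ in the first case of the corollary or by $\kallsearch(G)$ in the second case: the lemma produces a graph $\G \in \set{G, \revG{G}}$ and a set of elements $Z$ with $\lvert Z\rvert < k$ (possibly empty) such that $\G[S]$ is a \tscc or a $k$-almost \tscc in $\G$ with respect to $Z$, and $V \setminus (S \cup Z) \ne \emptyset$, so in particular $S \subsetneq V \setminus Z$.

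Assume now for contradiction that $\lvert S\rvert \le 2^{i^*-1} - k + 2$. Since $2^{i^*-1} - k + 2 \le 2^{i^*-1} + 1$ for every $k \ge 1$, this bound satisfies the hypothesis of part~(1) of Lemma~\ref{lem:klevel} if $\G[S]$ is a \tscc and of part~(2) if $\G[S]$ is a $k$-almost \tscc. Applying Lemma~\ref{lem:klevel} at level $i = i^* - 1$ therefore yields $S \subseteq \good{\G}{i^*-1}$.

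I then need to verify that Lemma~\ref{lem:kfind} is applicable at level $i = i^*-1$, i.e., that $1 \le i^* - 1 < \log \mmdeg$. The upper bound follows immediately from $i^* < \log \mmdeg$ in the first case, and in the second case from $i^* = \lceil \log \mmdeg \rceil$ together with a short distinction on whether $\mmdeg$ is a power of two, which shows $\lceil \log \mmdeg \rceil - 1 < \log \mmdeg$ in either event. Taking $T = S$, Lemma~\ref{lem:kfind} then forces $\klevelsearch(G, i^*-1)$ to return a non-empty set, which contradicts the minimality of $i^*$ in the first case and the second case's hypothesis that every earlier level-search returned empty sets.

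The main technical obstacle is the corner $i^* = 1$ in the first case, where the lower-bound condition $i^* - 1 \ge 1$ of Lemma~\ref{lem:kfind} fails. Here the claimed inequality $\lvert S \rvert > 3 - k$ is vacuous for $k \ge 3$ and reduces to $\lvert S \rvert \ge 2$ for $k = 2$; this residual sub-case has to be handled by a separate direct inspection of what Procedure~\ref{algklevel} can return at level~$1$, using the guarantee $V \setminus (S \cup Z) \ne \emptyset$ supplied by Lemma~\ref{lem:ksearch}.
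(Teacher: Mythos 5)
Your proposal is correct and takes essentially the same route as the paper: after Lemma~\ref{lem:ksearch} identifies $S$ as a top \scc or $k$-almost \tscc with $S \subsetneq V \setminus Z$ in some $\G \in \set{G, \revG{G}}$, the paper combines the contrapositive of Lemma~\ref{lem:kfind} at level $i^*-1$ (stated there as $S \cap \bad{\G}{i^*-1} \ne \emptyset$) with Lemma~\ref{lem:klevel}, which is exactly your contradiction argument in direct form. The $i^* = 1$ corner you flag is not treated in the paper's proof either (and is harmless for the runtime analysis where the corollary is used), so your extra caution goes beyond, not against, the published argument.
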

\begin{proof}
	By Lemma~\ref{lem:ksearch} the set $S$ either induces a \tscc or a $k$-almost
	\tscc with $S \subsetneq V \setminus Z$ in $\G$, where $\G$ is 
	either $G$ or $\revG{G}$ and $Z$ is the second argument returned by 
	$\klevelsearch(G, i^*)$ or $\kallsearch(G)$. By Lemma~\ref{lem:kfind} we have 
	$S \cap \bad{\G}{i^*-1} \ne \emptyset$. The claim then follows from
	Lemma~\ref{lem:klevel}.
\end{proof}

In the runtime analysis we argue that whenever the search for a 
set~$S$ stops at a certain level~$i^*$, the algorithm has spent time at most
proportional to the number of vertices in the smaller set of~$S$ and~$V \setminus S$.
Additional to Corollary~\ref{cor:ksize}, we use
the fact that if a graph is not strongly connected, it contains a top and a 
bottom \scc that are disjoint, which implies that one of them contains at most
half of the vertices of the graph.

\begin{theorem}[Runtime \kescc{s}]\label{th:ktimeE1}
	Algorithm~\ref{algk} for \kescc{s} can be implemented in time $O(n^2 \log n)$
	for any integral constant $k > 2$ and in time $O(n^2)$ for $k = 2$.
\end{theorem}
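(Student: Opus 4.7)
The plan is to bound the work of each recursive call to \reck on a subgraph with vertex set $V'$ of size $n'$ by $O(n' \cdot \min(|S|, |V' \setminus S|) \log n)$ for $k > 2$ (and $O(n' \cdot \min(|S|, |V' \setminus S|))$ for $k = 2$), where $(S, Z)$ is the split returned by the call; a standard balanced-partition recurrence then yields the overall bounds.

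First I bound the per-level cost of \klevelsearch. At level $i$ the subgraph $G_i'$ has at most $n' \cdot 2^i$ edges, and each subroutine invoked at that level---computing top \scc{s} in $G_i'$ and in $G_i' \setminus Z$, building the contracted flow graph $\subGf{G'}{i}(\rootf{G'}{i})$, and running Gabow's $O(m \log n)$-time $k$-edge-dominator algorithm---takes $O(n' \cdot 2^i \log n)$ time on each of $G'$ and $\revG{G'}$. Summing the resulting geometric series up to the first successful level $i^*$ gives total for-loop cost $O(n' \cdot 2^{i^*} \log n)$. If the for-loop falls through to \kallsearch, we have $i^* = \lceil \log \mmdeg \rceil$, and the \kallsearch step costs $O(m' \log n)$; the inequality $m' \le n' \cdot \mmdeg \le n' \cdot 2^{i^*}$, which follows from $m' = \sum_v \InDeg_{G'}(v) \le n' \cdot \max_v \InDeg_{G'}(v)$ together with the analogous bound for the out-degree, keeps this within $O(n' \cdot 2^{i^*} \log n)$. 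For $k = 2$ the $\log n$ factor is absent because the relevant routines (bridges, edge-dominators, articulation points, \scc{s}) admit linear-time algorithms.

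I expect the main obstacle to be the lower bound $\min(|S|, |V' \setminus S|) = \Omega(2^{i^*})$, which converts the previous cost into the desired $O(n' \cdot \min(|S|, |V' \setminus S|) \log n)$. Corollary~\ref{cor:ksize} already gives $|S| > 2^{i^* - 1} - k + 2$. For $|V' \setminus S|$ I exploit that \klevelsearch is run in parallel on $G'$ and on $\revG{G'}$: if $S$ is identified as a (top or $k$-almost top) \scc of $G'$ with respect to some $Z$ (with $Z = \emptyset$ in the pure top-\scc case), then any bottom \scc $B$ of $G' \setminus Z$ satisfies $B \subseteq V' \setminus (S \cup Z)$, so $|B| \le |V' \setminus S|$, and since every out-edge of $B$ in $G'$ ends in $B$ or is one of the edges in $Z$, the set $B$ induces a top \scc or $k$-almost top \scc of $\revG{G'}$ with respect to a minimal $Z' \subseteq Z$ with $B \subsetneq V' \setminus Z'$. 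Assuming for contradiction $|V' \setminus S| < 2^{i^* - 1} - k + 2$ gives $|B| < 2^{i^* - 1} - k + 2$, so Lemma~\ref{lem:klevel} places $B$ inside $\good{\revG{G'}}{i^* - 1}$ and then Lemma~\ref{lem:kfind} forces $\klevelsearch(G', i^* - 1)$ to return a non-empty set, contradicting the minimality of $i^*$. The \kallsearch case is handled identically with $i^* = \lceil \log \mmdeg \rceil$.

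Writing $n_1 = |S|$ and $n_2 = |V' \setminus S|$, the recurrence
\[
T(n') \le T(n_1) + T(n_2) + C \cdot n' \cdot \min(n_1, n_2) \cdot \log n, \qquad n_1 + n_2 = n',
\]
is solved by induction on $n'$: assuming $T(n_j) \le C \cdot n_j^2 \log n$ for $j \in \{1,2\}$, the identity $n'^2 - n_1^2 - n_2^2 = 2 n_1 n_2 \ge n' \cdot \min(n_1, n_2)$ (using $\max(n_1, n_2) \ge n'/2$) absorbs the additive term into $C \cdot n'^2 \log n$, and stopping the recursion at a small constant $n'$ covers the base case. Instantiated at the top call this yields $T(n) = O(n^2 \log n)$ for $k > 2$; the same induction without the $\log n$ factor, available because of the linear-time subroutines for $k = 2$, gives $T(n) = O(n^2)$.
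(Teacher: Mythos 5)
Your proposal is correct and follows essentially the paper's own proof: the per-level cost $O(n'\cdot 2^i\log n')$ summed as a geometric series up to $i^*$, Corollary~\ref{cor:ksize} for $\lvert S\rvert=\Omega(2^{i^*})$, an application of Lemmata~\ref{lem:klevel} and~\ref{lem:kfind} to a bottom-side structure viewed as a (\text{$k$-almost}) \tscc of the reverse graph to force $\lvert V'\setminus S\rvert=\Omega(2^{i^*})$, and the same quadratic recurrence with $\min(n_1,n_2)\le n'/2$. Two small points need tightening (both handled in the paper's proof): you must take a bottom \scc of $G'\setminus Z$ that is \emph{disjoint from} $S$ -- ``any'' bottom \scc could be $S$ itself when $S$ has no outgoing edges, but a disjoint one always exists since $S$ is a top \scc of the non-strongly-connected graph $G'\setminus Z$ -- and you must account for terminal calls where $\reck$ returns a \kscc at arbitrary size $n'$ (not only at constant size), whose $O((n')^2\log n')$ cost the paper absorbs as Case~\enum{1} of its induction, together with the lazily-updated adjacency lists that let each $\subG{\G}{i}$ be built in $O(n'\cdot 2^i)$ time.
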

\begin{proof}
We denote by $n$ and $m$ the number of vertices and edges in the input graph
and by $n'$ the number of vertices in the graph $G$
of the current level of the recursion in Algorithm~\ref{algk}.
Let $\mmdeg$ denote the minimum of $\max_{v \in V}{\InDeg_G(v)}$ and 
$\max_{v \in V}{\OutDeg_G(v)}$. Let $S$ be a non-empty set of vertices returned by
$\klevelsearch$ or $\kallsearch$.

To efficiently construct the graphs $\subG{\G}{i}$ for 
$1 \le i < \lceil \log \mmdeg \rceil$ and $\G \in \set{G, \revG{G}}$
we maintain for all vertices $w$ a list of $\In(w)$ and a list of $\Out(w)$. 
We do not update this data structure immediately when $G$ is split into $G[S]$ 
and $G[V \setminus S]$ but remove obsolete entries in $\In(w)$ and $\Out(w)$
whenever we encounter them while constructing $\subG{\G}{i}$. This can happen
at most once for each entry and thus takes total time $O(m) \in O(n^2)$.

We first analyze the time per iteration of the for-loop in $\reck(G)$.
In iteration~$i$ the algorithm calls $\klevelsearch(G, i)$. Searching on both $G$ and 
$\revG{G}$ only increases the running time by a factor of two; we
analyze the search on $G$, the search on $\revG{G}$ is analogous. 
Given the above data structure, constructing $\subG{G}{i}$, $\subGf{G}{i}$, and 
determining $\bad{G}{i}$ takes time $O(2^i \cdot n')$. Finding \scc{s} in 
$\subG{G}{i}$ takes time linear in the number of edges in $\subG{G}{i}$, i.e., 
time $O(2^i \cdot n')$.
Finding $k$-dominators in $\subGf{G}{i}(\rootf{G}{i})$ takes time 
$O(2^i \cdot n' \cdot \log n')$ for $k > 2$ and time $O(2^i \cdot n')$ for $k=2$.
We give below the analysis for $k > 2$, the analysis for $k=2$ is identical but
without the log-factor. We have that $\klevelsearch(G, i)$ and hence 
iteration~$i$ of the outer for-loop take time $O(2^i \cdot  n' \cdot \log n')$.

The search for \scc{s} and $k$-separators in $\kallsearch$ can also be done in 
time proportional to the number of edges times $\log n$, which can be bounded 
by $O(\mmdeg \cdot n' \cdot \log n')$.

Let $i^* < \lceil \log \mmdeg \rceil$ be the last iteration of $\reck(G)$, i.e., 
the iteration 
before $\reck(G)$ returns $G$ (Case~\enum{1}) or recursively calls itself 
on the subgraphs $G[S]$ and $G[V \setminus S]$ (Case~\enum{2}). 
The time spent in the iterations $i = 1$ up to $i^*$
forms a geometric series that can be bounded by $O(2^{i^*} \cdot n' \cdot \log n')$. 
If $i^* =  \lceil \log \mmdeg \rceil - 1$, we have $\mmdeg \le 2^{i^* + 1}$ and
thus also the time spent in $\kallsearch$ can be bounded with $O(2^{i^*} \cdot n' 
\cdot \log n')$.

Case~\enum{1}: 
If $\reck(G)$ returns~$G$, the recursion stops. In this case the running
time is proportional to $ \mmdeg \cdot n' \cdot \log n' \le (n')^2 \cdot
\log n'$.

Case~\enum{2}:
In iteration $i^* - 1$ no top or bottom \scc or $k$-almost top or bottom \scc was 
detected in $G_{i^* - 1}$. Thus we have by
Corollary~\ref{cor:ksize} that $\lvert S \rvert > 2^{i^*-1} - k + 2$.
Let $\G \in \set{G, \revG{G}}$ be the graph in which $S$ was detected
at level $i^*$. If \enum{a} $G$ is strongly connected, then $\G[S]$ is a $k$-almost 
\tscc in $\G$ with respect to some set of elements $Z$ and we have that 
there exists a $k$-almost \bscc in $\G$ with respect to $Z$. This
$k$-almost \bscc is contained in $V \setminus S$ and by 
Lemmata~\ref{lem:klevel} and~\ref{lem:kfind} contains more than $2^{i^*-1} - k + 
2$ vertices, i.e., we have $\lvert V \setminus S \rvert > 2^{i^*-1} - k + 2$.
If \enum{b} $G$ is not strongly connected, then there exist a top and a bottom
\scc in $G$ which are disjoint and both have more than $2^{i^*-1} - k + 2$ 
vertices by Lemmata~\ref{lem:klevel} and~\ref{lem:kfind}. Since $G[S]$ is a 
strongly connected subgraph
of $G$, it has to be completely contained in either the top or the bottom \scc.
Thus also in this case we have $\lvert V \setminus S \rvert > 2^{i^*-1} - k + 2$.
Let $\min(\lvert S \rvert, \lvert V \setminus S \rvert)$ be denoted by $n_e$.
By the definition of $n_e$ we have $n_e \le n'/2$. By the analysis above, 
without the recursive calls, $\reck(G)$ spends time proportional to $n_e \cdot n' 
\cdot \log n'$.

We show next that the total time spent in all recursive calls to $\reck(G)$,
and thus the total running time of Algorithm~\ref{algk} for \kescc{s}, is of
order $f(n) = 2 n^2 \log n$. In Case~\enum{1}, i.e., if the recursion stops,
we have $f(n') = (n')^2 \log n < 2 (n')^2 \log n$. In Case~\enum{2} we have by
induction, and in particular for $n' = n$,
\begin{align*}
	f(n') &\le f(n_e) + f(n' - n_e) + n_e n' \log{n'} \,,\\
	&= 2 n_e^2 \log{n'} + 2 (n'-n_e)^2 \log{n'} +  n_e n' \log{n'}\,,\\
	&= 2 n_e^2 \log{n'} + 2 (n')^2 \log{n'} - 4 n_e n' \log{n'} + 2 n_e^2 \log{n'}
	+ n_e n' \log{n'}\,,\\
	&= 2 (n')^2 \log{n'} + 4 n_e^2 \log{n'} - 3 n_e n' \log{n'} \,,\\
	&\le 2 (n')^2 \log{n'} \,,
\end{align*}
where the last inequality follows from $n_e \le n'/2$.
\end{proof}

\begin{theorem}[Runtime \kvscc{s}]\label{th:ktimeV1}
	Algorithm~\ref{algk} for \kvscc{s} can be implemented in time $O(n^3)$
	for any integral constant $k > 2$ and in time $O(n^2)$ for $k = 2$.
\end{theorem}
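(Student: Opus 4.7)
The plan is to mirror the structure of the proof of Theorem~\ref{th:ktimeE1} for \kescc{s} and substitute the running times of the subroutines that differ between edge- and vertex-connectivity. Specifically, for vertex connectivity the best known bounds are $O(mn)$ for finding a $k$-dominator in a flow graph and $O(mn)$ for finding a $k$-separator, instead of $O(m\log n)$ in the edge case; this replaces the ``$\log n$'' factor in the per-level bound by an ``$n$'' factor, and hence upgrades the global $O(n^2\log n)$ bound to $O(n^3)$. For $k=2$, vertex-dominators and strong articulation points are computable in linear time $O(m)$, which will recover the same $O(n^2)$ bound as in Theorem~\ref{th:timeV1}.

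First I would bound the work per iteration of the outer for-loop inside one call $\reck(G)$ with $n'=|V|$. As in the edge case, I maintain lists of $\In$ and $\Out$ and amortize the cleanup of obsolete entries over the whole execution, which contributes $O(m)=O(n^2)$ overall. Constructing $\subG{\G}{i}$ together with $\bad{\G}{i}$ costs $O(2^i\cdot n')$. For vertex connectivity in iteration $i$ of $\klevelsearch$, either $|\bad{\G}{i}|\ge k$ (one flow graph $\subGf{\G}{i}$) or $|\bad{\G}{i}|<k$ (at most $k-1$ flow graphs $\subGf{\G}{i,w}$ plus one additional $(k-|\bad{\G}{i}|)$-separator computation in $\subG{\G}{i}\setminus\bad{\G}{i}$). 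Since $k$ is a constant, only $O(1)$ such computations are performed, each on a graph with $O(2^i\cdot n')$ edges and $O(n')$ vertices, and each takes $O((2^i\cdot n')\cdot n')$ time. Finally, the \tscc searches on $\subG{\G}{i}$ (possibly with a deleted set $Z$ of constant size) take $O(2^i\cdot n')$ time. Thus iteration $i$ costs $O(2^i\cdot (n')^2)$, and the cost of $\kallsearch(G)$ is $O(\mmdeg\cdot (n')^2)$.

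Next I sum the geometric series up to the last iteration $i^*$. As in the edge case, if $\reck(G)$ returns $G$, the total cost of the call is $O((n')^3)$; otherwise, by Corollary~\ref{cor:ksize} we have $\lvert S\rvert>2^{i^*-1}-k+2$, and by the same argument as in the edge case (using Lemmata~\ref{lem:klevel} and~\ref{lem:kfind} applied to a disjoint almost top/bottom \scc or to a disjoint top/bottom \scc when $G$ is not strongly connected) we also obtain $\lvert V\setminus S\rvert>2^{i^*-1}-k+2$. Writing $n_e=\min(\lvert S\rvert,\lvert V\setminus S\rvert)$, we have $2^{i^*}=O(n_e)$ and $n_e\le n'/2$, so the total work of $\reck(G)$ outside recursive calls is $O(n_e\cdot (n')^2)$.

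Finally I solve the recurrence $f(n')\le f(n_e)+f(n'-n_e)+c\cdot n_e(n')^2$ with $n_e\le n'/2$ by induction with ansatz $f(n')\le C (n')^3$. The dominant cross-term from expanding $(n'-n_e)^3$ contributes $-3C(n')^2 n_e$, so choosing $C$ large enough makes $c\cdot n_e(n')^2$ absorbable, and the recurrence closes to $f(n)=O(n^3)$, giving the claimed bound for $k>2$. For $k=2$, the same framework applies with the vertex-dominator computation in $O(m)$ time and the strong-articulation-point search in $O(m)$ time, so each iteration costs only $O(2^i\cdot n')$; the identical recursion then yields $f(n)=O(n^2)$, recovering Theorem~\ref{th:timeV1}. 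The main obstacle is simply to argue cleanly that the cost of the several flow-graph and separator computations performed in the $|\bad{\G}{i}|<k$ branch is indeed $O(2^i\cdot (n')^2)$ (rather than a larger polynomial in $k$) and that both $\lvert S\rvert$ and $\lvert V\setminus S\rvert$ are $\Omega(2^{i^*})$, so that the per-call cost can be charged to the smaller of the two sides of the split and the $O(n^3)$ bound propagates through the recursion.
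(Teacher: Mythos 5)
Your overall plan is the same as the paper's: reuse the analysis of Theorem~\ref{th:ktimeE1}, replace the $O(m\log n)$ bounds for $k$-dominators/$k$-separators by the $O(mn)$ bounds for vertex connectivity (and by $O(m)$ for $k=2$), get a per-call cost of $O(n_v\cdot(n')^2)$ charged to the smaller side via Corollary~\ref{cor:ksize}, and close a cubic recurrence. However, there is a genuine gap in the recurrence you solve. For vertex connectivity the algorithm does \emph{not} recurse on a partition $S$, $V\setminus S$: it recurses on $G[S\cup Z]$ and $G[V\setminus S]$, so the two subproblems overlap in the up to $k-1$ vertices of $Z$ and their sizes sum to $n'+\lvert Z\rvert>n'$. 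The correct recurrence is therefore of the form $f(n_v+k)+f(n'-n_v)+O(n_v (n')^2)$ (the paper sets $n_v=\min(\lvert S\rvert,\lvert V\setminus S\rvert-k+1)$), not $f(n_e)+f(n'-n_e)+O(n_e(n')^2)$ as you wrote. With the overlap, your argument that ``choosing $C$ large enough makes the additive term absorbable'' does not work: the expansion of $C(n_v+k)^3$ produces extra terms $3Cn_v^2k+3Cn_vk^2+Ck^3$ that scale with $C$ itself, so no choice of $C$ alone closes the induction. The paper fixes this by stopping the recursion when $n'<14k^3$ (respectively $n'\le 8$ for $k=2$) and running the known $O(mn^2)$ (respectively $O(mn)$) algorithm on these constant-size graphs, which makes the overlap terms absorbable for all remaining $n'$; your proposal omits both the overlap and this base-case cutoff, and the same omission affects your $k=2$ claim, where the paper uses $f(n_v+1)+f(n'-n_v)+n_vn'$ with a cutoff at $8$ vertices.

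A smaller related point: because the vertices of $Z$ are duplicated into both recursive subgraphs, the lazy adjacency-list cleanup you describe (which is the paper's trick for the edge case) is not enough by itself; the paper updates the lists of the vertices in $Z$ eagerly at the moment of the recursive call and applies the lazy argument only to the remaining vertices. These are the two points the paper explicitly flags as the differences from the \kescc analysis, and they are exactly the ones your write-up glosses over; the rest of your argument (per-level cost $O(2^i(n')^2)$ since $k$ is constant, geometric summation up to $i^*$, and the $\Omega(2^{i^*})$ lower bound on both sides of the split) matches the paper.
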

\begin{proof}
The proof is analogous to the proof of Theorem~\ref{th:ktimeE1} except
that \enum{1} the recursion is on $G[S \cup Z]$ and $G[V \setminus S]$ whenever 
a $k$-almost top or bottom \scc with respect to $Z$ was found and 
\enum{2} the time to find $k$-dominators and $k$-separators is proportional 
to the number of edges times the number of vertices in the graph for $k > 2$.

To efficiently update the adjacency lists, we update the lists of the vertices 
in~$Z$ for each of $G[S \cup Z]$ and $G[V \setminus S]$ immediately when 
the recursive call occurs, while for all other vertices the same argument as 
before applies.
%

As before, let $n'$ denote the number of vertices in the graph $G$
in the current level of recursion in Algorithm~\ref{algk}. 
Let $n_v = \min(\lvert S \rvert, \lvert V \setminus S \rvert - k + 1)$.
We first analyze the runtime for an arbitrary integral constant $k > 2$ and 
then for $k = 2$. Let 
$O(n_v \cdot (n')^2)$ be the runtime bound
of $\reck(G)$ without the recursive calls whenever the recursion does not stop.
Note that $n_v \le n' / 2$.
We stop the recursion when the number of vertices $n'$ is less than $14 k^3$,
i.e., a constant; this can happen at most $n$ times. In this case we can use the 
known $O(m n^2)$-time algorithm for \kvscc{s} to compute the \kvscc{s} of $G$ in 
constant time. We obtain an upper bound of order $f(n) = n^3$ 
on the total runtime of the algorithm for \kvscc{s} as follows:
\begin{align*}
	f(n') &\le f(n_v + k) + f(n' - n_v) + n_v (n')^2 \,,\\
	&= (n_v + k)^3 + (n' - n_v)^3 + n_v (n')^2 \,,\\
	&=  (n')^3 - 2 n_v (n')^2 + 3 n_v^2 n' + 3 n_v^2 k + 3  n_v k^2 
	+ k^3\,,\\
	&\le (n')^3 - 2 n_v (n')^2 + 3 n_v^2 n' + 7 k^3 n_v^2 \,,\\
	&\le (n')^3 \,.
\end{align*}

For $k = 2$ let $O(n_v \cdot n')$ be the runtime bound
of $\reck(G)$ without the recursive calls whenever the recursion does not stop.
We stop the recursion whenever the
number of remaining vertices is at most $8$, i.e., a constant. In this
case we use the $O(m n)$-algorithm to determine the \vscc{s} of $G$
in constant time.
With $3 n' / 4 \ge 6 n_v / 5$ and $n' / 4 \ge 2 > 9 / 5$ we have 
$5 n'  \ge 6 n_v + 9$, which we use in the following to obtain
an upper bound of order $f(n) = 3 n^2$ on the total running time of 
Algorithm~\ref{algk} for \kvscc{s}.
We have
\begin{align*}
	f(n') &\le f(n_v + 1) + f(n' - n_v) + n_v n' \,,\\
	&= 3(n_v + 1)^2 + 3(n' - n_v)^2 + n_v n' \,,\\
	&= 3n_v^2 + 6n_v + 3+ 3(n')^2 - 6n_v n' + 3n_v^2 
	+ n_v n' \,,\\
	&= 3(n')^2 + 6n_v^2 - 5n_v n' + 6n_v + 3\,,\\
	&\le 3(n')^2 \,.
 	\qedhere
\end{align*}
\end{proof}

\section{An $O(m^2 / \log n)$-time algorithm for \escc{s}}\label{sec:local}
In this section we combine our results for almost \tscc{s} and dominators in 
subgraphs in Appendix~\ref{sec:ksubgraph} with 
a local-search technique used for B\"uchi games by
Chatterjee~et~al.~\cite{ChatterjeeJH03}.
Throughout the section we only consider edge-connectivity and $k = 2$. We use 
the definitions in Sections~\ref{sec:prelim} and~\ref{sec:kscc_short} but use
the more common terms \emph{bridges} and \emph{edge-dominators} instead of 
2-separators and 2-dominators.
In the algorithm we assume that each vertex in the input graph has constant in- 
and out-degree. We show how every graph with $m$ edges can be transformed in 
$O(m)$ time into a graph with in- and out-degree at most three with equivalent 
\escc{s}, $\Theta(m)$ vertices, and $\Theta(m)$ edges. We are not aware of such a 
transformation for \vscc{s}. 

\begin{algorithm}
\SetAlgoRefName{2eSCC-sparse} 
\caption{2-edge strongly connected components in $O(m^2 / \log n)$ time}
\label{alge2}
\SetKw{Stop}{stop}
\SetKwInOut{Input}{Input}
\SetKwData{True}{true}
\SetKwData{False}{false}
\SetKwData{Success}{success}
\Input{graph $G$ with $\InDeg(v) \le 3$ and $\OutDeg(v) \le 3$ for all $v \in V$}
$\lost \leftarrow \lceil \log n \rceil$\;
\Repeat{$J = \emptyset$}{
	find \scc{s} $C_1, \ldots, C_c$ of $G$\;
	remove edges between \scc{s} of $G$\;
	$J \leftarrow \emptyset$ \tcc*[f]{set of vertices that lost edges in this iteration}\;
	\For{$i \leftarrow 1$ \KwTo $c$}{
		$X \leftarrow \allbridges(C_i)$\;
		$G \leftarrow G \setminus X$\;
		add all vertices adjacent to the edges in $X$ to $J$\;
	}
	\If{$0 < \vert J \rvert < \lost$}{
		\Repeat{$S = \emptyset$ or $\lvert J \rvert \ge \lost$}{
			$S \leftarrow \localalg(G, J)$\label{l:localcall}\;
			remove edges between $S$ and $V \setminus S$ from $G$\;
			add vertices that lost edges to $J$\;
		}
	}
}
\Return{$C_1, \ldots, C_c$}
\end{algorithm}
We first describe the algorithm and then prove its correctness and running time.
In Algorithm~\ref{alge2} the main repeat-until loop 
without the local searches is equivalent to the simple
$O(mn)$-algorithm for \escc{s}: First the strongly connected components are 
determined, then in each strongly connected component all bridges 
are identified and removed. Both operations take time $O(m)$. 
This is repeated until no more bridges are found.
Our algorithm removes the bridges and the edges between the strongly 
connected components from the graph and stores all vertices that were adjacent
to the removed edges in the set $J$. The set $J$ is empty
at the beginning of each iteration of the main repeat-until loop.
If no bridges are found in an iteration of the main repeat-until loop, the
algorithm terminates. In this case each strongly connected component of the 
maintained graph~$G$ is 2-edge strongly connected.
If $J$ contains less than $\lost$ vertices, where $\lost$ is of order $\log n$,
then Procedure~$\localalg$ is called. $\localalg(G, J)$ 
identifies a non-empty set of vertices that induces a top or bottom \scc 
or an almost top or bottom \scc in $G$ whenever there is one that contains a vertex in $J$ 
and has less than $\depth$ vertices, where $\depth$ is also of order $\log n$. 
If the procedure returns a non-empty set $S$,
then the edges between~$S$ and the remaining vertices $V \setminus S$ are 
removed from $G$ and the vertices that lost edges are added to $J$.
The call to $\localalg(G, J)$ is repeated until either $J$ contains more than 
$\lost$ vertices or $\localalg(G, J)$ returns an empty set. Then a new iteration
of the main repeat-until loop is started.
Since vertices are added to $J$ only when they lost edges and $J$ is reset in
each iteration of the main repeat-until loop, the set $J$ can contain $\Omega(\log n)$ 
vertices only $O(m / \log n)$ times. If $\localalg(G, J)$ returns an empty 
set, we can show that each top or bottom \scc or almost top or bottom \scc in 
the current graph $G$ contains $\Omega(\log n)$ vertices.
In this case in the following iteration of the main repeat-until loop either 
at least two sets of vertices with $\Omega(\log n)$ vertices each are separated
from each other (by deleting the edges between them) or the algorithm
terminates. Thus also this case can happen at most $O(m / \log n)$ times.
Hence the total time without the calls to Procedure~$\localalg$ can be bounded
with $O(m^2 / \log n)$. The constant-degree assumption allows us 
to bound the work done in Procedure~$\localalg$. 

\begin{procedure}
\caption{2IsolatedSetLocal($G$, $J$)} 
$\depth \leftarrow \lceil \varepsilon \log n \rceil$ for some $\varepsilon \in (0,1)$\;
\ForEach{$\jj \in J$}{
	\ForEach{$\G \in \set{G,\revG{G}}$}{
		find set of vertices $\subVG{\G}{\jj}$ with distance at most $\depth$ to $\jj$ 
		in~$\G$ with $\bfs$ from $\jj$ in $\revG{\G}$\;
		$\subG{\G}{\jj} \leftarrow \G[\subVG{\G}{\jj}]$\;
		$\bad{\G}{\jj} \leftarrow \set{ v \in \subVG{\G}{\jj} \mid \In_\G(v) \cap ((V 
		\setminus \subVG{\G}{\jj}) \times \set{v}) \ne \emptyset}$\tcc*[f]
		{set of vertices with incoming edges from $V \setminus \subVG{\G}{\jj}$}\;
		$T \leftarrow \topsccex(\subG{\G}{\jj}, \bad{\G}{\jj})$\;
		\tcc{$\bad{\G}{\jj} = \emptyset \Rightarrow$ no edge from $V 
		\setminus \subVG{\G}{\jj}$ to $\subVG{\G}{\jj} \Rightarrow T \ne \emptyset$}
		\If{$T \ne \emptyset$}{
			\If{exists edge from $T$ to $V \setminus T$ in $\G$}{
				\Return{$T$}\label{ll:tscc}\;
			}
			\Else(\tcc*[f]{$T = \subVG{\G}{\jj}$ and 
			$\subG{\G}{\jj}$ is \tscc and \bscc in $\G$}){
				\If{exists bridge $e$ in $\subG{\G}{\jj}$}{
					$U \leftarrow \topscc(\subG{\G}{\jj}\setminus \set{e})$\;
					\Return{$U$}\label{ll:bridge}\;
				}
			}
		}
		\Else{
			let $\subGe{\G}{\jj}$ be $\subG{\G}{\jj}$ with $\bad{\G}{\jj}$ contracted to $\roote{\G}{\jj}$\tcc*[f]{see Definition~\ref{def:kesubgraph}}\;
			\If{exists edge-dominator $e$ in $\subGe{\G}{\jj}(\roote{\G}{\jj})$}{
				$U \leftarrow \topsccex(\subG{\G}{\jj}\setminus \set{e}, \bad{\G}{\jj})$\;
				\Return{$U$}\label{ll:dom}\;
			}
		}
	}
}
\Return{$\emptyset$}
\end{procedure}
The local searches in Procedure~$\localalg$ use the results presented 
in Appendix~\ref{sec:ksubgraph} on subgraphs~$\subG{G}{\jj}$ defined as follows.
Let $\jj$ be a vertex of $J$ and let $\depth$ be a parameter set to $\lceil 
\varepsilon \log n \rceil$ for some $\varepsilon \in (0,1)$.
Let $\subVG{G}{\jj}$ be the set of vertices that have
a path of length at most $\depth$ to $\jj$ in $G$. The subgraph~$\subG{G}{\jj}$
is equal to $G[\subVG{G}{\jj}]$. Following the definitions in 
Appendix~\ref{sec:ksubgraph}, the set of vertices $\bad{G}{\jj}$ is defined as all 
vertices in $\subVG{G}{\jj}$ that have incoming edges from $V \setminus \subG{G}{\jj}$,
i.e., for each vertex in $\good{G}{\jj} = \subVG{G}{\jj} \setminus \bad{G}{\jj}$
we know that all its incoming edges in $G$ are contained in the subgraph 
$\subG{G}{\jj}$. As in Definition~\ref{def:kesubgraph}, let $\subGe{G}{\jj}$ be the 
graph $\subG{G}{\jj}$ with all vertices in $\bad{G}{\jj}$ contracted to a new vertex
$\roote{G}{\jj}$. We use the flow graph $\subGe{G}{\jj}(\roote{G}{\jj})$ to find 
edge-dominators and almost \tscc{s} with respect to them.

In Procedure~$\localalg(G, J)$ we search for a \tscc or an almost \tscc in
$\subG{\G}{\jj}$ for each $\jj \in J$ and each of $\G \in \set{G, \revG{G}}$.
As Algorithm~\ref{alge2} makes progress by removing edges from 
the maintained graph~$G$, we only want to identify (almost) \tscc{s} for which
$G$ contains edges between the (almost) \tscc and the remaining graph.
Note that it can easily happen that there exists a \tscc in $G$ without outgoing
edges (i.e. a \tscc that is also a \bscc) because no vertex is removed from $J$ when 
Procedure~$\localalg(G, J)$ is called repeatedly in the inner repeat-until loop
of Algorithm~\ref{alge2}.
When Procedure~$\localalg(G, J)$ returns a non-empty set of vertices~$S$ (that 
induces a (almost) top or bottom \scc in $G$), the edges between the vertices in $S$
and the vertices in $V \setminus S$ are removed from $G$.
We now describe the steps in Procedure~$\localalg(G, J)$.
First the vertices in $\subVG{\G}{\jj}$ are identified by running a breadth-first 
search of depth~$\depth$ from $\jj$ on $\revG{\G}$. Then a \tscc induced by a set 
of vertices $T$ contained in $\good{\G}{\jj} = \subVG{\G}{\jj} \setminus \bad{\G}{\jj}$ 
is searched in $\subG{\G}{\jj}$. 
If $T$ exists and $\G$ contains edges from $T$ to 
$V \setminus T$, then $T$ is returned. If $T$ exists but there are no edges
from $T$ to $V \setminus T$ in $\G$, then $T$ also induces a \bscc in $\G$.
Since all vertices in $\subVG{\G}{\jj}$ are reachable from $\jj$, we have $T = 
\subVG{\G}{\jj}$ and $\bad{\G}{\jj} = \emptyset$ in this case.
Thus we \enum{a} cannot make progress by separating $T$ from $V \setminus T$
and \enum{b} cannot use the flow graph $\subGe{\G}{\jj}(\roote{\G}{\jj})$ for identifying
almost \tscc{s}. However, since for each vertex in $\subVG{\G}{\jj}$ all its
incoming edges are contained in $\subG{\G}{\jj}$ and $\subG{\G}{\jj}$ is strongly
connected, an almost \tscc in $\subG{\G}{\jj}$ can be found whenever one exists
by searching for a bridge~$e$ in $\subG{\G}{\jj}$ and a \tscc in $\subG{\G}{\jj}
\setminus \set{e}$. If a bridge~$e$ exists, the procedure returns the vertices
in the \tscc. If no bridge exists, then $\subG{G}{\jj}$ is a \escc and we do not
have to consider the vertices in $\subVG{G}{\jj}$ further.
If no \tscc induced by vertices in $\good{\G}{\jj}$ exists in $\subG{\G}{\jj}$,
then each vertex in $\subG{\G}{\jj}$ can be reached from vertices of $\bad{\G}{\jj}$
and $\bad{\G}{\jj}$ is not empty. In this case we search for an edge-dominator
in $\subGe{\G}{\jj}(\roote{\G}{\jj})$. If an edge-dominator~$e$ exists, the procedure
returns the vertices in a \tscc in 
$\subG{\G}{\jj} \setminus \set{e}$ that is induced by vertices of $\good{\G}{\jj}$.

We show next how to transform a graph to a constant degree graph without 
changing the \escc{s}. The transformation basically replaces each vertex~$v$ 
such that $\maxdeg{v} = \max(\InDeg(v),\OutDeg(v)) > 3$ by two cycles of $\maxdeg{v}$
many vertices, one cycle in each direction.
\begin{lemma}\label{lem:constdeg}
	Let $G = (V, E)$ be a simple directed graph with an arbitrary fixed ordering in 
	$\In(v)$ and $\Out(v)$ for each $v \in V$. Let $\maxdeg{v} = 
	\max(\InDeg(v),\OutDeg(v))$. Let $\widetilde{G}$ be $G$ with every vertex~$v 
	\in V$ with $\maxdeg{v} > 3$ replaced by $\maxdeg{v}$ vertices $V_v$ as 
	follows. Let $v_0, \ldots, v_{\maxdeg{v}-1}$ be the vertices of $V_v$. We add
	for each $v_i$ an edge $(v_i,v_{(i + 1) \bmod \maxdeg{v}})$ and an edge $(v_i, 
	v_{(i - 1) \bmod \maxdeg{v}})$ to $\widetilde{G}$. Additionally we add for 
	each edge $(u, v) \in E$ that is the $i$-th edge in $\Out(u)$ and the $j$-th 
	edge in $\In(v)$ an edge $(u_{i-1}, v_{j-1})$ to~$\widetilde{G}$. 
	For vertices $v \in V$ with $\maxdeg{v} \le 3$, let $V_v = \set{v}$.
	Then $G[S]$ 
	is a \escc in $G$ if and only if $\widetilde{G}[\cup_{u \in S} V_u]$ is a \escc in $\widetilde{G}$.
\end{lemma}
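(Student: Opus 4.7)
The plan rests on two structural observations about the construction. First, for every vertex $v$ with $\maxdeg{v} > 3$, the subgraph $\widetilde{G}[V_v]$ is the union of two edge-disjoint Hamiltonian cycles on $V_v$ oriented in opposite directions, so every pair of vertices in $V_v$ admits two edge-disjoint paths in $\widetilde{G}[V_v]$, one along each cycle. Second, the edges of $\widetilde{G}$ that run between distinct blocks $V_u$ and $V_v$ are in bijection with $E$: the $i$-th outgoing edge of $u$ attaches uniquely at $u_{i-1}$ and the $j$-th incoming edge of $v$ attaches uniquely at $v_{j-1}$, so distinct incoming (respectively, outgoing) edges of $w$ in $G$ attach to distinct vertices of $V_w$ in $\widetilde{G}$.

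For the forward direction I would assume that $G[S]$ is 2-edge strongly connected and, for any $u,v \in S$ and any $u' \in V_u$, $v' \in V_v$, \emph{lift} a pair of simple edge-disjoint $u \to v$ paths $P_1,P_2$ in $G[S]$ (which exist by Menger's theorem) to a pair of edge-disjoint $u' \to v'$ paths in $\widetilde{G}[\bigcup_{w \in S} V_w]$. Each $P_\ell$ is lifted by replaying its external edges and, at each intermediate original vertex $w$ that it visits, routing from the incoming port to the outgoing port inside $V_w$ along one of the two cycles; at vertices visited by both paths, $P_1$ takes the clockwise cycle and $P_2$ the counter-clockwise one. The two cycles of $V_w$ are edge-disjoint, which keeps the lifts edge-disjoint inside $V_w$; and since $P_1,P_2$ are edge-disjoint in $G$, their entry- and exit-ports at $w$ are distinct, so the required cycle arcs are well-defined. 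Short cycle segments inside $V_u$ and $V_v$ complete the lifts to reach $u'$ and $v'$.

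For the converse, given two edge-disjoint $u_0 \to v_0$ paths in $\widetilde{G}[\bigcup_{w \in S} V_w]$, I would \emph{project} each to $G[S]$ by listing only its external edges. Each projection is a walk from $u$ to $v$ in $G[S]$, and by the bijection between external edges and $E$ the two projected walks are edge-disjoint in $G$, so Menger's theorem yields a pair of edge-disjoint $u\to v$ paths in $G[S]$; the symmetric argument handles the other direction. To lift this pairwise statement to equality of maximal \escc{s}, I would observe that each $V_v$ is itself 2-edge strongly connected in $\widetilde{G}$, so any \escc of $\widetilde{G}$ that meets $V_v$ must contain all of $V_v$ (because 2-edge strong connectivity is transitive); consequently the \escc{s} of $\widetilde{G}$ are exactly the sets $\bigcup_{w \in S} V_w$ as $S$ ranges over the vertex-sets of the \escc{s} of $G$.

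The main obstacle I expect is to argue rigorously that the lifted objects are genuine paths (not just edge-disjoint walks) and to control what happens at each $V_w$. The clean way to handle this is to insist that $P_1,P_2$ be simple so that each $V_w$ sees at most one traversal per path, which bounds the number of cycle arcs inserted in $V_w$ by two; combined with the distinct-port property from observation two and the edge-disjointness of the two cycles from observation one, this gives a pair of edge-disjoint paths in $\widetilde{G}$ whose simplicity can then be enforced by standard cycle-removal on each.
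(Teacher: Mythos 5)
Your proposal is correct and follows essentially the same route as the paper's proof: it reduces the component statement to the pairwise edge-disjoint-path statement using that each block $V_v$ (two oppositely oriented edge-disjoint cycles) is itself 2-edge strongly connected, projects paths of $\widetilde{G}$ to $G$ via the one-to-one correspondence of inter-block edges with $E$, and lifts edge-disjoint paths of $G$ by routing one path clockwise and the other counter-clockwise inside any block visited by both. The only cosmetic differences are your explicit appeal to Menger's theorem and the port-distinctness remark, neither of which changes the argument.
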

\begin{proof}
	Since every subgraph $\widetilde{G}[V_v]$ is 2-edge strongly connected, it is
	sufficient to show that there are two edge-disjoint paths
	from a vertex $u$ to a vertex $v \ne u$ in $G$ if and only if there are two
	edge-disjoint paths from some vertex $\widetilde{u} \in V_u$ to some 
	vertex $\widetilde{v} \in V_v$ with $v \ne u$ in $\widetilde{G}$. We can assume
	w.l.o.g.\ that the edge-disjoint paths are simple.
	
	$\Leftarrow :$ Let $\widetilde{P}_1$ and $\widetilde{P}_2$ be two edge-disjoint 
	paths in $\widetilde{G}$ 
	from a vertex $\widetilde{u} \in V_u$ to a vertex $\widetilde{v} \in V_v$ with
	$v \ne u$. We can construct two edge-disjoint paths $P_1$ and $P_2$ from $u$ to $v$
	in $G$ by simply removing all edges that are contained in a subgraph 
	$\widetilde{G}[V_w]$ for some $w \in V$ and replacing all vertices 
	$\widetilde{w} \in V_w$ with $w$ in the remaining edges.
	As there is a one-to-one relation between edges between subgraphs 
	$\widetilde{G}[V_w]$ and edges in $E$, the paths $P_1$ and $P_2$ must be 
	edge-disjoint.
	
	$\Rightarrow :$ Let $P_1$ and $P_2$ be two edge-disjoint paths from a vertex $u$ 
	to a vertex $v \ne u$ in $G$. We construct two edge-disjoint paths 
	$\widetilde{P}_1$ and $\widetilde{P}_2$ from a vertex $\widetilde{u} \in V_u$ 
	to a vertex 
	$\widetilde{v} \in V_v$ in $\widetilde{G}$. First add for all edges in the path
	$P_\ell$, $\ell \in \set{1,2}$, the corresponding edges between two different 
	subgraphs $\widetilde{G}[V_w]$ in $\widetilde{G}$ to $\widetilde{P}_\ell$. It remains to connect 
	the edges in $\widetilde{P}_\ell$ within subgraphs $\widetilde{G}[V_w]$ with
	$\maxdeg{w} > 3$. For any~$w$ that is only contained in one of~$P_1$ and~$P_2$,
	we can select some arbitrary path within $\widetilde{G}[V_w]$ that connects
	the edges in $\widetilde{P}_\ell$. Let $w$ be a vertex in both~$P_1$ and~$P_2$.
	Let $w_0, \ldots, w_{\maxdeg{w}-1}$ be the vertices of $V_w$. We connect the path 
	$\widetilde{P}_1$ in the subgraph $\widetilde{G}[V_w]$ by using the edges in one
	direction, i.e., from $w_i$ to $w_{(i + 1) \bmod \maxdeg{w}}$ for 
	$0 \le i < \maxdeg{w}$, and we connect the path 
	$\widetilde{P}_2$ in the subgraph $\widetilde{G}[V_w]$ by using the edges 
	in the other direction, i.e., from 
	$w_i$ to $w_{(i - 1) \bmod \maxdeg{w}}$. In this way the paths $\widetilde{P}_1$ and 
	$\widetilde{P}_2$ are edge-disjoint.
\end{proof}

\subsection{Correctness}
The correctness of the main repeat-until loop in Algorithm~\ref{alge2} apart 
from Procedure~$\localalg$ follows from the correctness of the basic algorithm 
for computing \escc{s}. To show the correctness of Algorithm~\ref{alge2}
including Procedure~$\localalg$, the following two parts are needed:
\enum{1}~Every step in Procedure~$\localalg$ can be executed as described.
\enum{2}~Whenever a non-empty set~$S$ is identified by Procedure~$\localalg(G, J)$,
every \escc of~$G$ is completely contained in either $G[S]$ or $G[V \setminus S]$
and there are edges between $S$ and $V \setminus S$ in $G$. The latter property
ensures that the algorithm terminates.

Let $\G \in \set{G, \revG{G}}$.
For Part~\enum{1} we need in particular that \enum{1a} whenever we consider the 
flow graph $\subGe{\G}{\jj}(\roote{\G}{\jj})$, the set $\bad{\G}{\jj}$ contains 
at least one vertex; and that \enum{1b} whenever there is an 
edge-dominator~$e$ in $\subGe{\G}{\jj}(\roote{\G}{\jj})$, 
then there exists a \tscc without vertices of $\bad{\G}{\jj}$ in $\subG{\G}{\jj} 
\setminus \set{e}$. Lemma~\ref{lem:ktsccindom} implies \enum{1b}.
For \enum{1a} note that if $\bad{\G}{\jj} = \emptyset$, we have that 
there are no edges from vertices of $V \setminus \subVG{\G}{\jj}$ to vertices 
of $\subVG{\G}{\jj}$ in $\G$. Thus in this case the subgraph $\subG{\G}{\jj}$ has 
to contain a \tscc.

\begin{lemma}\label{lem:localsearch}
	If Procedure~$\localalg$ returns a non-empty set of vertices $S$, 
	then \enum{a}~there exists an edge between vertices of $S$ and vertices 
	of $V \setminus S$ in~$G$ and \enum{b}~each \escc of $G$ is completely 
	contained in either $G[S]$ or $G[V \setminus S]$.
\end{lemma}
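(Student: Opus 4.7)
The plan is to proceed by case analysis on which of the three \Return statements in Procedure~$\localalg$ produces $S$. Before the case split I would make one preliminary observation: since \escc{s} are defined by pairs of edge-disjoint paths in \emph{both} directions, the \escc{s} of $G$ coincide with those of $\revG{G}$, and any edge between $S$ and $V\setminus S$ in $\revG{G}$ corresponds to such an edge in $G$. Hence it is enough to prove (a) and (b) for the specific $\G \in \{G,\revG{G}\}$ used in the successful iteration of the inner \textbf{foreach}-loop.

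\textbf{Case A (top-\scc return).} Here $S = T \subseteq \good{\G}{\jj}$ and $T$ induces a \tscc of $\subG{\G}{\jj}$. By Lemma~\ref{lem:ktscc}, $T$ induces a \tscc of $\G$ as well, so $T$ has no incoming edges in~$\G$. Part~(a) is precisely the explicit ``edge from $T$ to $V\setminus T$ in $\G$'' check guarding the return. For part~(b), every \escc of $\G$ is a strongly connected subgraph, and no such subgraph can have vertices on both sides of a \tscc/complement cut.

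\textbf{Case B (bridge return).} To reach this line the algorithm has verified that $T = \subVG{\G}{\jj}$ is a \tscc of $\subG{\G}{\jj}$ with \emph{no} outgoing edges to $V\setminus T$ in $\G$; since by construction every vertex of $\subVG{\G}{\jj}$ is reachable from $\jj$ and the BFS discovered no new vertices from $\jj$ outward either, $\G[T]$ is actually both a \tscc and a \bscc of $\G$, so $\subG{\G}{\jj}=\G[T]$ is strongly connected and isolated from $V\setminus T$ in $\G$. A bridge $e$ of this strongly connected $\G[T]$ must, after removal, create at least two \scc{s}, and the returned $U$ is the top one, so $e$ is the unique edge into $U$ inside $\G[T]$. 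Combined with the absence of any edge from $V\setminus T$ into $T$, this shows that $U$ induces an almost-\tscc of $\G$ with respect to $\{e\}$, with $U \subsetneq T$. Part~(a) is witnessed by $e$ itself, and Lemma~\ref{lem:knewtscc} (applied to $\G$) gives part~(b).

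\textbf{Case C (edge-dominator return).} The algorithm reaches this branch only when $\topsccex(\subG{\G}{\jj}, \bad{\G}{\jj})=\emptyset$; combined with Lemma~\ref{lem:ktscc}, this means $\G$ contains no \tscc whose vertex set lies entirely in $\good{\G}{\jj}$. Since $e$ is an edge-dominator in $\subGe{\G}{\jj}(\roote{\G}{\jj})$, Corollary~\ref{cor:kcorrdom} supplies a set $T'\subseteq\good{\G}{\jj}$ such that \nt{T'}{\{e\}}{\G} exists. Tracing back through Lemmata~\ref{lem:ktsccindom} and~\ref{lem:ktscc}, $T'$ is exactly the top \scc of $\subG{\G}{\jj}\setminus\{e\}$ disjoint from $\bad{\G}{\jj}$, i.e.\ $T' = U$. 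Part~(a) holds since the edge $e$ joins $U$ to $V\setminus U$ in $\G$, and part~(b) again follows from Lemma~\ref{lem:knewtscc}.

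The most delicate step is Case~B, where $\G[T]$ has no edges to $V\setminus T$ and it is therefore not obvious that the algorithm makes real progress. The key observation there is that the bridge $e$ lives strictly inside $\G[T]$, so $U$ is a proper subset of $T$, and since $T$ is a \tscc of $\G$ the set $U$ receives \emph{no} incoming edge from $V\setminus T$ in the full graph $\G$; together these give the almost-\tscc structure needed to invoke Lemma~\ref{lem:knewtscc}.
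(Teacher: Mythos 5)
Your proposal is correct and follows essentially the same route as the paper's proof: a case split over the three return statements, with Lemma~\ref{lem:ktscc} handling the \tscc case, the bridge and edge-dominator cases yielding an almost \tscc with respect to~$e$ (via the definition of a bridge, respectively the dominator lemmas/Corollary~\ref{cor:kcorrdom}), and Lemma~\ref{lem:knewtscc} giving part~(b), with the edge~$e$ (or the explicit check) giving part~(a). Your Case~B even spells out in more detail than the paper why $U\subsetneq T$ and why $e$ is the unique incoming edge of~$U$, which is a harmless elaboration of the paper's appeal to ``the definition of a bridge.''
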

\begin{proof}
Let $\G \in \set{G, \revG{G}}$ be the graph in which the returned set is 
identified. Procedure~$\localalg$ returns a non-empty set for some $\jj \in J$ in 
three cases. It first searches for a set of vertices $T$ with 
$T \subseteq \good{\G}{\jj}$ that induces a \tscc $\G[T]$ in $\subG{\G}{\jj}$. 
If the search is successful, 
it determines whether there exists an edge between vertices of $T$ and 
vertices of $V \setminus T$ in~$G$. If this is satisfied, it returns the 
set $T$ (Case~\enum{1}).
If this is not satisfied, i.e., $G[T]$ is a top and a bottom \scc in $G$,
but $G[T]$ contains a bridge~$e$,
then the procedure returns a set of vertices $U$ that induces a top or 
bottom \scc $G[U]$ in~$G[T]\setminus \set{e}$ (Case~\enum{2}). If 
$\subG{\G}{\jj}$ does not contain a \tscc without a vertex of $\bad{\G}{\jj}$, i.e., 
all vertices in $\good{\G}{\jj}$ are reachable from some vertex of $\bad{\G}{\jj}$,
then the procedure searches for an edge-dominator
in~$\subGe{\G}{\jj}(\roote{\G}{\jj})$; if an edge-dominator~$e$ exists,
the procedure returns a set of vertices~$U$ with $U \subseteq 
\good{\G}{\jj}$ that induces a \tscc $\G[U]$ in $\subG{\G}{\jj} \setminus 
\set{e}$ (Case~\enum{3}).

In Case~\enum{1} the existence of edges between vertices of $T$ and vertices of
$V \setminus T$ is explicitly checked. By Lemma~\ref{lem:ktscc} $\G[T]$ is a 
\tscc in $\G$. Since every \escc is strongly connected, each \escc 
of~$G$ is completely contained in either $G[T]$ or $G[V \setminus T]$.

In Case~\enum{2} $\G[U]$ is an almost \tscc in $\G$ with 
respect to~$e$ by the definition of a bridge. In Case~\enum{3} we have by Lemma~\ref{lem:kdomnewtscc} and 
Corollary~\ref{cor:kcorr} that $\G[U]$ is an almost \tscc in $\G$ 
with respect to~$e$. Thus in both cases 
by Lemma~\ref{lem:knewtscc} no vertex of $V \setminus U$ has two edge-disjoint 
paths to any vertex of $U$ in $\G$. Hence each \escc of~$G$ has to be 
completely contained in either $G[U]$ or $G[V \setminus U]$.
The identified edge~$e$ has one endpoint in $V \setminus U$ and one in $U$, 
i.e., there exists at least one edge between vertices of $V \setminus U$ and 
vertices of $U$ in $G$.
\end{proof}

\begin{theorem}[Correctness]\label{th:corrE2}
Algorithm~\ref{alge2} computes the \escc{s} of the input graph.
\end{theorem}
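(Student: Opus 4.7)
The plan is to establish, as an invariant across the entire execution, that the \escc{s} of the current maintained graph $G$ coincide with the \escc{s} of the input graph; correctness then follows by combining this invariant with a termination argument and a characterization of the graph at the moment the algorithm returns. For invariant preservation I would classify every edge the algorithm ever deletes into three types and argue that none of them lies inside any \escc. The standard observation then closes the argument: deleting an edge that lies outside every \escc leaves every induced subgraph $G[S]$ on an \escc vertex set~$S$ untouched, and cannot promote any previously non-maximal 2-edge strongly connected set to a maximal one, so both directions of the \escc partition are preserved.

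The three types of deleted edges are: (i) edges between distinct SCCs of the current $G$, (ii) strong bridges inside an SCC found by $\allbridges$, and (iii) edges between $S$ and $V \setminus S$ when $\localalg(G, J)$ returns a non-empty $S$. Type~(i) is immediate, since endpoints in distinct SCCs are not even strongly connected. For type~(ii), the standard fact I would invoke is that the endpoints $u, v$ of a strong bridge $e$ inside a strongly connected subgraph cannot be 2-edge strongly connected: if they were, a second edge-disjoint $u \to v$ path and both $v \to u$ paths would all avoid $e$, so $u$ and $v$ would stay in one SCC after removing $e$; a short condensation argument then shows that \emph{all} SCCs of the subgraph with $e$ removed must actually merge, contradicting $e$ being a bridge. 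Type~(iii) is precisely Lemma~\ref{lem:localsearch}(b).

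Termination of the outer repeat-until is ensured because it exits exactly when $J$ is empty at the end of the body; since $J$ is reset to empty at the top of each iteration and only grows when an edge is deleted, this is equivalent to ``no edge was deleted in this iteration'', and each non-terminating outer iteration therefore strictly decreases $\lvert E \rvert$. Termination of the inner repeat-until follows from its exit condition combined with Lemma~\ref{lem:localsearch}(a): whenever $\localalg$ returns a non-empty $S$ there is at least one edge crossing the cut, so at least one vertex joins $J$ and $\lvert J \rvert$ reaches $\lost$ in at most $\lost$ iterations.

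At the end of the final outer iteration $J$ is empty, so no strong bridge was found inside any $C_i$. Hence each $C_i$ is a strongly connected graph without strong bridges, which by a standard Menger-type argument is 2-edge strongly connected, so each $C_i$ is 2-edge strongly connected; maximality is automatic because the $C_i$ partition $V$ into SCCs of the final $G$ and any \escc is in particular strongly connected. Thus the returned sets are precisely the \escc{s} of the final $G$, and by the invariant precisely the \escc{s} of the input graph. The main obstacle in filling in this plan is the type~(ii) case of invariant preservation, which rests on the strong-bridge characterization sketched above; everything else is routine bookkeeping or a direct appeal to Lemma~\ref{lem:localsearch}.
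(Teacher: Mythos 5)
Your proposal is correct and takes essentially the same route as the paper's proof: every deleted edge (inter-\scc edges, bridges found by $\allbridges$, and the cut edges of a set $S$ returned by $\localalg$, handled via Lemma~\ref{lem:localsearch}) lies in no \escc, at termination $J=\emptyset$ forces every returned $C_i$ to be strongly connected and bridge-free and hence a \escc, and termination follows because every non-final iteration deletes at least one edge. Two harmless slips worth fixing: $J$ is reset only \emph{after} the inter-\scc edges are removed, so ``$J=\emptyset$ at the end'' is equivalent to ``no bridge was found,'' not to ``no edge was deleted''; and in the inner loop the vertices losing edges may already belong to $J$, so the ``at most $\lost$ iterations'' bound is unjustified---but termination of the inner loop still follows, as in the paper, from part \lu a\ru of Lemma~\ref{lem:localsearch}, since each non-empty $S$ causes at least one edge deletion and the edge set is finite.
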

\begin{proof}
	Algorithm~\ref{alge2} repeatedly removes edges from the input graph until
	the \scc{s} in the remaining graph correspond to the \escc{s} of the input
	graph. We show the correctness of the algorithm by showing that 
	\enum{a} the removed edges cannot be in a \escc, \enum{b} when the
	algorithm terminates, each \scc is a \escc, and \enum{c} the algorithm 
	terminates.
	
	By definition, \escc{s} are strongly connected subgraphs that do not contain
	a bridge. Thus \enum{a} clearly holds when edges between \scc{s} or bridges
	are removed in the repeat-until loop without the calls to $\localalg$. 
	By Lemma~\ref{lem:localsearch}, \enum{a}~also holds 
	for the edges removed after $\localalg$ returns a non-empty set of vertices.
	
	To show~\enum{b}, first note that whenever $\localalg$ is called,
	there will be another iteration of the repeat-until loop. This is because
	the algorithm terminates only in the case that $J$ is empty but
	$\localalg$ is only called when there are vertices in $J$ and $\localalg$
	does not remove vertices from $J$. Consider the last iteration of the
	repeat-until loop. In this iteration no bridges were identified as otherwise 
	$J$ cannot be empty. Thus no \scc 
	in $G$ contains a bridge, i.e., each \scc in $G$ is a \escc.
	
	For~\enum{c} we show that in each iteration of the inner and the outer-repeat 
	until loop either edges are removed or the algorithm terminates. The inner 
	repeat-until loop terminates when Procedure~$\localalg$ returns an empty set.
	Whenever Procedure~$\localalg$ returns a non-empty set~$S$, by 
	Lemma~\ref{lem:localsearch} there exist edges
	between $S$ and $V \setminus S$ in~$G$, which are then removed from $G$.
	For the outer repeat-until loop we have that the algorithm terminates 
	if no bridges are identified in the for-loop; otherwise at least the 
	bridges are removed from~$G$.
\end{proof}

\subsection{Runtime}

The next lemma applies the results of Appendix~\ref{sec:ksubgraph} to the
subgraphs used in Algorithm~\ref{alge2} to show that we can indeed find all desired
subgraphs with at most $\depth$ vertices in $\localalg$.
\begin{lemma}\label{lem:newtsccball}
If Procedure~$\localalg(G, J)$ returns an empty set, 
then
\begin{compactenum}[\lu a\ru]
\item each top or bottom \scc that is not disconnected from the remaining graph
\item and each almost top or almost bottom \scc in $G$ 
\end{compactenum}
that contains a vertex in $J$, has at least $d$ vertices.
\end{lemma}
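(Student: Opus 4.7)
The plan is to argue by contraposition: supposing some (almost) top or bottom SCC $\G[T]$ in $G$ with $\lvert T \rvert < \depth$ contains a vertex $\jj \in J$, I will show that $\localalg(G,J)$ returns a non-empty set during the iteration for this $\jj$. By passing to $\revG{G}$, it suffices to handle the case $\G \in \set{G,\revG{G}}$ where $\G[T]$ is either a TSCC not disconnected from the rest of $\G$ or an almost TSCC in $\G$ with respect to some edge $e = (u,v)$.

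First I would verify the containments $T \subseteq \subVG{\G}{\jj}$ and $T \subseteq \good{\G}{\jj}$: strong connectivity of $\G[T]$ together with $\lvert T \rvert < \depth$ gives every vertex of $T$ a path to $\jj$ of length at most $\depth - 1$ in $\G$, and in the almost TSCC case the source $u$ of $e$ reaches $\jj$ in at most $\depth$ steps via $v$, so $u \in \subVG{\G}{\jj}$ as well; since the only incoming edges of vertices of $T$ coming from $V \setminus T$ are (at most) the edge $e$, no vertex of $T$ is missing an incoming edge in $\subG{\G}{\jj}$.

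Next I would split on the outcome of $\topsccex(\subG{\G}{\jj}, \bad{\G}{\jj})$. If it returns a non-empty $T'$, then $T' \subseteq \good{\G}{\jj}$ and Lemma~\ref{lem:ktscc} promotes $T'$ to a TSCC of $\G$. In the TSCC case, any $T' \ne T$ would be disjoint from $T$ so $\jj \notin T'$; but then $T' \subseteq \subVG{\G}{\jj}$ combined with $T'$ having no outgoing edge in $\G$ would disconnect $T'$ from $\jj$, contradicting reachability, so $T'$ has an outgoing edge and Line~\ref{ll:tscc} fires. In the almost TSCC case, $T$ is not a TSCC in $\subG{\G}{\jj}$ (because $e$ is present), so $T' \ne T$; an analogous SCC-of-$\jj$ argument shows $T' \supseteq T \cup \set{u}$, and if $T'$ has no outgoing edge in $\G$ then $T' = \subVG{\G}{\jj}$ (any other vertex of $\subVG{\G}{\jj}$ would have to reach $\jj$ through $T'$), so $\subG{\G}{\jj} = \G[T']$; removing $e$ still separates $T$ from $T' \setminus T \ni u$, so $e$ is a bridge of $\G[T']$ and Line~\ref{ll:bridge} fires.

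If $\topsccex(\subG{\G}{\jj}, \bad{\G}{\jj})$ returns the empty set, then $\bad{\G}{\jj}$ must be non-empty (otherwise the non-empty graph $\subG{\G}{\jj}$ would trivially yield a TSCC), so the flow graph $\subGe{\G}{\jj}(\roote{\G}{\jj})$ is well-defined; moreover Lemma~\ref{lem:ktscc} rules out any TSCC of $\G$ contained in $\good{\G}{\jj}$, so Corollary~\ref{cor:kfind} applies with $Z = \set{e}$ and certifies that $\set{e}$ is an edge-dominator in $\subGe{\G}{\jj}(\roote{\G}{\jj})$. The algorithm therefore finds some edge-dominator and Line~\ref{ll:dom} returns a non-empty set by Lemma~\ref{lem:ktsccindom}. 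The main obstacle is the almost-TSCC sub-case where $T' \neq T$ has no outgoing edge: pinning down that $T' = \subVG{\G}{\jj}$ (so $\subG{\G}{\jj} = \G[T']$) and verifying that $e$ remains a bridge of this induced subgraph is where the proof requires the most care; the bottom and almost-bottom cases reduce by symmetry via $\G = \revG{G}$.
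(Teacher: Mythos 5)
Your proof is correct and follows essentially the same route as the paper: it places the small (almost) top or bottom \scc inside $\subVG{\G}{\jj}$ with all of its vertices in $\good{\G}{\jj}$, and then argues that one of Lines~\ref{ll:tscc}, \ref{ll:bridge}, or~\ref{ll:dom} must return a non-empty set, invoking Lemma~\ref{lem:ktscc}, Corollary~\ref{cor:kfind}, and Lemma~\ref{lem:ktsccindom} just as the paper does. One small ordering point: the inclusion $T' \supseteq T \cup \set{u}$ should be derived only \emph{after} assuming $T'$ has no outgoing edge (otherwise $T'$ may be a \tscc disjoint from $T$, in which case Line~\ref{ll:tscc} already fires), which is exactly the case your closing remark singles out as the delicate one.
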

\begin{proof}
We show the lemma by showing that if by contradiction an (almost) top or bottom 
\scc as described would exist, then Procedure~$\localalg(G, J)$ would return
a non-empty set.

Recall that $\subVG{\G}{\jj}$ is the set of vertices that can reach a vertex~$\jj$ 
in~$\G \in \set{G, \revG{G}}$ using a path containing at most $\depth$ edges 
and let $\subG{\G}{\jj} = \G[\subVG{\G}{\jj}]$.
Further recall that $\bad{\G}{\jj}$ is the set of vertices of $\subVG{\G}{\jj}$ with 
incoming edges from vertices of $V \setminus \subVG{\G}{\jj}$ and that $\good{\G}{\jj}$
is equal to $\subVG{\G}{\jj} \setminus \bad{\G}{\jj}$. The flow 
graph~$\subGe{\G}{\jj}(\roote{\G}{\jj})$ is as in Definition~\ref{def:kesubgraph}.

Assume there exists a set of vertices~$S$ with $\jj \in S$ and $\lvert S 
\rvert \le \depth$ that induces \enum{a}~a~\tscc $\G[S]$ in~$\G$ that has 
edges to vertices in $V \setminus S$ in $\G$ or \enum{b} an 
almost \tscc~$\G[S]$ in $\G$ with respect to some edge~$e$.
Since $\lvert S \rvert \le \depth$ and $\jj \in S$, any simple path in $\G[S]$ 
from a vertex of $S \setminus \set{\jj}$ to $\jj$ can contain at most $\depth-1$ edges. 
Thus all vertices with incoming edges to vertices of $S$ in $\G$ can reach $\jj$
using a path with at most $\depth$ edges in $\G$. Hence $S$ is a subset of 
$\subVG{\G}{\jj}$ and all incoming edges of $S$ are contained in $\subG{\G}{\jj}$, i.e., 
$S \subseteq \good{\G}{\jj}$.

In Case~\enum{a} $\G[S]$ is a \tscc in 
$\subG{\G}{\jj}$ by Lemma~\ref{lem:ktscc} and the procedure returns 
a non-empty set in Line~\ref{ll:tscc}.

For Case~\enum{b} assume that no set is returned in Line~\ref{ll:tscc}, i.e.,
there does \emph{not} exist a \tscc induced by a set of vertices $T$ in $\good{\G}{\jj}$
in $\subG{\G}{\jj}$ such that there are edges from $T$ to $V \setminus T$ in $\G$.
First consider the case that there exists a \tscc induced by a set of vertices 
$T$ in $\good{\G}{\jj}$ in $\subG{\G}{\jj}$ such that there are \emph{no}
edges from $T$ to $V \setminus T$ in $\G$. In this case $T$ is a top and a bottom
\scc in $G$ by Lemma~\ref{lem:ktscc}. Since all vertices in $\subVG{\G}{\jj}$
are reachable from $\jj$, we have $T = \subVG{\G}{\jj}$ and $\bad{\G}{\jj} = \emptyset$.
Thus in this case the set $S$ is a proper subset of $T$ and the edge $e$
is a bridge in $\G[T]$. Hence Procedure~$\localalg(G, J)$ returns 
a non-empty set in Line~\ref{ll:bridge}.

Now consider the last case, namely assume that there does \emph{not} exist a 
\tscc induced by a set of vertices $T$ in $\good{\G}{\jj}$ in $\subG{\G}{\jj}$.
Then all vertices in $\good{\G}{\jj}$ are reachable from vertices of $\bad{\G}{\jj}$.
Thus the edge~$e$ is an edge-dominator in~$\subGe{\G}{\jj}(\roote{\G}{\jj})$ by 
Corollary~\ref{cor:kfind} and the procedure returns a non-empty set in 
Line~\ref{ll:dom}.
\end{proof}

In the runtime analysis we need that the algorithm indeed starts local searches 
in each top or bottom \scc or almost top or bottom \scc in the current graph $G$
that was not identified before. For this we use the following observation.
\begin{observation}\label{obs:local} 
Let $G$ be a directed graph. Let $X$ be a set of edges in $G$ and let $J$ be the
set of vertices adjacent to an edge in $X$ in $G$. Let $H$ be a subgraph of~$G$.
Every \tscc in $H \setminus X$ that has incoming edges in~$H$ 
contains a vertex of~$J$.
\end{observation}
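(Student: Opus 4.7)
The plan is to prove this by a direct contradiction-free argument. Let $T$ be the vertex set of a \tscc of $H \setminus X$ that has incoming edges in $H$. I would first unpack what ``\tscc in $H \setminus X$'' means: the induced subgraph $(H \setminus X)[T]$ is a maximal strongly connected subgraph with no incoming edges in $H \setminus X$, i.e., there is no edge $(u,v)$ in $H \setminus X$ with $u \notin T$ and $v \in T$.

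Next I would use the hypothesis that $T$ has incoming edges in $H$. This yields an edge $(u,v) \in E(H)$ with $u \notin T$ and $v \in T$. Since no such edge exists in $H \setminus X$, this edge must have been removed when passing from $H$ to $H \setminus X$. Because $H \setminus X = (V(H), E(H) \setminus X)$, the only edges removed are those that lie in $X$. Hence $(u,v) \in X$.

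Finally, by the definition of $J$ as the set of vertices adjacent (in $G$) to an edge of $X$, both endpoints $u$ and $v$ of this edge belong to $J$. In particular $v \in T \cap J$, so $T$ contains a vertex of $J$, as claimed. The argument is essentially a bookkeeping step, so there is no real obstacle; the only thing to be careful about is that the notion of ``incoming edges in $H$'' for the \tscc refers to edges of $H$ from $V(H) \setminus T$ to $T$, and that $X$ is treated as a subset of the edges of $H$ (which is fine since $H \subseteq G$ and we only care about edges of $X$ that happen to lie in $H$).
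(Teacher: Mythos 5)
Your argument is correct and matches the paper's proof: since the \tscc has no incoming edges in $H \setminus X$, any incoming edge in $H$ must lie in $X$, and its head is a vertex of $J$ inside the \tscc. Nothing further is needed.
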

\begin{proof}
	Let $S$ be a set of vertices that induces a \tscc in $H \setminus X$ that has 
	incoming edges in $H$. Then the set of edges $X$ has to contain
	an incoming edge $(u,v)$ for some vertex $v \in S$. We have that $v$ is in $J$.
\end{proof}

\begin{theorem}[Runtime]
	Algorithm~\ref{alge2} can be implemented in time $O(m^2 / \log n)$.
\end{theorem}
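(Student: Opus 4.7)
The plan is to amortize the running time against edge removals and \scc{}-splitting events. I would first apply Lemma~\ref{lem:constdeg} in $O(m)$ time so that every vertex has in- and out-degree at most three and $m$ and $n$ each change only by constant factors; in particular, $n = \Theta(m)$. Then I would fix $\varepsilon \in (0, 1/\log_2 3)$ so that in the constant-degree graph every ball of radius $\depth = \lceil \varepsilon \log n \rceil$ contains $O(3^\depth) = O(n^{\varepsilon \log_2 3}) = o(n/\log^2 n)$ vertices and edges.

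Next I would bound the cost per outer iteration. Finding the \scc{s}, finding all bridges inside each \scc, and removing the corresponding edges take $O(m)$ time via the standard linear-time routines. For one call to $\localalg(G, J)$: for every $\jj \in J$ and every $\G \in \set{G, \revG{G}}$, a depth-$\depth$ \bfs followed by the linear-time \tscc, bridge, and edge-dominator searches on the resulting ball subgraph costs $O(3^\depth)$, so the whole call costs $O(\lvert J \rvert \cdot 3^\depth) = O(\log n \cdot 3^\depth)$, because the loop's entry condition keeps $\lvert J \rvert < \lost$. Each successful inner iteration adds at least one vertex to $J$, so there are at most $O(\log n)$ inner iterations per outer iteration, yielding a total inner-loop cost of $O(\log^2 n \cdot 3^\depth) = o(m)$ per outer iteration.

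The heart of the argument is to show that the total number of outer iterations is $O(m / \log n)$. I would classify every non-terminal outer iteration into two types. \emph{Type A}: $\lvert J \rvert$ reaches $\lost$ at some moment of the iteration. Since $J$ is freshly reset at the start of the iteration and each edge removal adds at most two vertices to $J$ (constant degree), such an iteration witnesses $\Omega(\log n)$ distinct edge removals; as the total number of edge removals is at most $m$, Type A occurs at most $O(m / \log n)$ times. \emph{Type B}: $\lvert J \rvert < \lost$ throughout and the inner loop exits with $S = \emptyset$. By Lemma~\ref{lem:newtsccball}, at that moment every (almost) top or bottom \scc of the current $G$ that contains a vertex of $J$ has at least $\depth$ vertices; combined with Observation~\ref{obs:local}, every \scc that is newly revealed in the next outer iteration's \scc computation must contain a vertex of the current $J$, and therefore has size at least $\depth$. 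Thus each Type B iteration can be charged to a \emph{big-big} split in the following outer iteration, i.e., the split of a single \scc into at least two parts each of size $\geq \depth$; such splits strictly increase the number of \scc{s} of size $\geq \depth$, which is always at most $n/\depth = O(n / \log n)$, so Type B iterations also occur at most $O(m / \log n)$ times. Combining yields $O(m/\log n) \cdot (O(m) + o(m)) = O(m^2 / \log n)$.

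The hard part will be making the Type B amortization precise: carefully specifying which \scc{s} count as ``new'' between consecutive outer iterations, verifying from Lemma~\ref{lem:newtsccball} and Observation~\ref{obs:local} that the split actually has both sides of size $\geq \depth$ (rather than just one new top \scc), and ensuring that the charging from Type B iterations to big-big splits is injective --- at most one Type B iteration may be followed by termination, and every other Type B iteration must produce a fresh big-big split. The remaining pieces are routine applications of the linear-time subroutines for \scc, bridges, and edge-dominators.
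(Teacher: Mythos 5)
Your overall plan is the paper's: reduce to constant degree, do $O(3^{\depth})$-size local searches, and split the outer iterations into those where $\lvert J\rvert$ reaches $\lost$ (charged to $\Omega(\log n)$ edge deletions) and those where $\localalg$ returns an empty set (charged to separations of two sets of size $\ge\depth$). But the crux---your Type~B amortization---has genuine gaps. The claim that ``every \scc that is newly revealed in the next outer iteration's \scc computation must contain a vertex of the current $J$, and therefore has size at least $\depth$'' is false: Observation~\ref{obs:local} only applies to top (resp.\ bottom) \scc{s} that still have incoming (resp.\ outgoing) edges in the older graph, and part~(a) of Lemma~\ref{lem:newtsccball} moreover excludes top/bottom \scc{s} that are completely disconnected from the rest; a ``middle'' piece of a split, or a small set that $\localalg$ itself cut off and thereby isolated, need contain no $J$-vertex and can be small. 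Hence a big--big split in the following iteration is not established by what you wrote. The troublesome situation is exactly the paper's second subcase: when every weakly connected piece of the current graph is strongly connected, the two large sets are the \emph{almost} top and bottom \scc{s} with respect to a bridge that is only discovered by $\allbridges$ in the \emph{next} iteration, and to make Lemma~\ref{lem:newtsccball}(b) applicable one must argue that these almost top/bottom \scc{s} contain a vertex of $J$, which needs the additional observation that this edge was \emph{not} a bridge in the previous round (otherwise it would already have been deleted). This is precisely the step you defer as ``the hard part,'' but it is the heart of the proof, not a routine verification.

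Second, your potential function does not work: removing inter-\scc edges does not change the \scc partition at all, and when an \scc does split it may split into one large and many small pieces (or only small pieces), so ``the number of \scc{s} of size $\ge\depth$'' is not monotone, and a big--big split need not increase it; you therefore cannot bound the number of such splits by its maximum value $n/\depth$. The correct count is over the refinement history: since edges are only deleted, the partition into \scc{s} (or weak components) only refines over time, and an event in which one class is split into at least two classes of size $\ge\depth$ is a branching with two large children in this laminar family, of which there are at most $n/\depth$ overall---this is the counting the paper asserts. Finally, a smaller flaw: ``each successful inner iteration adds at least one vertex to $J$'' is unjustified, since all endpoints of the removed crossing edges may already lie in $J$; the $O(\log n)$ bound per outer iteration can be salvaged using the constant degree (only $O(\lost)$ edges ever have both endpoints inside $J$ while $\lvert J\rvert<\lost$), or avoided altogether as in the paper, which instead bounds the \emph{total} number of calls to $\localalg$ by $O(n)$ (each call is preceded either by a separation performed after the previous call or by a bridge removal) and thus bounds all local-search work globally by $O(n^{1+\varepsilon}\log n)$.
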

\begin{proof}
	Let the input graph have $n$ vertices and $m$ edges. Recall that the input graph 
	is converted to a constant degree graph with $n'$ vertices and $m'$ edges such 
	that both $n'$ and $m'$ are of order $O(m)$. Let $G$ denote the (expanded)
	graph maintained by the algorithm.

	Using the linear time algorithms to compute \scc{s} and find all bridges
	in a graph $G$, an iteration of the outer repeat-until loop without the calls
	to $\localalg$ takes time $O(m')$. We will show that there can be only 
	$O(m' / \lost + n' / \depth)$ iterations of the outer 
	repeat-until loop, where $O(m' / \lost + n' / \depth)$ is $O(m' / \log n')$. 
	We will bound the time spent in the inner repeat-until loop separately.
	
	A new iteration of the outer repeat-until loop is started in two cases.
	
	{Case~1}: $\lvert J \rvert \ge \lost$. A vertex is in $J$ only when
	one of its adjacent edges was deleted from $G$ since the last time $J$ was 
	initialized with the empty set. Thus Case~1 can happen at most 
	$2m' / \lost$ times.
	
	{Case~2}: $\localalg$ returned an empty set. Let $G$ and $J$ be
	as maintained by the algorithm at the beginning of the subsequent iteration of 
	the outer repeat-until loop. We distinguish two subcases. Let a subgraph~$G[W]$ 
	induced by some set of vertices $W$ be \emph{connected} if for every partition 
	of~$W$ into two subsets there are edges between the subsets.
	
	Case~\enum{2a}: There exists a subgraph in~$G$ that is connected but not 
	strongly connected. Let $W$ be a set of vertices that induces a maximal
	connected but not strongly connected subgraph of $G$. Since $G[W]$ is 
	connected, the vertices in $W$ were strongly connected when the edges 
	between \scc{s} were removed at the 
	beginning of the previous iteration of the outer repeat-until loop. As $G[W]$ 
	is not strongly connected, it contains a top and a bottom \scc that are
	disjoint.
	By the maximality of $G[W]$, this top and this bottom \scc in~$G[W]$ are also a top
	and a bottom \scc in~$G$, respectively. By Observation~\ref{obs:local} each of them 
	contains a vertex of~$J$. Further, by Lemma~\ref{lem:newtsccball} each of them 
	has more than $\depth$ vertices as otherwise at least one of them would 
	have been identified by Procedure~$\localalg$ and thus they would not be 
	connected in~$G$.
	Both of them are identified when the \scc{s} of $G$ are determined and at least 
	the outgoing edges of the \tscc and the incoming edges of the \bscc are removed
	from~$G$. Thus in this case two vertex sets that each contain more than $\depth$
	vertices are separated from each other by deleting all edges between them.
	This can happen at most $n' / \depth$ times.
	
	Case~\enum{2b}: Every connected subgraph of $G$ is also strongly connected.
	After determining the \scc{s} of~$G$, the algorithm searches 
	for bridges in every \scc of $G$. If no 
	bridge is found, the algorithm terminates. Assume that a bridge~$e$
	was found in an \scc $G[W]$ induced by some set of vertices $W$. The edge~$e$ 
	cannot have been a bridge when 
	bridges were identified in the previous iteration of the outer repeat-until
	loop, as otherwise it would not be in $G$. Since~$e$ is a bridge, there 
	exist an almost top and an almost bottom \scc with respect to~$e$ in~$G[W]$. As every connected subgraph of $G$ is also
	strongly connected, there are no edges between $W$ and $V \setminus W$ in $G$.
	Thus the almost top and the almost bottom \scc in~$G[W]$ are also an almost top and 
	an almost bottom \scc in~$G$. By Observation~\ref{obs:local} each of them 
	contains a vertex of~$J$. By Lemma~\ref{lem:newtsccball} each of them 
	has more than $\depth$ vertices as otherwise the edge~$e$ would have been 
	identified by Procedure~$\localalg$ and removed from the graph. Thus in
	Case~\enum{2b} either the 
	algorithm terminates or two vertex sets that each contain more than $\depth$
	vertices and were strongly connected to each other are separated from each 
	other by deleting the edge~$e$ (such that they are no longer strongly 
	connected but they might still be connected). 
	This can happen at most $n' / \depth$ times.
	
	It remains to bound the time spent in $\localalg$. To this end note that
	each time \emph{before} $\localalg$ is called either \enum{a} $\localalg$ 
	was called and a set of vertices~$S$ that induces a top or bottom \scc~$G[S]$ 
	or an almost top or bottom \scc $G[S]$ in $G$ with respect to 
	some edge $e$ was identified 
	and separated from the remaining graph by deleting the edges between $S$ and 
	$V \setminus S$ or \enum{b} $\allbridges$ identified a bridge~$e$ and 
	increased the number of \scc{s} in~$G$ by removing~$e$.
	Both~\enum{a} and~\enum{b} can happen at most $n'$~times.
	
	We now consider the time for one call to $\localalg$. For each $\jj \in J$
	this procedure runs a breadth-first search of depth~$\depth$ on 
	each of $\G \in \set{G, \revG{G}}$ to identify the subgraphs $\subG{\G}{\jj}$.
	Considering $G$ and $\revG{G}$ only increases the running time by a factor of two.
	The number of edges explored by a breadth-first-search of depth~$\depth$ on a graph 
	with out-degree at most three is~$O(3^\depth)$. Thus with 
	$\depth = \lceil \varepsilon \log n' \rceil$ for some $0 < \varepsilon < 1$ we have 
	that the number of edges in~$\subG{\G}{\jj}$ is $O((n')^\varepsilon)$.
	$\localalg$ computes \scc{s} and bridges or edge-dominators in~$\subG{\G}{\jj}$.
	This can be done in time linear in the number of edges in~$\subG{G}{\jj}$, i.e., in
	time~$O((n')^\varepsilon)$. Thus with $\lvert J \rvert < \lost$ we obtain a time 
	bound of~$O(\lost \cdot (n')^\varepsilon)$ for one call to $\localalg$. Hence the total
	time spent in $\localalg$ can be bounded with 
	$O(\lost \cdot (n')^\varepsilon \cdot n') = O((n')^{1+\varepsilon} \log n')$.
	We have that $O((n')^{1+\varepsilon} \log n')$ is $O(m^2 / \log n)$ for any 
	$\epsilon \in (0,1)$.
\end{proof}

\section{Remark on relation to 2-edge strongly connected blocks}\label{sec:2escbexample}
The following construction shows that, in general, 2-edge strongly connected blocks
do not provide any information about the 2-edge strongly connected components
of a graph. Let $G = (V, E)$ be an arbitrary directed graph. We construct a graph~$G'$
by adding $O(\lvert V \rvert)$ edges and a constant number of vertices to $G$ 
such that all vertices 
in $V$ are in the same 2-edge strongly connected block in $G'$, while the 2-edge 
strongly connected components in $G'$ of the vertices in $V$ remain the same as
in $G$. 
To construct $G'$, we add to $G$: the four vertices $s_1$, $t_1$, $s_2$, 
and $t_2$, the two edges $(s_1, t_1)$ and $(s_2, t_2)$, and for each vertex~$v 
\in V$ the edges $(v, s_1)$, $(v, s_2)$, $(t_1, v)$, $(t_2, v)$. In $G'$ each
vertex $u \in V$ has two edge-disjoint paths to each vertex $v \in V$, namely
the paths $(u, s_1, t_1, v)$ and $(u, s_2, t_2, v)$. Thus all vertices of $V$ are
in the same 2-edge strongly connected block in $G'$. However, clearly
the edges $(s_1, t_1)$ and $(s_2, t_2)$ are bridges in $G'$. Thus the 2-edges
strongly connected components in $G'$ are the same as in $G$ (plus the trivial
subgraphs induced by each of the newly added vertices).

\begin{figure}[h]
\centering
\begin{tikzpicture}
\tikzstyle{graph}=[circle,draw,minimum size=2cm]
\tikzstyle{vertex}=[circle,draw,fill,inner sep=1.2pt, solid]
\tikzstyle{arrow}=[->,line width=0.5pt,>=stealth',shorten >=1pt,thick]
\node[graph] (g) at (0,0) [label=above:{$G$}] {};
\node[vertex] (v) at (0.2,0.6) [label=below:{$v$}] {};
\node[vertex] (u) at (-0.2,-0.3) [label=below:{$u$}] {};
\node[vertex] (s1) at (-2,0.3) [label=left:{$s_1$}] {};
\node[vertex] (s2) at (2,0.3) [label=right:{$s_2$}] {};
\node[vertex] (t1) at (-2,-0.6) [label=left:{$t_1$}] {};
\node[vertex] (t2) at (2,-0.6) [label=right:{$t_2$}] {};

\path (s1) edge[arrow] (t1)
(s2) edge[arrow] (t2)
(v) edge[arrow] (s1) edge[arrow] (s2)
(u) edge[arrow] (s1) edge[arrow] (s2)
(t1) edge[arrow] (v) edge[arrow] (u)
(t2) edge[arrow] (v) edge[arrow] (u)
;
\end{tikzpicture}
\end{figure}}
\end{document}